\newcommand{\openone}{\leavevmode\hbox{\small1\normalsize\kern-.33em1}}
\newcommand{\pebar}{\overline{p}_{e}}
\newcommand{\pe}{p_{e}} 
\newcommand{\Eziid}{E_{0}^{\mathrm{iid}}}
\newcommand{\Ezcc}{E_{0}^{\mathrm{cc}}}
\newcommand{\Eohatcc}{\hat{E}_{0}^{\mathrm{cc}}}
\newcommand{\Ezcost}{E_{0}^{\mathrm{cost}}}
\newcommand{\Ezcostprime}{E_{0}^{\mathrm{cost}^{\prime}}}
\newcommand{\Eriid}{E_{r}^{\mathrm{iid}}}
\newcommand{\Ercc}{E_{r}^{\mathrm{cc}}}
\newcommand{\Ercost}{E_{r}^{\mathrm{cost}}}
\newcommand{\Ercostprime}{E_{r}^{\mathrm{cost}^{\prime}}}
\newcommand{\rcu}{\mathrm{rcu}}
\newcommand{\rcus}{\rcu_{s}}
\newcommand{\rcuss}{\rcu_{s}^{*}}
\newcommand{\rcushat}{\widehat{\mathrm{rcu}}_{s}}
\newcommand{\rcusshat}{\widehat{\mathrm{rcu}}_{s}^{*}}
\newcommand{\rhohat}{\hat{\rho}}
\newcommand{\LM}{I_{\mathrm{LM}}}
\newcommand{\GMI}{I_{\mathrm{GMI}}}
\newcommand{\SetScost}{\mathcal{S}}
\newcommand{\SetTcost}{\mathcal{T}}
\newcommand{\Miid}{M^{\mathrm{iid}}}
\newcommand{\Mcc}{M^{\mathrm{cc}}}
\newcommand{\Mcost}{M^{\mathrm{cost}}}
\newcommand{\Rcr}{R^{\mathrm{cr}}}
\newcommand{\Rcrs}{R_s^{\mathrm{cr}}}
\newcommand{\qbar}{\overline{q}}
\newcommand{\rbar}{\overline{r}}
\newcommand{\Ptilde}{\widetilde{P}}
\newcommand{\hover}{\overline{h}}
\newcommand{\xbar}{\overline{x}}
\newcommand{\Xbar}{\overline{X}}
\newcommand{\xtilde}{\widetilde{x}}
\newcommand{\xvbar}{\overline{\boldsymbol{x}}}
\newcommand{\xv}{\boldsymbol{x}}
\newcommand{\Xvbar}{\overline{\boldsymbol{X}}}
\newcommand{\Xv}{\boldsymbol{X}}
\newcommand{\yv}{\boldsymbol{y}}
\newcommand{\Yv}{\boldsymbol{Y}}
\newcommand{\Ac}{\mathcal{A}}
\newcommand{\Cc}{\mathcal{C}}
\newcommand{\Dc}{\mathcal{D}}
\newcommand{\Fc}{\mathcal{F}}
\newcommand{\Gc}{\mathcal{G}}
\newcommand{\Ic}{\mathcal{I}}
\newcommand{\Pc}{\mathcal{P}}
\newcommand{\Sc}{\mathcal{S}}
\newcommand{\Xc}{\mathcal{X}}
\newcommand{\Yc}{\mathcal{Y}}
\newcommand{\EE}{\mathbb{E}}
\newcommand{\RR}{\mathbb{R}}
\newcommand{\PP}{\mathbb{P}}
\newcommand{\ZZ}{\mathbb{Z}}
\newcommand{\Qsf}{\mathsf{Q}}
\newcommand{\defeq}{\triangleq}
\newcommand{\var}{\mathrm{Var}}
\newcommand{\alphanl}{\alpha^{\mathrm{nl}}_n}
\newcommand{\alphal}{\alpha^{\mathrm{l}}_n}
\newcommand{\betanl}{\beta^{\mathrm{nl}}_n}
\newcommand{\betal}{\beta^{\mathrm{l}}_n} 
\theoremstyle{plain}
\newtheorem{prop}{\protect\propositionname} 
\theoremstyle{plain}
\newtheorem{thm}{\protect\theoremname}
\theoremstyle{plain}
\newtheorem{lem}{\protect\lemmaname}
\DeclareMathOperator*{\argmax}{arg\,max}
  \providecommand{\propositionname}{Proposition}
\providecommand{\theoremname}{Theorem}
\providecommand{\lemmaname}{Lemma}
\begin{document}

\title{\fontsize{23.5}{30} \selectfont Mismatched Decoding: Error Exponents, Second-Order Rates and Saddlepoint Approximations}

\author{Jonathan Scarlett, Alfonso Martinez, \IEEEmembership{Senior Member, IEEE}, and Albert Guill\'en i F\`abregas, \IEEEmembership{Senior Member, IEEE}}

\maketitle
\long\def\symbolfootnote[#1]#2{\begingroup\def\thefootnote{\fnsymbol{footnote}}\footnote[#1]{#2}\endgroup}
\begin{abstract}
    This paper considers the problem of channel coding with a given (possibly
    suboptimal) maximum-metric decoding rule.  A cost-constrained random-coding 
    ensemble with multiple auxiliary costs is introduced, and is shown to
    achieve error exponents and second-order coding rates matching those
    of constant-composition random coding, while being directly applicable 
    to channels with infinite or continuous alphabets.  The number of auxiliary
    costs required to match the error exponents and second-order rates of constant-composition 
    coding is studied, and is shown to be at most two.
    For i.i.d. random coding, asymptotic estimates of two well-known 
    non-asymptotic bounds are given using saddlepoint approximations.
    Each expression is shown to characterize the asymptotic behavior of
    the corresponding random-coding bound at both
    fixed and varying rates, thus unifying the regimes characterized
    by error exponents, second-order rates and moderate deviations.  For fixed rates, novel
    exact asymptotics expressions are obtained to within a multiplicative
    $1+o(1)$ term.  Using numerical examples, it is shown that the 
    saddlepoint approximations are highly accurate even at short block lengths.  
\end{abstract}
\begin{IEEEkeywords}
    Mismatched decoding, random coding, error exponents, second-order coding rate,
    channel dispersion, normal approximation, saddlepoint approximation, exact asymptotics,
    maximum-likelihood decoding, finite-length performance
\end{IEEEkeywords}
\symbolfootnote[0]{
    J. Scarlett is with the Department of Engineering, University of Cambridge, 
    Cambridge, CB2 1PZ, U.K. (e-mail: jmscarlett@gmail.com).  A. Martinez is 
    with the Department of Information and Communication Technologies, 
    Universitat Pompeu Fabra, 08018 Barcelona, Spain (e-mail: alfonso.martinez@ieee.org).  
    A. Guill\'en i F\`abregas is with the Instituci\'o Catalana de Recerca i Estudis 
    Avan\c{c}ats (ICREA), the Department of Information and Communication Technologies, 
    Universitat Pompeu Fabra, 08018 Barcelona, Spain, and also with the Department of 
    Engineering, University of Cambridge, Cambridge, CB2 1PZ, U.K. (e-mail: guillen@ieee.org).
    This work has been funded in part by the European Research Council under 
    ERC grant agreement 259663, by the European Union's 7th Framework Programme 
    under grant agreement 303633 and by the Spanish Ministry of 
    Economy and Competitiveness under grants RYC-2011-08150 and TEC2012-38800-C03-03.  
    This work was presented in part at the Allerton Conference on Communication, 
    Computing and Control (2012), and at the Information Theory and Applications 
    Workshop (2013, 2014).}

\section{Introduction} \label{sec:INTRO}

Information-theoretic studies of channel coding typically seek to characterize
the performance of coded communication systems when the encoder and decoder
can be optimized.  In practice, however, optimal decoding rules are 
often ruled out due to channel uncertainty and implementation
constraints. In this paper, we consider the mismatched decoding problem
\cite{Csiszar1,Hui,Compound,Csiszar2,Merhav,ConverseMM,MMRevisited,MacMM},
in which the decoder employs maximum-metric decoding with a metric
which may differ from the optimal choice.

The problem of finding the highest achievable rate possible
with mismatched decoding is open, and is generally believed to be difficult.
Most existing work has focused on achievable rates via random coding; see
Section \ref{sub:CONTRIBUTIONS} for an outline.
The goal of this paper is to present a more comprehensive analysis
of the random-coding error probability under various ensembles,
including error exponents \cite[Ch. 5]{Gallager}, second-order 
coding rates \cite{Strassen,Finite,Hayashi}, and refined
asymptotic results based on the saddlepoint approximation \cite{SaddlepointBook}.

\subsection{System Setup} \label{sub:SU_SYSTEM_SETUP}

The input and output alphabets are denoted by $\Xc$ and $\Yc$ respectively.
The conditional probability of receiving an output vector $\yv=(y_1,\cdots,y_n)$ given
an input vector $\xv=(x_1,\cdots,x_n)$ is given by 
\begin{equation}
    W^{n}(\yv|\xv)\triangleq\prod_{i=1}^{n}W(y_{i}|x_{i}) \label{eq:SU_DMC}
\end{equation}
for some transition law $W(y|x)$.  Except where
stated otherwise, we assume that $\Xc$ and $\Yc$
are finite, and thus the channel is a discrete memoryless channel
(DMC). The encoder takes as input
a message $m$ uniformly distributed on the set $\{1,\dotsc,M\}$,
and transmits the corresponding codeword $\xv^{(m)}$ from
a codebook $\Cc=\{\xv^{(1)},\dotsc,\xv^{(M)}\}$.
The decoder receives the vector $\yv$ at the output of
the channel, and forms the estimate 
\begin{equation}
    \hat{m}=\argmax_{j\in\{1,\dotsc,M\}} q^{n}(\xv^{(j)},\yv),\label{eq:SU_DecodingRule}
\end{equation}
where $q^{n}(\xv,\yv)\triangleq\prod_{i=1}^{n}q(x_{i},y_{i})$.
The function $q(x,y)$ is assumed to be non-negative, and is 
called the \emph{decoding metric}.  In the case of a tie,
a codeword achieving the maximum in \eqref{eq:SU_DecodingRule} is selected uniformly at random.
It should be noted that maximum-likelihood (ML) decoding is a special 
case of \eqref{eq:SU_DecodingRule}, since it is recovered by setting $q(x,y)=W(y|x)$.

An error is said to have occurred if $\hat{m}$ differs from $m$.
A rate $R$ is said to be achievable if, for all $\delta>0$, there
exists a sequence of codes $\Cc_n$ of length $n$ with $M \ge e^{n(R-\delta)}$ and
vanishing error probability $p_e(\Cc_n)$.  An error exponent $E(R)$ is said to
be achievable if there exists a sequence of codebooks $\Cc_{n}$
of length $n$ and rate $R$ such that
\begin{equation}
    \liminf_{n\to\infty}-\frac{1}{n}\log \pe(\Cc_{n})\ge E(R).
\end{equation}
We let $\pebar(n,M)$ denote the average error probability
with respect to a given random-coding ensemble which will be clear
from the context. The random-coding error exponent $E_{r}(R)$ is
said to exhibit ensemble tightness if
\begin{equation}
    \lim_{n\to\infty}-\frac{1}{n}\log\pebar(n,e^{nR})=E_{r}(R).\label{eq:SU_EnsTight}
\end{equation}

\subsection{Notation}

The set of all probability distributions on an alphabet $\Xc$
is denoted by $\Pc(\Xc)$, and the set of all empirical
distributions on a vector in $\Xc^{n}$ (i.e. types \cite[Sec. 2]{CsiszarBook}\cite{GallagerCC})
is denoted by $\Pc_{n}(\Xc)$. The type of a 
vector $\xv$ is denoted by $\hat{P}_{\xv}(\cdot)$.
For a given $Q\in\Pc_{n}(\Xc)$, the type class $T^{n}(Q)$
is defined to be the set of all sequences in $\Xc^{n}$ with type $Q$. 

The probability of an event is denoted by $\PP[\cdot]$, and
the symbol $\sim$ means ``distributed as''. The marginals of a
joint  distribution $P_{XY}(x,y)$ are denoted by $P_{X}(x)$ and
$P_{Y}(y)$. We write $P_{X}=\Ptilde_{X}$ to denote element-wise
equality between two probability distributions on the same alphabet.
Expectation with respect to a joint distribution $P_{XY}(x,y)$ is
denoted by $\EE_{P}[\cdot]$, or $\EE[\cdot]$ when the associated probability
distribution is understood from the context. Similar notations
$I_{P}(X;Y)$ and $I(X;Y)$ are used for the mutual information. Given a distribution
$Q(x)$ and conditional distribution $W(y|x)$, we write $Q\times W$
to denote the joint distribution $Q(x)W(y|x)$.

For two positive sequences $f_{n}$ and $g_{n}$, we write $f_{n}\doteq g_{n}$
if $\lim_{n\to\infty}\frac{1}{n}\log\frac{f_{n}}{g_{n}}=0$, and we write $f_{n}\,\dot{\le}\,g_{n}$
if $\limsup_{n\to\infty}\frac{1}{n}\log\frac{f_{n}}{g_{n}}\le0$ and analogously for $\dot{\ge}$.
We write $f_n \asymp g_n$ if $\lim_{n\to\infty}\frac{f_n}{g_n}=1$,
and we make use of the standard asymptotic notations $O(\cdot)$, $o(\cdot)$, 
$\Theta(\cdot)$, $\Omega(\cdot)$ and $\omega(\cdot)$. 

We denote the tail probability of a zero-mean unit-variance Gaussian
variable by $\Qsf(\cdot)$, and we denote its functional inverse
by $\Qsf^{-1}(\cdot)$. 
All logarithms have base $e$, and all rates are in units of nats
except in the examples, where bits are used. We define $[c]^{+}=\max\{0,c\}$,
and denote the indicator function by $\openone\{\cdot\}$.

\subsection{Overview of Achievable Rates} \label{sub:CONTRIBUTIONS}

Achievable rates for mismatched decoding have been derived using the 
following random-coding ensembles:
\begin{enumerate}
    \item the i.i.d. ensemble, in which each symbol of each codeword
    is generated independently;
    \item the constant-composition ensemble, in which each codeword is drawn
    uniformly from the set of sequences with a given empirical distribution;
    \item the cost-constrained ensemble, in which each codeword is drawn
    according to an i.i.d. distribution conditioned on an auxiliary 
    cost constraint being satisfied.
\end{enumerate}
While these ensembles all yield the same achievable rate under ML decoding,
i.e. the mutual information, this is not true under mismatched decoding.

The most notable early works on mismatched decoding are by Hui \cite{Hui}
and Csisz\'{a}r and K\"{o}rner \cite{Csiszar1}, who used constant-composition
random coding  to derive the following achievable rate for mismatched DMCs, 
commonly known as the LM rate:
\begin{equation}
    \LM(Q)=\min_{\Ptilde_{XY}}I_{\Ptilde}(X;Y),\label{eq:INTR_PrimalLM}
\end{equation}
where the minimization is over all joint distributions satisfying 
\begin{align}
    \Ptilde_{X}(x)             & =Q(x)\label{eq:INTR_ConstrLM1} \\
    \Ptilde_{Y}(y)             & =\sum_{x}Q(x)W(y|x)\label{eq:INTR_ConstrLM2} \\
    \EE_{\Ptilde}[\log q(x,y)] & \ge \EE_{Q\times W}[\log q(x,y)].\label{eq:INTR_ConstrLM3} 
\end{align}
This rate can equivalently be expressed as \cite{MMRevisited}
\begin{equation}
    \LM(Q)\triangleq\sup_{s\ge0,a(\cdot)}\EE\left[\log\frac{q(X,Y)^{s}e^{a(X)}}{\EE[q(\Xbar,Y)^{s}e^{a(\Xbar)}\,|\, Y]}\right],\label{eq:INTR_RateLM}
\end{equation}
where $(X,Y,\Xbar) \sim Q(x)W(y|x)Q(\xbar)$.

Another well-known rate in the literature is the generalized mutual information 
(GMI) \cite{Compound,MMRevisited}, given by 
\begin{equation}
    \GMI(Q)=\min_{\Ptilde_{XY}}D\big(\Ptilde_{XY}\|Q\times\Ptilde_{Y}\big),\label{eq:INTR_PrimalGMI}
\end{equation}
where the minimization is over all joint distributions
satisfying \eqref{eq:INTR_ConstrLM2} and \eqref{eq:INTR_ConstrLM3}.
This rate can equivalently be expressed as
\begin{equation}
    \GMI(Q)\triangleq\sup_{s\ge0}\EE\left[\log\frac{q(X,Y)^{s}}{\EE[q(\Xbar,Y)^{s}\,|\, Y]}\right]. \label{eq:INTR_RateGMI}
\end{equation}
Both \eqref{eq:INTR_PrimalGMI} and \eqref{eq:INTR_RateGMI} can be derived
using i.i.d. random coding, but only the latter has been shown to remain
valid in the case of continuous alphabets \cite{Compound}.

The GMI cannot exceed the LM rate, and the latter can be strictly
higher even after the optimization of $Q$.  Motivated by this fact, 
Ganti \emph{et al.} \cite{MMRevisited} proved that \eqref{eq:INTR_RateLM}
is achievable in the case of general alphabets. This was done
by generating a number of codewords according to an i.i.d. distribution $Q$,
and then discarding all of the codewords for which $\big|\frac{1}{n}\sum_{i=1}^{n}a(x_{i})-\EE_{Q}[a(X)]\big|$
exceeds some threshold.  An alternative proof is given in \cite{Variations}
using cost-constrained random coding.

In the terminology of \cite{MMRevisited}, \eqref{eq:INTR_PrimalLM}
and \eqref{eq:INTR_PrimalGMI} are primal expressions, and \eqref{eq:INTR_RateLM}
and \eqref{eq:INTR_RateGMI} are the corresponding dual expressions.
Indeed, the latter can be derived from the former using Lagrange duality
techniques \cite{Convex,Merhav}.

For binary-input DMCs, a matching converse to the LM rate was reported by 
Balakirsky \cite{ConverseMM}.  However, in the general case, several examples
have been given in which the rate is strictly smaller than the mismatched capacity \cite{Csiszar2,Merhav,MacMM}.
In particular, Lapidoth \cite{MacMM} gave an improved rate using multiple-access
coding techniques.  See \cite{MMSomekh,JournalMU} for more recent studies
on the benefit of multiuser coding techniques, \cite{PaperExpurg} for a study
of expurgated exponents, and \cite{MMGeneralFormula} for multi-letter converse results. 

\subsection{Contributions}

Motivated by the fact that most existing work on mismatched decoding
has focused on achievable rates, the main goal of this paper is to
present a more detailed analysis of the random-coding error probability.
Our main contributions are as follows:
\begin{enumerate}
    \item In Section \ref{sec:RANDOM_CODING_ENSEMBLES}, we present a generalization
    of the cost-constrained ensemble in \cite[Ch 7.3]{Gallager}, \cite{Variations} to
    include multiple auxiliary costs.  This ensemble serves as an alternative to 
    constant-composition codes for improving the performance compared to i.i.d. codes, 
    while being applicable to channels with infinite or continuous alphabets.
    \item In Section \ref{sec:SU_ERROR_EXPONENT}, an ensemble-tight error 
    exponent is given for the cost-constrained ensemble.
    It is shown that the exponent for the constant-composition ensemble \cite{Csiszar1}
    can be recovered using at most two auxiliary costs, and sometimes fewer.
    \item In Section \ref{sec:SU_SECOND_ORDER_RATES}, an achievable second-order 
    coding rate is given for the cost-constrained ensemble.  
    Once again, it is shown that the performance of constant-composition
    coding can be matched using at most two auxiliary costs, and sometimes fewer.  Our
    techniques are shown to provide a simple method for obtaining 
    second-order achievability results for continuous channels.
    \item In Section \ref{sec:SU_REFINED_IID}, we provide refined asymptotic
    results for i.i.d. random coding.  For two non-asymptotic random-coding bounds 
    introduced in Section \ref{sec:RANDOM_CODING_ENSEMBLES}, we give
    saddlepoint approximations \cite{SaddlepointBook} that can be computed 
    efficiently, and that characterize the asymptotic behavior of 
    the corresponding bounds as $n\to\infty$ at all positive rates (possibly varying with $n$).
    In the case of fixed rates, the approximations recover the prefactor growth
    rates obtained by Altu\u{g} and Wagner \cite{RefinementJournal}, along with 
    a novel characterization of the multiplicative $O(1)$ terms.
    Using numerical examples, it is shown that the approximations are remarkably 
    accurate even at small block lengths.
\end{enumerate}

\section{Random-Coding Bounds and Ensembles} \label{sec:RANDOM_CODING_ENSEMBLES}

Throughout the paper, we consider random coding in which each codeword $\Xv^{(i)}$
($i=1,\cdots,M$) is independently generated according to a given distribution $P_{\Xv}$.
We will frequently make use of the following theorem, which provides variations
of the random-coding union (RCU) bound given by Polyanskiy \emph{et al.} \cite{Finite}.
\begin{thm} \label{thm:SU_FiniteUB}
    For any codeword distribution $P_{\Xv}(\xv)$ and constant $s\ge0$,
    the random-coding error probability $\pebar$ satisfies
    \begin{equation}
        \frac{1}{4}\rcu(n,M) \le \pebar(n,M) \le \rcu(n,M) \le \rcus(n,M),\label{eq:SU_UpperPe} 
    \end{equation}
    where
    \begin{align}
        \rcu(n,M)        &\triangleq \EE\big[\min\big\{1, \nonumber 
            \\ & \hspace*{-3mm} (M-1)\PP[q^{n}(\Xvbar,\Yv)\ge q^{n}(\Xv,\Yv)\,|\,\Xv,\Yv]\big\}\big] \label{eq:SU_RCU} \\
        \rcus(n,M)       &\triangleq \EE\bigg[\min\Big\{1,(M-1)\frac{\EE[q^{n}(\Xvbar,\Yv)^{s}\,|\,\Yv]}{q^{n}(\Xv,\Yv)^{s}}\Big\}\bigg] \label{eq:SU_RCU_s}
    \end{align}
with $(\Xv,\Yv,\Xvbar) \sim P_{\Xv}(\xv)W^n(\yv|\xv)P_{\Xv}(\xvbar)$.
\end{thm}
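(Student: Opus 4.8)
The plan is to prove the chain of inequalities in \eqref{eq:SU_UpperPe} separately, in each case conditioning on the transmitted codeword and the channel output and invoking the symmetry of the i.i.d.\ message distribution and of the random-coding ensemble.

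For the bound $\pebar(n,M)\le\rcu(n,M)$, I would condition on message $1$ being sent and on $(\Xv^{(1)},\Yv)=(\xv,\yv)$. An error can occur only if some competing codeword attains a metric at least as large as $q^n(\xv,\yv)$, the inequality being non-strict because ties may be resolved incorrectly; hence the conditional error probability is at most $\PP\big[\bigcup_{j\ne1}\{q^n(\Xv^{(j)},\yv)\ge q^n(\xv,\yv)\}\big]$. Applying the union bound, together with the trivial bound that a probability is at most $1$, bounds this by $\min\{1,(M-1)\PP[q^n(\Xvbar,\yv)\ge q^n(\xv,\yv)]\}$ with $\Xvbar\sim P_{\Xv}$ independent of $(\xv,\yv)$, and averaging over $(\Xv^{(1)},\Yv)$ gives the claim.

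For $\rcu(n,M)\le\rcus(n,M)$, I would use that $\Xvbar$ is independent of $(\Xv,\Yv)$, so that for any $s\ge0$ Markov's inequality yields $\PP[q^n(\Xvbar,\Yv)\ge q^n(\Xv,\Yv)\,|\,\Xv,\Yv]=\PP[q^n(\Xvbar,\Yv)^s\ge q^n(\Xv,\Yv)^s\,|\,\Xv,\Yv]\le\EE[q^n(\Xvbar,\Yv)^s\,|\,\Yv]/q^n(\Xv,\Yv)^s$ (for $s=0$ the left-hand side is a probability, hence at most $1$, which equals the right-hand side). Since $t\mapsto\min\{1,(M-1)t\}$ is non-decreasing, substituting this larger quantity into \eqref{eq:SU_RCU} while keeping the outer expectation gives the inequality.

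The lower bound $\tfrac14\rcu(n,M)\le\pebar(n,M)$ is where the argument needs the most care. Conditioning again on message $1$ and on $(\Xv^{(1)},\Yv)=(\xv,\yv)$, set $p\defeq\PP[q^n(\Xvbar,\yv)\ge q^n(\xv,\yv)]$ and split the competitors into those that strictly exceed $q^n(\xv,\yv)$ and those that equal it: on the first event the conditional error probability is $1$, and on its complement, if $k\ge1$ competitors tie the true codeword, the uniform tie-break errs with probability $\tfrac{k}{k+1}\ge\tfrac12$. Therefore the conditional error probability is at least $\tfrac12\PP\big[\bigcup_{j\ne1}\{q^n(\Xv^{(j)},\yv)\ge q^n(\xv,\yv)\}\big]=\tfrac12\big(1-(1-p)^{M-1}\big)$, and the elementary bounds $(1-p)^{M-1}\le e^{-(M-1)p}$ and $(1-p)^{M-1}\le\frac{1}{1+(M-1)p}$ (the latter from Bernoulli's inequality) give $1-(1-p)^{M-1}\ge\tfrac12\min\{1,(M-1)p\}$, so the conditional error probability is at least $\tfrac14\min\{1,(M-1)p\}$; averaging over $(\Xv^{(1)},\Yv)$ completes the proof. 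The only genuinely delicate point throughout is this last step: accounting for ties so as to lose only a constant factor, and sandwiching the binomial-type quantity $1-(1-p)^{M-1}$ between $\min\{1,(M-1)p\}$ and a constant multiple of it; everything else reduces to the union bound and Markov's inequality.
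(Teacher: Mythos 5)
Your proof is correct and follows essentially the same route as the paper: the same truncated union bound for $\pebar\le\rcu$, Markov's inequality for $\rcu\le\rcus$, and the same two factor-of-two losses (one from tie-breaking, one from the tightness of the truncated union bound over independent events) for the lower bound. The only difference is cosmetic: where the paper cites \cite{TwoChannels} and \cite[Lemma A.2]{ShulmanThesis} for the two factor-of-two steps, you unpack them explicitly, in particular proving the union-bound tightness via $1-(1-p)^{M-1}\ge\tfrac12\min\{1,(M-1)p\}$ and Bernoulli's inequality, which is a valid self-contained derivation of the same inequality.
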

\begin{IEEEproof}
    Similarly to \cite{Finite}, we obtain the upper bound $\rcu$ by writing
    \begin{align}
        \pebar(n,M) &\le \PP\bigg[\bigcup_{i\ne m}\big\{ q^{n}(\Xv^{(i)},\Yv)\ge q^{n}(\Xv,\Yv)\big\}\bigg] \label{eq:SU_RCU1} \\
                    &= \EE\Bigg[\PP\bigg[\bigcup_{i\ne m}\big\{ q^{n}(\Xv^{(i)},\Yv)\ge q^{n}(\Xv,\Yv)\big\}\,\Big|\,\Xv,\Yv\bigg]\Bigg] \\
                    &\le \rcu(n,M), \label{eq:SU_RCU3}
    \end{align}
    where \eqref{eq:SU_RCU1} follows by upper bounding the random-coding error probability
    by that of the decoder which breaks ties as errors, and \eqref{eq:SU_RCU3} follows
    by applying the truncated union bound.  To prove the lower bound in \eqref{eq:SU_UpperPe}, 
    it suffices to show that each of the upper bounds in \eqref{eq:SU_RCU1} and \eqref{eq:SU_RCU3} is tight
    to within a factor of two.  The matching lower bound to \eqref{eq:SU_RCU1} follows
    since whenever a tie occurs it must be between at least two codewords \cite{TwoChannels},
    and the matching lower bound to \eqref{eq:SU_RCU3} follows since the union 
    is over independent events \cite[Lemma A.2]{ShulmanThesis}.  
    We obtain the upper bound $\rcus$ by applying Markov's inequality to the
    inner probability in \eqref{eq:SU_RCU}.
\end{IEEEproof}

In this paper, we consider the cost-constrained ensemble 
characterized by the following codeword distribution: 
\begin{equation}
    P_{\Xv}(\xv)=\frac{1}{\mu_{n}}\prod_{i=1}^{n}Q(x_{i})\openone\big\{\xv\in\Dc_{n}\big\},\label{eq:SU_Q_Multi}
\end{equation}
where
\begin{equation}
    \Dc_{n}\triangleq\bigg\{\xv\,:\,\bigg|\frac{1}{n}\sum_{i=1}^{n}a_{l}(x_{i})-\phi_{l}\bigg|\le\frac{\delta}{n},\, l=1,\dotsc,L\bigg\},\label{eq:SU_SetDn}
\end{equation}
and where $\mu_{n}$ is a normalizing constant, $\delta$ is a positive
constant, and for each $l=1,\dotsc,L$, $a_{l}(\cdot)$ is a 
real-valued function on $\Xc$, and $\phi_{l}\triangleq\EE_{Q}[a_{l}(X)]$.
We refer to each function $a_l(\cdot)$ as an auxiliary cost function, or simply a cost.
Roughly speaking, each codeword is generated according to an i.i.d.
distribution conditioned on the empirical mean of each cost function
$a_{l}(x)$ being close to the true mean. This generalizes the ensemble
studied in \cite[Sec. 7.3]{Gallager}, \cite{Variations} by including
multiple costs.

The cost functions $\{a_{l}(\cdot)\}_{l=1}^{L}$ in \eqref{eq:SU_Q_Multi} should not
be viewed as being chosen to meet a system constraint (e.g. power
limitations). Rather, they are introduced in order to improve the
performance of the random-coding ensemble itself. 
However, system costs can be handled similarly; see Section \ref{sec:CONCLUSION}
for details.  The constant $\delta$ in \eqref{eq:SU_Q_Multi} could, in principle,
vary with $l$ and $n$, but a fixed value will suffice for our purposes. 

In the case that $L=0$, it should be understood that $\Dc_n$ contains all $\xv$ 
sequences.  In this case, \eqref{eq:SU_Q_Multi} reduces to the i.i.d. ensemble, 
which is characterized by
\begin{equation} 
    P_{\Xv}(\xv)=\prod_{i=1}^{n}Q(x_{i}).\label{eq:SU_Q_IID}
\end{equation}
A less obvious special case of \eqref{eq:SU_Q_Multi} is the constant-composition
ensemble, which is characterized by
\begin{equation}
    P_{\Xv}(\xv)=\frac{1}{|T^{n}(Q_{n})|}\openone\big\{\xv\in T^{n}(Q_{n})\big\},\label{eq:SU_Q_CC}
\end{equation}
where $Q_{n}$ is a type such that $\max_{x}|Q_{n}(x)-Q(x)|\le\frac{1}{n}$.
That is, each codeword is generated uniformly over the type class
$T^{n}(Q_{n})$, and hence each codeword has the same composition.
To recover this ensemble from \eqref{eq:SU_Q_Multi}, we replace
$Q$ by $Q_n$ and choose the parameters $L=|\Xc|$, $\delta<1$ and  
\begin{equation}
    a_{l}(x)=\openone\{x=l\}, \quad l=1,\cdots,|\Xc|, \label{eq:SU_CostCaseCC}
\end{equation}
where we assume without loss of generality that $\Xc=\{1,\cdots,|\Xc|\}$.

The following proposition shows that the normalizing constant $\mu_{n}$
in \eqref{eq:SU_Q_Multi} decays at most polynomially in $n$. When
$|\Xc|$ is finite, this can easily be shown using the method
of types. In particular, choosing the functions given in the previous
paragraph to recover the constant-composition ensemble, we have $\mu_{n}\ge(n+1)^{-(|\Xc|-1)}$
\cite[p. 17]{CsiszarBook}. For the sake of generality, we present
a proof which applies to more general alphabets, subject to minor
technical conditions. The case $L=1$ was handled in \cite[Ch. 7.3]{Gallager}.
\begin{prop} \label{prop:SU_SubExpCost}
    Fix an input alphabet $\Xc$ (possibly
    infinite or continuous), an input distribution $Q\in\Pc(\Xc)$
    and the auxiliary cost functions $a_{1}(\cdot),\cdots,a_{L}(\cdot)$. If
    $\EE_Q[a_{l}(X)^2]<\infty$ for $l=1,\dotsc,L$,
    then there exists a choice of $\delta>0$ such that the normalizing constant
    in \eqref{eq:SU_Q_Multi} satisfies $\mu_{n}=\Omega(n^{-L/2})$.
\end{prop}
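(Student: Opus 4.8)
The plan is to lower-bound $\mu_n = \PP_Q\big[\Xv \in \Dc_n\big]$, where under $\PP_Q$ the coordinates $X_1,\dots,X_n$ are i.i.d.\ $\sim Q$, by a multidimensional local limit / Berry--Esseen-type estimate. Write $S_l \triangleq \frac{1}{n}\sum_{i=1}^n a_l(X_i)$ and $Z_l \triangleq \sqrt{n}\,(S_l - \phi_l)$, so that $\Dc_n$ is the event $\{|Z_l| \le \delta/\sqrt{n},\ l=1,\dots,L\}$. The heuristic is that $(Z_1,\dots,Z_L)$ is approximately Gaussian with zero mean and covariance matrix $\Sigma$ (the covariance of $(a_1(X),\dots,a_L(X))$ under $Q$), so the probability of landing in an $L$-dimensional box of side $\Theta(n^{-1/2})$ should be $\Theta\big(n^{-L/2}\big)$, which is exactly the claimed $\Omega(n^{-L/2})$. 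First I would dispose of a degenerate case: if the functions $a_1,\dots,a_L$ are affinely dependent (so $\Sigma$ is singular), some $Z_l$ is an exact linear combination of the others, the constraint set effectively lives in lower dimension, and it suffices to prove the bound for a maximal affinely independent subset; hence I may assume $\Sigma \succ 0$.

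The key steps, in order: (i) Reduce to i.i.d.\ coordinates and introduce the normalized sums $Z_l$ as above. (ii) Truncate: replace each $a_l$ by $a_l^{(B)} \triangleq a_l \openone\{|a_l| \le B\}$ for a large constant $B$; since $\EE_Q[a_l(X)^2] < \infty$, choosing $B$ large makes the truncated functions have mean and covariance arbitrarily close to the originals, and the event for the original functions contains the event for the truncated ones up to a correction controlled by $\PP\big[\exists i,l:\ |a_l(X_i)| > B\big] \le n \sum_l \PP_Q[|a_l(X)| > B]$. A cleaner route, which I would actually take, is to \emph{not} truncate globally but instead work directly with the (finite-variance) variables and invoke a quantitative local CLT: by Bhattacharya--Rao type results, or by a direct characteristic-function argument, the density (or in the lattice case the point masses) of $(Z_1,\dots,Z_L)$ converges to the $\mathcal{N}(0,\Sigma)$ density uniformly on compact sets, with the value at the origin bounded below by a positive constant $c > 0$ for all large $n$. (iii) Convert this to the box probability: integrating the (approximately constant) density over the box of volume $\Theta(n^{-L/2})$ gives $\mu_n \ge c'\, n^{-L/2}$ for $n$ large, for a suitable $\delta$. (iv) For small $n$, $\mu_n > 0$ as long as $\Dc_n$ is nonempty, which holds for $\delta$ large enough (e.g.\ $\delta \ge$ some constant depending on the $a_l$ and $Q$), so the $\Omega$ statement is unaffected. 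Finally I would remark that when $\Xc$ is finite, the method-of-types bound $\mu_n \ge (n+1)^{-(|\Xc|-1)}$ already gives a (weaker in exponent but sufficient) version, matching the remark before the proposition, and that for $L=1$ this recovers \cite[Ch.~7.3]{Gallager}.

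The main obstacle is step (ii): making the local-limit lower bound rigorous \emph{without} assuming a density or a lattice structure for the $a_l(X_i)$, under only the second-moment hypothesis $\EE_Q[a_l(X)^2] < \infty$. The distribution of $(Z_1,\dots,Z_L)$ is neither absolutely continuous nor lattice in general, so one cannot directly quote a textbook local CLT. The way I would handle this is a smoothing argument: convolve the law of $(Z_1,\dots,Z_L)$ with a small Gaussian kernel of width $\epsilon/\sqrt{n}$ (equivalently, add a tiny independent Gaussian perturbation to each coordinate), prove the local limit bound for the smoothed law via characteristic functions (the multivariate CLT plus a bound on the tail of the characteristic function coming from the added Gaussian), and then observe that the smoothed box-probability is dominated by the original box-probability over a slightly enlarged box — so an $\Omega(n^{-L/2})$ lower bound on the smoothed quantity transfers, at the cost of enlarging $\delta$. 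This is precisely the point where the freedom to choose $\delta > 0$ in the statement is essential, and it is why the proposition only claims $\Omega(n^{-L/2})$ rather than an exact asymptotic. The remaining steps are routine: the affine-dependence reduction is linear algebra, and the small-$n$ nonemptiness of $\Dc_n$ is immediate.
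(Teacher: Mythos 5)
Your high-level plan matches the paper's: reduce the singular-$\Sigma$ case to a full-rank subset of the costs (the paper does exactly this, handling $\det(\Sigma)=0$ by expressing the dependent $a_l$ as combinations of an independent subfamily and scaling $\delta$ accordingly), and then appeal to a multivariate local limit theorem to show that the probability of the $L$-dimensional box of side $\Theta(n^{-1})$ in the sample-mean coordinates scales as $n^{-L/2}$. Where you diverge is in how the local limit theorem is obtained. The paper simply invokes Stone's result \cite[Cor.~1]{Stone}, which is tailored to precisely this situation: i.i.d.\ $\RR^L$-valued summands, finite second moments, no absolute continuity or lattice assumption, giving $\PP[S_n \in A + x] = n^{-L/2}\big(|A|\phi_\Sigma(x/\sqrt{n}) + o(1)\big)$ uniformly over translates $x$, together with a lattice analogue. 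You instead try to rederive the needed estimate from scratch via Gaussian smoothing.

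The smoothing sketch has two genuine gaps. First, a Gaussian kernel $G$ of width $\epsilon/\sqrt{n}$ does not give a clean transfer from $\PP[Z+G\in B]$ to $\PP[Z\in B']$ for an enlarged box $B'$: the tail $\PP[|G|_\infty > C\epsilon/\sqrt{n}] = \Theta(1)$ (a constant in $n$), whereas the quantity you are trying to extract is $\Theta(n^{-L/2})\to 0$, so you cannot subtract the tail off as you suggest. To repair this you would either need a compactly supported kernel (which then complicates the characteristic-function estimate, since its Fourier transform is not compactly supported), or to accompany the lower bound with a uniform \emph{upper} bound $\sup_g \PP[Z\in B-g]=O(n^{-L/2})$ so the tail contribution can be absorbed multiplicatively — an additional local-limit estimate you have not argued. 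Second, and more fundamentally, your sketch never pins down the quantitative content of ``choose $\delta$'': when some of the $a_l(X)$ are lattice random variables, $Z$ lives on a lattice of spacing $\Theta(n^{-1/2})$, the box of side $2\delta/\sqrt{n}$ can miss every lattice point entirely for small $\delta$ (giving $\mu_n=0$), and the smoothed density $f_{Z+G}(0)$ is not bounded below by a positive constant — it oscillates between $\Theta(1)$ and near zero depending on where the lattice sits. The statement ``there exists $\delta>0$'' is therefore carrying real weight: the paper obtains from Stone the explicit requirement that $\delta$ be at least the largest lattice span among the $a_l(X)$ (any $\delta>0$ suffices if all are non-lattice). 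Your reduction of the singular-$\Sigma$ case and the overall $n^{-L/2}$ scaling intuition are correct, but the local-CLT step as you sketch it would not go through without these repairs, and it is exactly the step that Stone's theorem is cited to dispose of.
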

\begin{IEEEproof}
    This result follows from the multivariate local limit theorem in \cite[Cor. 1]{Stone}, 
    which gives asymptotic expressions for probabilities of i.i.d. random 
    vectors taking values in sets of the form 
    \eqref{eq:SU_SetDn}.  Let $\Sigma$ denote the covariance matrix of
    the vector $[a_{1}(X),\dotsc,a_{L}(X)]^{T}$.  We have by assumption that 
    the entries of $\Sigma$ are finite.  Under the additional assumption 
    $\det(\Sigma)>0$, \cite[Cor. 1]{Stone} states that $\mu_{n}=\Theta(n^{-L/2})$
    provided that $\delta$ is at least as high as the largest span of the
    $a_{l}(X)$ ($X \sim Q$) which are lattice variables.\footnote{\label{foot:Lattice} We say that $X$ is a lattice random variable with offset $\gamma$ and span $h$ if its support is a subset of $\{\gamma+ih\,:\,i\in\ZZ\}$, and the same cannot remain true by increasing $h$.}  
    If all such variables are non-lattice, then $\delta$ can take any positive value.
     
    It only remains to handle the case $\det(\Sigma)=0$.  Suppose that
    $\Sigma$ has rank $L'<L$, and assume without loss of generality that
    $a_1(\cdot),\cdots,a_{L'}(\cdot)$ are linearly independent.  Up to
    sets whose probability with respect to $Q$ is zero, the 
    remaining costs $a_{L'+1}(\cdot),\cdots,a_{L}(\cdot)$ can be written
    as linear combinations of the first $L'$ costs.  Letting $\alpha$ 
    denote the largest magnitude of the scalar coefficients in these linear combinations,
    we conclude that $\xv\in\Dc_n$ provided that 
    \begin{equation} 
        \bigg|\frac{1}{n}\sum_{i=1}^{n}a_{l}(x_{i})-\phi_{l}\bigg|\le\frac{\delta}{\alpha L' n}
    \end{equation}
    for $l=1,\cdots,L'$.  The proposition follows by choosing $\delta$
    to be at least as high as $\alpha L'$ times the largest span of the
    $a_{l}(X)$ which are lattice variables, and analyzing the first $L'$
    costs analogously to the case that $\det(\Sigma)>0$.
\end{IEEEproof}

In accordance with Proposition \ref{prop:SU_SubExpCost}, we henceforth
assume that the choice of $\delta$ for the cost-constrained ensemble
is such that $\mu_{n}=\Omega(n^{-L/2})$.

\section{Random-Coding Error Exponents} \label{sec:SU_ERROR_EXPONENT}

Error exponents characterize the asymptotic exponential behavior of
the error probability in coded communication systems, and can thus
provide additional insight beyond capacity results. In the matched setting,
error exponents were studied by Fano \cite[Ch. 9]{FanoBook} and
later by Gallager \cite[Ch. 5]{Gallager} and Csisz\'ar-K\"orner \cite[Ch. 10]{CsiszarBook}. 
The ensemble tightness of the exponent (cf. \eqref{eq:SU_EnsTight}) under ML decoding
was studied by Gallager \cite{TightAverage} and D'yachkov \cite{DyachkovCC}
for the i.i.d. and constant-composition ensembles respectively.

In this section, we present the ensemble-tight error exponent for
cost-constrained random coding, yielding results for the i.i.d. and 
constant-composition ensembles as special cases. 

\subsection{Cost-Constrained Ensemble}

We define the sets
\begin{align}
    \SetScost(\{a_{l}\}) & \triangleq\big\{ P_{XY}\in\Pc(\Xc\times\Yc)\,:\, \nonumber \\
        & \qquad\quad \EE_{P}[a_{l}(X)]=\phi_{l},\, l=1,\cdots,L\big\}\label{eq:SU_SetS_Cost} \\
    \SetTcost(P_{XY},\{a_{l}\}) & \triangleq\Big\{\Ptilde_{XY}\in\Pc(\Xc\times\Yc)\,:\, \nonumber \\
        & \EE_{\Ptilde}[a_{l}(X)]=\phi_{l}\,\,(l=1,\cdots,L),\Ptilde_{Y}=P_{Y}, \nonumber \\
        & \EE_{\Ptilde}[\log q(X,Y)]\ge\EE_{P}[\log q(X,Y)]\Big\},\label{eq:SU_SetT_Cost}
\end{align}
where the notation $\{a_{l}\}$ is used to denote dependence on $a_{1}(\cdot),\cdots,a_{L}(\cdot)$.
The dependence of these sets on $Q$ (via $\phi_l=\EE_Q[a_l(X)]$) is kept implicit.
\begin{thm} \label{thm:SU_EnsembleExps}
    The random-coding error probability for the cost-constrained ensemble in 
    \eqref{eq:SU_Q_Multi} satisfies
    \begin{equation}
        \lim_{n\to\infty} -\frac{1}{n}\log\pebar(n,e^{nR}) = \Ercost(Q,R,\{a_{l}\}),
    \end{equation}
    where
    \begin{multline}
        \Ercost(Q,R,\{a_{l}\}) \triangleq \min_{P_{XY}\in\SetScost(\{a_{l}\})}\min_{\Ptilde_{XY}\in\SetTcost(P_{XY},\{a_{l}\})} \\
        D(P_{XY}\|Q\times W)+\big[D(\Ptilde_{XY}\|Q\times P_{Y})-R\big]^{+}. \label{eq:SU_Er_Cost} 
    \end{multline}
\end{thm}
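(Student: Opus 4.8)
The plan is to establish the upper and lower bounds on $-\frac1n\log\pebar(n,e^{nR})$ separately, using Theorem~\ref{thm:SU_FiniteUB} to reduce everything to analyzing $\rcu$ (for the converse part, i.e.\ the lower bound on $\pebar$) and $\rcus$ together with the primal $\rcu$ (for the achievability part, i.e.\ the upper bound on $\pebar$). The starting observation is that because each codeword lies in $\Dc_n$, its type $\hat P_{\xv}$ is within $O(\delta/n)$ of satisfying the constraints $\EE[a_l(X)]=\phi_l$, so in the large-$n$ limit only joint types $P_{XY}$ with $P_X$ feasible for $\SetScost(\{a_l\})$ contribute; this is where the method of types enters.

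\emph{Achievability (upper bound on $\pebar$, lower bound on the exponent).} First I would bound $\pebar \le \rcu(n,e^{nR})$ and split the expectation in \eqref{eq:SU_RCU} over the joint type of $(\Xv,\Yv)$. For a pair with joint type $P_{XY}$, the probability $(\Xv,\Yv)\in T^n(P_{XY})$ behaves as $\doteq e^{-nD(P_{XY}\|Q\times W)}$ (the normalizing constant $\mu_n$ is only polynomial by Proposition~\ref{prop:SU_SubExpCost}, hence subexponential). For the inner pairwise term, I would condition on $\Yv\in T^n(P_Y)$ and compute the probability that an independent $\Xvbar$, drawn from the cost-constrained ensemble, has a joint type $\Ptilde_{XY}$ with $\Yv$ satisfying $q^n(\Xvbar,\Yv)\ge q^n(\Xv,\Yv)$; this forces $\EE_{\Ptilde}[\log q]\ge\EE_{P}[\log q]$ and $\Ptilde_Y=P_Y$, giving the set $\SetTcost$. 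The probability that $\Xvbar$ has type $\Ptilde_X$ and the right joint type with $\Yv$ is $\doteq e^{-nD(\Ptilde_{XY}\|Q\times P_Y)}$, again because the cost constraint only costs a polynomial factor. Combining via the truncated-union form $\min\{1,(M-1)(\cdot)\}$ and taking the worst-case $(P_{XY},\Ptilde_{XY})$ over the relevant sets yields $\pebar \,\dot\le\, e^{-n\Ercost(Q,R,\{a_l\})}$.

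\emph{Ensemble converse (lower bound on $\pebar$, upper bound on the exponent).} Here I would use $\pebar \ge \tfrac14\rcu(n,e^{nR})$ from Theorem~\ref{thm:SU_FiniteUB}, so it suffices to lower-bound $\rcu$. I would restrict the outer expectation to a single joint type $P_{XY}\in\SetScost(\{a_l\})$ achieving (close to) the minimum — using that for any feasible $P_{XY}$ there is a nearby valid joint type, and that the probability of that type class is $\doteq e^{-nD(P_{XY}\|Q\times W)}$. Within that type class, for the inner term I would lower-bound $\PP[q^n(\Xvbar,\Yv)\ge q^n(\Xv,\Yv)\mid\Xv,\Yv]$ by the probability that $\Xvbar$ falls in a single well-chosen conditional type class $\Ptilde_{X|Y}$ with $\Ptilde_{XY}\in\SetTcost(P_{XY},\{a_l\})$ attaining the inner minimum; this probability is $\doteq e^{-nD(\Ptilde_{XY}\|Q\times P_Y)}$. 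Finally, since $\min\{1,(M-1)t\}\ge \tfrac12\min\{1,(M-1)t\}$ and for a single type class $\rcu$ is at least a constant times $\PP[(\Xv,\Yv)\in T^n(P_{XY})]\cdot\min\{1,(M-1)e^{-nD(\Ptilde_{XY}\|Q\times P_Y)}\}$, I recover the matching lower bound $\pebar \,\dot\ge\, e^{-n\Ercost(Q,R,\{a_l\})}$.

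\emph{Main obstacle.} The delicate point is the method-of-types bookkeeping for the \emph{cost-constrained} ensemble rather than the plain i.i.d.\ or constant-composition ones: I must show that conditioning on $\xv\in\Dc_n$ changes type-class probabilities only by the polynomial factor $\mu_n^{-1}$ — both for the codeword $\Xv$ and, crucially, for the independent competitor $\Xvbar$ whose type must be compatible both with the cost constraint \emph{and} with a prescribed joint type relative to $\Yv$. This requires checking that the set of joint types $\Ptilde_{XY}$ with $\Ptilde_X$ near-feasible and $\Ptilde_Y=P_Y$ is nonempty and that the constraint $\Dc_n$ does not exclude the conditional type class used in the lower bound; continuity of $D(\cdot\|\cdot)$ and of the linear cost constraints in $P_{XY}$, plus the $O(1/n)$ slack in \eqref{eq:SU_SetDn}, handle this, but it is the step that must be done carefully. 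The optimization-of-$s$ issue in $\rcus$ is comparatively routine and only needed to confirm the achievability bound matches the type-based one.
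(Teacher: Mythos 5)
Your proof is correct and takes essentially the same approach as the paper: bound $\pebar$ above and below by $\rcu$ via Theorem~\ref{thm:SU_FiniteUB}, expand in joint types, use Proposition~\ref{prop:SU_SubExpCost} to absorb the normalizing constant $\mu_n$ into a subexponential factor so that type-class probabilities can be computed as under $Q^n$, and finish with a continuity argument to pass from the $O(\delta/n)$-slack type constraints to the exact constraints defining $\SetScost$ and $\SetTcost$. The paper compresses both directions into the $\doteq$ notation and cites Gallager's constant-composition analysis for the details you spell out; the passing reference to $\rcus$ in your plan is extraneous since, as you note, the achievability step can be run directly from $\rcu$.
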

\begin{IEEEproof}
    See Appendix \ref{sub:SU_PRIMAL_PROOF}.
\end{IEEEproof}
The optimization problem in \eqref{eq:SU_Er_Cost} 
is convex when the input distribution and auxiliary cost functions are
fixed. The following theorem gives an alternative expression
based on Lagrange duality \cite{Convex}. 
\begin{thm} \label{thm:SU_LagrangeDual}
    The error exponent in \eqref{eq:SU_Er_Cost} can be expressed as
    \begin{equation}
        \Ercost(Q,R,\{a_{l}\}) =\max_{\rho\in[0,1]}\Ezcost(Q,\rho,\{a_{l}\})-\rho R,\label{eq:SU_Er_Cost_Dual} 
    \end{equation}
    where
    \begin{multline}
        \Ezcost(Q,\rho,\{a_{l}\}) \triangleq \sup_{s\ge0,\{r_{l}\},\{\rbar_{l}\}} \\ -\log\EE\left[\bigg(\frac{\EE\big[q(\Xbar,Y)^{s}e^{\sum_{l=1}^{L}\rbar_{l}(a_{l}(\Xbar)-\phi_{l})}\,|\, Y\big]}{q(X,Y)^{s}e^{\sum_{l=1}^{L}r_{l}(a_{l}(X)-\phi_{l})}}\bigg)^{\rho}\right] \label{eq:SU_E0_Cost} 
    \end{multline}
    and $(X,Y,\Xbar)\sim Q(x)W(y|x)Q(\xbar)$.
\end{thm}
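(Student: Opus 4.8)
The plan is to derive the dual form \eqref{eq:SU_Er_Cost_Dual}--\eqref{eq:SU_E0_Cost} from the primal expression \eqref{eq:SU_Er_Cost} in two stages: first linearize the positive part and interchange a minimum with a maximum, and then evaluate the resulting inner problem by Lagrange duality. Using $[t]^{+}=\max_{\rho\in[0,1]}\rho t$, the right-hand side of \eqref{eq:SU_Er_Cost} equals
\[
  \min_{\substack{P_{XY}\in\SetScost(\{a_l\})\\\Ptilde_{XY}\in\SetTcost(P_{XY},\{a_l\})}}\ \max_{\rho\in[0,1]}\ D(P_{XY}\|Q\times W)+\rho\big(D(\Ptilde_{XY}\|Q\times P_{Y})-R\big).
\]
The objective is jointly convex and lower semicontinuous in $(P_{XY},\Ptilde_{XY})$ (joint convexity of relative entropy together with linearity of $P_{XY}\mapsto P_{Y}$) and affine in $\rho$, the set of feasible pairs $(P_{XY},\Ptilde_{XY})$ is convex and compact (it is carved out of a product of probability simplices by the linear (in)equalities in \eqref{eq:SU_SetS_Cost}--\eqref{eq:SU_SetT_Cost}), and $[0,1]$ is convex and compact; hence Sion's minimax theorem permits swapping $\min$ and $\max$. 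This yields $\Ercost(Q,R,\{a_l\})=\max_{\rho\in[0,1]}\big(E(\rho)-\rho R\big)$, where $E(\rho)\triangleq\min_{P_{XY}\in\SetScost(\{a_l\})}\big[D(P_{XY}\|Q\times W)+\rho\min_{\Ptilde_{XY}\in\SetTcost(P_{XY},\{a_l\})}D(\Ptilde_{XY}\|Q\times P_{Y})\big]$, so that it remains to prove $E(\rho)=\Ezcost(Q,\rho,\{a_l\})$ for each fixed $\rho\in[0,1]$.

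For $\rho=0$ this is immediate, since $Q\times W\in\SetScost(\{a_l\})$ gives $E(0)=0=\Ezcost(Q,0,\{a_l\})$; assume $\rho>0$. The inner minimization over $\Ptilde_{XY}$ is a relative-entropy projection onto a linear family with a prescribed $Y$-marginal; dualizing it with a multiplier $s\ge0$ for $\EE_{\Ptilde}[\log q]\ge\EE_{P}[\log q]$, multipliers $\{\rbar_l\}$ for $\EE_{\Ptilde}[a_l(X)]=\phi_l$, and $\{\nu(y)\}$ for $\Ptilde_{Y}=P_{Y}$, and then maximizing the dual function over $\nu(\cdot)$ in closed form (the optimal $\nu$ makes $e^{\nu(y)}\,\EE_{\Xbar\sim Q}[q(\Xbar,y)^{s}e^{\sum_l\rbar_l(a_l(\Xbar)-\phi_l)}]$ independent of $y$) gives $\min_{\Ptilde_{XY}\in\SetTcost(P_{XY},\{a_l\})}D(\Ptilde_{XY}\|Q\times P_{Y})=\sup_{s\ge0,\{\rbar_l\}}\{s\,\EE_{P}[\log q(X,Y)]-\EE_{P_{Y}}[\log\EE_{\Xbar}[q(\Xbar,Y)^{s}e^{\sum_l\rbar_l(a_l(\Xbar)-\phi_l)}]]\}$ with $\Xbar\sim Q$ independent of $Y$. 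Substituting this into $E(\rho)$, interchanging the outer $\min$ over $P_{XY}\in\SetScost(\{a_l\})$ with the $\sup$ over $(s,\{\rbar_l\})$, and evaluating the remaining $\min$ over $P_{XY}$---another relative-entropy projection, now with multipliers $\{r_l\}$ for $\EE_{P}[a_l(X)]=\phi_l$---expresses $E(\rho)$ as a supremum over $s\ge0,\{\rbar_l\},\{r_l\}$ of $-\log\EE_{Q\times W}[\cdots]$; the substitution $r_l\mapsto-\rho r_l$ (a bijection for $\rho>0$), together with $\EE_{\Xbar}[q(\Xbar,Y)^{s}e^{\sum_l\rbar_l(a_l(\Xbar)-\phi_l)}]=\EE[q(\Xbar,Y)^{s}e^{\sum_l\rbar_l(a_l(\Xbar)-\phi_l)}\mid Y]$, reproduces \eqref{eq:SU_E0_Cost}. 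The ``$\le$'' direction of this identity (hence $\Ezcost-\rho R\le\Ercost$) is elementary and needs no optimality of the variables: for any feasible $(P_{XY},\Ptilde_{XY})$ and any $\rho\in[0,1]$, $s\ge0$, $\{r_l\}$, $\{\rbar_l\}$, changing measure from $Q\times W$ to $P_{XY}$, invoking $\log\EE[f]\ge\EE[\log f]$, and then using $\EE_{P}[a_l(X)]=\phi_l$, $s\ge0$ with $\EE_{\Ptilde}[\log q]\ge\EE_{P}[\log q]$, $\Ptilde_{Y}=P_{Y}$, and Gibbs' inequality $D(\Ptilde_{XY}\|\Ptilde^{*}_{XY})\ge0$ for the tilted distribution $\Ptilde^{*}_{XY}(x,y)\propto P_{Y}(y)Q(x)q(x,y)^{s}e^{\sum_l\rbar_l(a_l(x)-\phi_l)}$, one obtains $-\log\EE_{Q\times W}[(\cdots)^{\rho}]\le D(P_{XY}\|Q\times W)+\rho D(\Ptilde_{XY}\|Q\times P_{Y})$.

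I expect the main obstacle to lie in the reverse (strong-duality) direction, in two respects. First, the interchange of $\min_{P_{XY}}$ with $\sup_{(s,\{\rbar_l\})}$ above cannot be read off from Sion's theorem, since the multiplier domain $[0,\infty)\times\RR^{L}$ is noncompact; I would handle it either by arguing that the supremum is attained on a fixed compact set uniformly in $P_{XY}\in\SetScost(\{a_l\})$, or---when it is not, which is precisely why \eqref{eq:SU_E0_Cost} is written as a $\sup$ and not a $\max$---by a limiting argument along near-optimal multipliers. Second, strong duality and attainment must be established for the two relative-entropy projections: the primal optima are attained by compactness and lower semicontinuity, but a Slater point for the inequality constraint $\EE_{\Ptilde}[\log q]\ge\EE_{P}[\log q]$ need not exist (in which case that constraint is effectively affine and one invokes strong duality for linearly constrained convex programs), and the degenerate cases in which $q(x,y)=0$ on part of the relevant support, or $\SetTcost(P_{XY},\{a_l\})$ is empty (so that $E(\rho)=+\infty$ and the identity holds trivially), must be dealt with separately. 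The remaining ingredients---the closed-form maximization over $\nu(\cdot)$ and the Gibbs-distribution forms of the projection minimizers---are routine.
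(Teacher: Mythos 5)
Your proposal follows the same two-stage strategy as the paper's proof: linearize $[\cdot]^{+}$ via $\max_{\rho\in[0,1]}\rho(\cdot)$, interchange the resulting $\min$ and $\max$, and then dualize the two nested relative-entropy minimizations --- first over $\Ptilde_{XY}$ with multipliers $(s,\{\rbar_l\},\nu(\cdot))$, then over $P_{XY}$ with multipliers $\{r_l\}$ --- following the Lagrange-duality steps of Merhav \emph{et al.}\ The structure, the multipliers, the reparametrization $r_l\mapsto-\rho r_l$, and the direct weak-duality check are all correct, and the final outcome matches \eqref{eq:SU_Er_Cost_Dual}--\eqref{eq:SU_E0_Cost}.

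The one place your plan makes the argument harder than the paper's is the choice of minimax theorem. You invoke Sion's theorem for the first interchange (fine, since $[0,1]$ is compact), but then flag the second interchange --- $\min_{P_{XY}\in\SetScost(\{a_l\})}$ versus $\sup_{(s,\{\rbar_l\})}$ --- as the main obstacle, because the multiplier domain $[0,\infty)\times\RR^{L}$ is noncompact, and propose either a uniform-compactness argument or a limiting sequence of near-optimal multipliers. Both would work, but neither is needed: the paper uses Fan's minimax theorem, which requires compactness only on the minimization side (together with convexity in the minimized variable and concavity in the maximized variable). Since $\SetScost(\{a_l\})$ is a closed subset of the simplex and hence compact, and the objective in the paper's display \eqref{eq:SU_DualProof9_LM} is convex in $P_{XY}$ and jointly concave in $(s,\{\rbar_l\})$, the interchange is immediate; Fan's theorem also covers the first interchange, so one tool handles both. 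The remaining technical points you raise --- attainment of the relative-entropy projections, the Slater-free strong duality for the affine inequality constraint, the degenerate cases with $q(x,y)=0$ or $\SetTcost$ empty, and the closed-form elimination of $\nu(\cdot)$ --- are genuine but routine, and the paper (like Merhav \emph{et al.}) leaves them implicit.
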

\begin{IEEEproof}
    See Appendix \ref{sub:SU_LAGRANGE_PROOF}.
\end{IEEEproof}
The derivation of \eqref{eq:SU_Er_Cost_Dual}--\eqref{eq:SU_E0_Cost} via 
Theorem \ref{thm:SU_EnsembleExps} is useful for proving ensemble tightness, 
but has the disadvantage of being applicable only
in the case of finite alphabets.  We proceed by giving a direct derivation
which does not prove ensemble tightness, but which 
extends immediately to more general alphabets provided that the second
moments associated with the cost functions are finite (see
Proposition \ref{prop:SU_SubExpCost}).  The extension to channels
with input constraints is straightforward; see Section 
\ref{sec:CONCLUSION} for details.

Using Theorem \ref{thm:SU_FiniteUB} and applying $\min\{1,\alpha\}\le\alpha^{\rho}$ ($\rho\in[0,1]$)
to $\rcus$ in \eqref{eq:SU_RCU_s}, we obtain\footnote{In the case of continuous alphabets, the summations should be replaced by integrals.}
\begin{multline}
    \pebar(n,M)\le\frac{1}{\mu_{n}^{1+\rho}}M^{\rho}\sum_{\xv\in\Dc_{n},\yv}Q^{n}(\xv)W^{n}(\yv|\xv) \\ \times\left(\frac{\sum_{\xvbar\in\Dc_{n}}Q^{n}(\xvbar)q^{n}(\xvbar,\yv)^{s}}{q^{n}(\xv,\yv)^{s}}\right)^{\rho},\label{eq:SU_DirectDual3}
\end{multline}
where $Q^{n}(\xv)\triangleq\prod_{i=1}^{n}Q(x_{i})$. 
From \eqref{eq:SU_SetDn}, each codeword $\xv\in\Dc_{n}$ satisfies 
\begin{equation}
    e^{r(a_{l}^{n}(\xv)-n\phi_{l})}e^{|r|\delta}\ge1 \label{eq:SU_DirectDual4}
\end{equation}
for any real number $r$, where $a_{l}^{n}(\xv)\triangleq\sum_{i=1}^{n}a_{l}(x_{i})$.
Weakening \eqref{eq:SU_DirectDual3} by applying \eqref{eq:SU_DirectDual4}
multiple times, we obtain
\begin{multline}
    \pebar(n,M) \le\frac{e^{\rho\sum_{l}(|r_{l}|+|\rbar_{l}|)\delta}}{\mu_{n}^{1+\rho}}M^{\rho}\sum_{\xv\in\Dc_{n},\yv}Q^{n}(\xv)W^{n}(\yv|\xv) \\ \times\left(\frac{\sum_{\xvbar\in\Dc_{n}}Q^{n}(\xvbar)q^{n}(\xvbar,\yv)^{s}e^{\sum_{l}\rbar_{l}(a_{l}^{n}(\xvbar)-n\phi_{l})}}{q^{n}(\xv,\yv)^{s}e^{\sum_{l}r_{l}(a_{l}^{n}(\xv)-n\phi_{l})}}\right)^{\rho}, \label{eq:SU_DirectDual5} 
\end{multline}
where $\{r_{l}\}$ and $\{\rbar_{l}\}$ are arbitrary. Further
weakening \eqref{eq:SU_DirectDual5} by replacing the summations over $\Dc_n$
with summations over all sequences, and expanding 
each term in the outer summation as product from $i=1$ to $n$, we obtain 
\begin{multline}
    \pebar(n,M)\le\frac{e^{\rho\sum_{l}(|r_{l}|+|\rbar_{l}|)\delta}}{\mu_{n}^{1+\rho}}M^{\rho}\Bigg(\sum_{x,y}Q(x)W(y|x) \\ \times\Bigg(\frac{\sum_{\xbar}Q(\xbar)q(\xbar,y)^{s}e^{\sum_{l}\rbar_{l}(a_{l}(\xbar)-\phi_{l})}}{q(x,y)^{s}e^{\sum_{l}r_{l}(a_{l}(x)-\phi_{l})}}\Bigg)^{\rho}\Bigg)^{n}. \label{eq:SU_DirectDual7}
\end{multline}
Since $\mu_{n}$ decays to zero subexponentially in $n$ (cf. Proposition
\ref{prop:SU_SubExpCost}), we conclude that the prefactor in \eqref{eq:SU_DirectDual7}
does not affect the exponent.  Hence, and setting $M=e^{nR}$, we obtain \eqref{eq:SU_Er_Cost_Dual}.

The preceding analysis can be considered a refinement of that of Shamai and Sason \cite{Variations}, 
who showed that an achievable error exponent in the case that $L=1$ is given by
\begin{equation}
    \Ercostprime(Q,R,a_{1})\triangleq \max_{\rho\in[0,1]}\Ezcostprime(Q,\rho,a_{1})-\rho R, \label{eq:SU_Er_Cost'}
\end{equation}
where
\begin{equation}
    \Ezcostprime(Q,\rho,a_{1})\triangleq \sup_{s\ge0}-\log\EE\left[\left(\frac{\EE[q(\Xbar,Y)^{s}e^{a_{1}(\Xbar)}\,|\, Y]}{q(X,Y)^{s}e^{a_{1}(X)}}\right)^{\rho}\right].\label{eq:SU_E0_Cost'} 
\end{equation}
By setting $r_1=\rbar_1=1$ in \eqref{eq:SU_E0_Cost}, we see that $\Ercost$ with $L=1$ 
is at least as  high as $\Ercostprime$.  In Section \ref{sub:SU_EXP_CONNECTIONS}, we show that 
the former can be strictly higher.

\subsection{i.i.d. and Constant-Composition Ensembles} \label{sec:SU_ExpIIDCC}

Setting $L=0$ in \eqref{eq:SU_E0_Cost}, we recover
the exponent of Kaplan and Shamai \cite{Compound}, namely 
\begin{equation}
     \Eriid(Q,R) \triangleq \max_{\rho\in[0,1]}\Eziid(Q,\rho)-\rho R,\label{eq:SU_Er_IID_Dual} 
\end{equation}
where
\begin{equation}
    \Eziid(Q,\rho) \triangleq \sup_{s\ge0}-\log\EE\left[\bigg(\frac{\EE\big[q(\Xbar,Y)^{s}\,|\, Y\big]}{q(X,Y)^{s}}\bigg)^{\rho}\right]. \label{eq:SU_E0_IID}    
\end{equation}
In the special case of constant-composition random coding (see \eqref{eq:SU_Q_CC}--\eqref{eq:SU_CostCaseCC}), 
the constraints $\EE_{\Ptilde}[a_{l}(X)]=\phi_{l}$ for $l=1,\cdots,|\Xc|$
yield $P_X=Q$ and $\Ptilde_X=Q$ in \eqref{eq:SU_SetS_Cost} and \eqref{eq:SU_SetT_Cost}
respectively, and thus \eqref{eq:SU_Er_Cost} recovers Csisz\'{a}r's exponent for constant-composition
coding \cite{Csiszar1}.  Hence, the exponents of 
\cite{Compound,Csiszar1} are tight with respect to the ensemble average.

We henceforth denote the exponent for the constant-composition ensemble
by $\Ercc(Q,R)$.  We claim that 
\begin{equation}
    \Ercc(Q,R) = \max_{\rho\in[0,1]}\Ezcc(Q,\rho)-\rho R\label{eq:SU_Er_LM_Dual},
\end{equation}
where 
\begin{multline}
    \Ezcc(Q,\rho) = \sup_{s\ge0,a(\cdot)} \\ \EE\left[-\log\EE\bigg[\bigg(\frac{\EE\big[q(\Xbar,Y)^{s}e^{a(\Xbar)}\,|\, Y\big]}{q(X,Y)^{s}e^{a(X)}}\bigg)^{\rho}\,\bigg|\, X\bigg]\right]. \label{eq:SU_E0_LM}
\end{multline} 
To prove this, we first note from \eqref{eq:SU_CostCaseCC} that
\begin{align}
    \sum_{l} r_l(a_l(x)-\phi_l) &= \sum_{\xtilde}r_{\xtilde}(\openone\{x=\xtilde\}-Q(\xtilde)) \label{eq:SU_CCDual1} \\
                                &= r(x) - \phi_r, \label{eq:SU_CCDual2}
\end{align}
where \eqref{eq:SU_CCDual1} follows since $\phi_l=\EE_Q[\openone\{x=l\}]=Q(l)$,
and \eqref{eq:SU_CCDual2} follows by defining $r(x)\triangleq r_x$ and $\phi_{r}\triangleq\EE_{Q}[r(X)]$.
Defining $\rbar(x)$ and $\phi_{\rbar}$ similarly, we obtain the following $E_0$ function 
from \eqref{eq:SU_E0_Cost}:
\begin{align}
    \Ezcc(Q,\rho) &\triangleq  \sup_{s\ge0,r(\cdot),\rbar(\cdot)} -\log\sum_{x,y}Q(x)W(y|x) \nonumber \\
                    & \qquad\qquad \times\bigg(\frac{\sum_{\xbar}Q(\xbar)q(\xbar,y)^{s}e^{\rbar(\xbar)-\phi_{\rbar}}}{q(x,y)^{s}e^{r(x)-\phi_r}}\bigg)^{\rho} \label{eq:SU_CCDual3} \\
                  &\le \sup_{s\ge0,r(\cdot),\rbar(\cdot)} -\sum_{x}Q(x)\log\sum_{y}W(y|x) \nonumber \\
                    & \qquad\qquad \times\bigg(\frac{\sum_{\xbar}Q(\xbar)q(\xbar,y)^{s}e^{\rbar(\xbar)-\phi_{\rbar}}}{q(x,y)^{s}e^{r(x)-\phi_r}}\bigg)^{\rho} \label{eq:SU_CCDual4} \\
                  &= \sup_{s\ge0,\rbar(\cdot)} -\sum_{x}Q(x)\log\sum_{y}W(y|x) \nonumber \\
                    & \qquad\qquad \times\bigg(\frac{\sum_{\xbar}Q(\xbar)q(\xbar,y)^{s}e^{\rbar(\xbar)}}{q(x,y)^{s}e^{\rbar(x)}}\bigg)^{\rho}, \label{eq:SU_CCDual5}
\end{align}
where \eqref{eq:SU_CCDual4} follows from Jensen's inequality, and \eqref{eq:SU_CCDual5}
follows by using the definitions of $\phi_r$ and $\phi_{\rbar}$ to write 
\begin{equation}
    -\sum_{x}Q(x)\log\bigg(\frac{e^{-\phi_{\rbar}}}{e^{r(x)-\phi_r}}\bigg)^\rho = -\sum_{x}Q(x)\log\bigg(\frac{1}{e^{\rbar(x)}}\bigg)^\rho.
\end{equation}
Renaming $\rbar(\cdot)$ as $a(\cdot)$, we see that \eqref{eq:SU_CCDual5}
coincides with \eqref{eq:SU_E0_LM}.  It remains to show that equality holds
in \eqref{eq:SU_CCDual4}.  This is easily seen by noting that the choice
\begin{equation}
    r(x) = \frac{1}{\rho}\log\sum_{y}W(y|x)\bigg(\frac{\sum_{\xbar}Q(\xbar)q(\xbar,y)^{s}e^{\rbar(\xbar)}}{q(x,y)^{s}}\bigg)^{\rho}
\end{equation}
makes the logarithm in \eqref{eq:SU_CCDual4} independent 
of $x$, thus ensuring that Jensen's inequality holds with equality.

The exponent $\Eriid(Q,R)$ is positive for all rates below $\GMI(Q)$ \cite{Compound}, whereas $\Ercc$
recovers the stronger rate $\LM(Q)$.  Similarly, both $\Ercost$ ($L=1$) and $\Ercostprime$
recover the LM rate provided that the auxiliary cost is optimized \cite{Variations}.

\subsection{Number of Auxiliary Costs Required} \label{sub:SU_EXP_CONNECTIONS}

We claim that
\begin{equation}
    \Eriid(Q,R)\le \Ercost(Q,R,\{a_{l}\})\le \Ercc(Q,R).\label{eq:SU_Connection1}
\end{equation}
The first inequality follows by setting $r_l=\rbar_l=0$ in \eqref{eq:SU_E0_Cost},
and the second inequality follows by setting $r(x)=\sum_{l}r_l a_l(x)$ and 
$\rbar(x)=\sum_{l}\rbar_l a_l(x)$ in \eqref{eq:SU_E0_Cost}, and upper bounding
the objective by taking the supremum over all $r(\cdot)$ and $\rbar(\cdot)$
to recover $\Ezcc$ in the form given in \eqref{eq:SU_CCDual3}.
Thus, the constant-composition ensemble yields the best error exponent of 
the three ensembles.  

In this subsection, we study the number of auxiliary costs required for
cost-constrained random coding to achieve $\Ercc$. 
Such an investigation is of interest in gaining insight into the codebook structure, 
and since the subexponential prefactor in \eqref{eq:SU_DirectDual7} 
grows at a slower rate when $L$ is reduced (see Proposition \ref{prop:SU_SubExpCost}).
Our results are summarized in the following theorem.

\begin{thm} \label{thm:SU_NumCosts}
    Consider a DMC $W$ and input distribution $Q$.
    \begin{enumerate} 
      \item For any decoding metric, we have
            \begin{align}
                & \sup_{a_{1}(\cdot),a_{2}(\cdot)}\Ercost(Q,R,\{a_{1},a_{2}\}) = \Ercc(Q,R) \label{eq:SU_Connection2} \\
                & \max_Q\sup_{a_1(\cdot)}\Ercostprime(Q,R,a_1) = \max_Q\Ercc(Q,R). \label{eq:SU_MaxQ}
            \end{align} 
      \item If $q(x,y)=W(y|x)$ (ML decoding), then
            \begin{align}
                & \,\sup_{a_{1}(\cdot)}\Ercost(Q,R,a_{1}) = \Ercc(Q,R) \label{eq:SU_Connection2_ML} \\
                & \sup_{a_1(\cdot)}\Ercostprime(Q,R,a_1)=\Eriid(Q,R) \label{eq:SU_ConnML2} \\
                & \max_Q\Eriid(Q,R) = \max_Q\Ercc(Q,R). \label{eq:SU_MaxQ_ML}
            \end{align} 
    \end{enumerate}
\end{thm}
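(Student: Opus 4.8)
The plan is to establish each identity by a combination of (i) the structural characterization of $\Ezcc$ in \eqref{eq:SU_CCDual3}, which exhibits the optimal $r(\cdot),\rbar(\cdot)$ for constant-composition coding as \emph{arbitrary} real-valued functions on $\Xc$, and (ii) a dimension-counting argument showing that only two (resp. one) effective degrees of freedom are needed in \eqref{eq:SU_E0_Cost}. For \eqref{eq:SU_Connection2}: the inequality $\le$ is already noted in \eqref{eq:SU_Connection1}. For $\ge$, fix the optimal $(s^\star,r^\star(\cdot),\rbar^\star(\cdot))$ achieving $\Ezcc$ in the form \eqref{eq:SU_CCDual3} at a given $\rho$, and choose $L=2$ with $a_1(x)=r^\star(x)$ and $a_2(x)=\rbar^\star(x)$. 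Then setting $r_1=\rbar_2=1$, $r_2=\rbar_1=0$ in \eqref{eq:SU_E0_Cost} makes the exponent argument inside \eqref{eq:SU_E0_Cost} coincide \emph{exactly} with the objective in \eqref{eq:SU_CCDual3} (the terms $-\phi_l$ appearing in \eqref{eq:SU_E0_Cost} match the $-\phi_r,-\phi_{\rbar}$ normalizations in \eqref{eq:SU_CCDual3}). Hence $\Ezcost(Q,\rho,\{a_1,a_2\})\ge\Ezcc(Q,\rho)$ for every $\rho\in[0,1]$; taking $\max_{\rho\in[0,1]}(\cdot)-\rho R$ on both sides gives \eqref{eq:SU_Connection2}. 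One subtlety: the supremizing $(r^\star,\rbar^\star)$ may only be approached, not attained; I would handle this by taking a sequence of near-optimal pairs and noting that the continuity of $\rho\mapsto\max_\rho$ over the compact set $[0,1]$ preserves the limit, or simply state the result as a supremum over $a_1,a_2$, which is what \eqref{eq:SU_Connection2} asserts.

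For the ML identities in part 2, the key observation is that when $q(x,y)=W(y|x)$, the function $\rbar(\cdot)$ in \eqref{eq:SU_E0_LM} becomes redundant up to a reparametrization: the inner ratio $\EE[W(\Xbar|Y)^s e^{a(\Xbar)}\mid Y]/(W(Y|X)^s e^{a(X)})$ can be analyzed by pulling the $e^{a(X)}$ out of the outer expectation structure. I would argue that the optimal $a(\cdot)$ in \eqref{eq:SU_E0_LM} (equivalently the optimal $r(\cdot)=\rbar(\cdot)$, which one checks are equal at the optimum by the symmetry/stationarity conditions) reduces to a single scalar parameter once one exploits that Gallager's $E_0$ for the i.i.d. ensemble under ML, namely \eqref{eq:SU_E0_IID} with $q=W$, is already the constant-composition exponent after optimizing $Q$ — this is the classical fact that $\max_Q \Eriid = \max_Q \Ercc$ under ML, which is \eqref{eq:SU_MaxQ_ML} and follows from the concavity of Gallager's $E_0(\rho,Q)$ in $Q$ together with the fact that constant-composition and i.i.d. coding differ only in the codeword-correlation structure that is immaterial for the random-coding exponent at the optimal input. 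Concretely, \eqref{eq:SU_ConnML2} says $\Ercostprime$ with one cost collapses to $\Eriid$: this is because with $q=W$, setting $a_1(\cdot)$ in \eqref{eq:SU_E0_Cost'} and using the identity $\EE[W(\Xbar|Y)^s\mid Y]=\sum_{\xbar}Q(\xbar)W(\xbar|Y)^s$ together with $s=\tfrac1{1+\rho}$ reproduces Gallager's form, for which no nontrivial $a_1$ helps; I would verify this by a direct stationarity computation in $a_1$ showing the derivative vanishes at $a_1\equiv 0$ when $s$ is chosen optimally. Then \eqref{eq:SU_Connection2_ML} follows from \eqref{eq:SU_Connection2} by showing that under ML the two costs in \eqref{eq:SU_Connection2} can be merged into one, again using $r^\star=\rbar^\star$ at the optimum.

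For \eqref{eq:SU_MaxQ}: the right-hand side is $\max_Q\Ercc(Q,R)$, and by \eqref{eq:SU_Connection1} we have $\sup_{a_1}\Ercostprime(Q,R,a_1)\le\Ercost(Q,R,a_1)\le\Ercc(Q,R)$ for each $Q$, so $\le$ holds. For $\ge$, the idea is that after maximizing over $Q$, the single constraint $\EE_{\Ptilde}[a_1(X)]=\phi_1$ together with the freedom to choose $Q$ recovers the full constant-composition constraint $\Ptilde_X=Q$ "on average": more precisely, I would show that at the $Q$ maximizing $\Ercc$, the inner minimization in \eqref{eq:SU_Er_Cost} is not actually tightened by imposing $\Ptilde_X=Q$ beyond the single moment constraint — a KKT/complementary-slackness argument on the outer $\max_Q$ shows the optimal $P_{XY}$ and $\Ptilde_{XY}$ already satisfy the composition constraint in the right direction. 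This is the step I expect to be the main obstacle: it requires carefully exploiting the optimality of $Q$ (a first-order condition in a function-valued variable) to argue that one linear constraint suffices after the $Q$-optimization, whereas for fixed $Q$ one genuinely needs $|\Xc|$ constraints (or the two of \eqref{eq:SU_Connection2}). I would look for the cleanest route via the dual: write $\max_Q \Ercc(Q,R)$ using \eqref{eq:SU_Er_LM_Dual}–\eqref{eq:SU_E0_LM}, and show that the optimal $a(\cdot)$ there, when combined with the optimal $Q$, can be realized in \eqref{eq:SU_E0_Cost'} with a single scalar $r_1=\rbar_1$ after absorbing the rest into the choice of $Q$.
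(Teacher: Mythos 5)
Your handling of \eqref{eq:SU_Connection2} is correct and matches the paper's proof exactly: set $L=2$, $r_1=\rbar_2=1$, $r_2=\rbar_1=0$, $a_1=r$, $a_2=\rbar$, and observe the objectives in \eqref{eq:SU_E0_Cost} and \eqref{eq:SU_CCDual3} coincide. The subtlety you raise about attainment of the supremum is not an issue, since the identity is between suprema.

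There are, however, genuine gaps in the remaining parts, most seriously in the ML case. You claim that for ML decoding one has $r^\star(\cdot)=\rbar^\star(\cdot)$ at the optimum of \eqref{eq:SU_CCDual3}, and deduce that the two costs of \eqref{eq:SU_Connection2} can be merged into one by setting $a_1=a_2$, which effectively restricts to $r_1=\rbar_1$. But this restricted form is precisely $\Ercostprime$, and \eqref{eq:SU_ConnML2} (which you are also asserting) says $\sup_{a_1}\Ercostprime=\Eriid$, which is strictly less than $\Ercc$ in general. Your argument for \eqref{eq:SU_Connection2_ML} therefore contradicts your argument for \eqref{eq:SU_ConnML2}. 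The paper avoids this by using H\"older's inequality in the style of Gallager's Exercise~5.6: for $q=W$, the supremum in \eqref{eq:SU_CCDual3} is achieved at $s=\tfrac{1}{1+\rho}$ and $\rbar(\cdot)=-\rho\, r(\cdot)$, i.e.\ the two functions are \emph{proportional} with ratio $-\rho$, not equal. It is precisely because the cost-constrained ensemble allows \emph{different} scalars $r_1$ and $\rbar_1$ on the same auxiliary cost that a single cost $a_1=r$ with $(r_1,\rbar_1)=(1,-\rho)$ recovers $\Ercc$, while $\Ercostprime$ (which forces $r_1=\rbar_1$) only reaches $\Eriid$. H\"older's inequality is the missing tool here, and your proposed alternative — a stationarity computation showing the derivative in $a_1$ vanishes at $a_1\equiv0$ — would only establish a critical point, not optimality, unless you additionally prove concavity of the objective in $a_1$.

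For \eqref{eq:SU_MaxQ} your proposal is too vague to assess as correct. The paper's route is short and dual: the KKT conditions for maximizing $\Ezcc(Q,\rho,s,a)$ over $(Q,a)$ jointly (worked out in Appendix \ref{sub:SU_NECC_CONDS}) force the inner logarithm in \eqref{eq:SU_E0_LM} to take the same value for every $x$ with $Q(x)>0$; Jensen's inequality between $\EE_X[-\log\EE_{Y|X}(\cdot)]$ and $-\log\EE_{X,Y}(\cdot)$ is therefore tight at the optimum, and $\Ezcc$ collapses to $\Ezcostprime$. You sketch a primal KKT/complementary-slackness argument on \eqref{eq:SU_Er_Cost} and then mention looking for a dual route; neither is developed to the point where the key step — why optimizing over $Q$ makes the single moment constraint as strong as the full composition constraint — is actually established. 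Finally, your justification of \eqref{eq:SU_MaxQ_ML} by appeal to ``the classical fact'' is circular in context: the paper derives it cleanly by combining \eqref{eq:SU_MaxQ} with \eqref{eq:SU_ConnML2}, both of which you have not yet secured.
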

\begin{proof}
    We have from \eqref{eq:SU_Connection1} that $\Ercost \le \Ercc$.  To obtain
    the reverse inequality corresponding to \eqref{eq:SU_Connection2}, 
    we set $L=2$, $r_1=\rbar_2=1$ and $r_2=\rbar_1=0$ in 
    \eqref{eq:SU_E0_Cost}.  The resulting objective coincides with 
    \eqref{eq:SU_CCDual3} upon setting $a_1(\cdot)=r(\cdot)$ and 
    $a_2(\cdot)=\rbar(\cdot)$.
    
    To prove \eqref{eq:SU_MaxQ}, we note the following observation 
    from Appendix \ref{sub:SU_NECC_CONDS}: Given $s>0$ and $\rho>0$, 
    any pair $(Q,a)$ maximizing the objective in \eqref{eq:SU_E0_LM} must
    satisfy the property that the logarithm
    in \eqref{eq:SU_E0_LM} has the same value for all $x$ such that $Q(x)>0$. 
    It follows that the objective in \eqref{eq:SU_E0_LM} is unchanged when
    the expectation with respect to $X$ is moved inside the logarithm, thus 
    yielding the objective in \eqref{eq:SU_E0_Cost'}.

    We now turn to the proofs of \eqref{eq:SU_Connection2_ML}--\eqref{eq:SU_MaxQ_ML}.
    We claim that, under ML decoding, we can write $\Ezcc$ as
    \begin{multline}
        \Ezcc(Q,\rho) = \sup_{a(\cdot)} \\ -\log\sum_{y}\bigg(\sum_{x}Q(x)W(y|x)^{\frac{1}{1+\rho}}e^{a(x)-\phi_a}\bigg)^{1+\rho}, \label{eq:SU_ConnML1}
    \end{multline}
    where $\phi_a \triangleq \EE_Q[a(X)]$.  To show this,
    we make use of the form of $\Ezcc$ given in \eqref{eq:SU_CCDual3},
    and write the summation inside the logarithm as
    \begin{multline}
        \sum_{y}\bigg(\sum_{x}Q(x)W(y|x)^{1-s\rho}e^{-\rho(r(x)-\phi_r)}\bigg) \\ \times \bigg(\sum_{\xbar}Q(\xbar)W(y|\xbar)^{s}e^{\rbar(\xbar)-\phi_{\rbar}}\bigg)^\rho.
    \end{multline}
    Using H\"{o}lder's inequality in an identical fashion to \cite[Ex. 5.6]{Gallager},
    this summation is lower bounded by
    \begin{equation}
        \sum_{y}\bigg(\sum_{x}Q(x)W(y|x)^{\frac{1}{1+\rho}}e^{\rbar(x)-\phi_{\rbar}}\bigg)^{1+\rho}
    \end{equation}
    with equality if and only if $s=\frac{1}{1+\rho}$ and $\rbar(\cdot)= -\rho r(\cdot)$.  Renaming
    $\rbar(\cdot)$ as $a(\cdot)$, we obtain \eqref{eq:SU_ConnML1}.  We can clearly achieve $\Ercc$ 
    using $L=2$ with the cost functions $r(\cdot)$ and $\rbar(\cdot)$.  However, since we have
    shown that one is a scalar multiple of the other, we conclude that $L=1$ suffices. 
    
    A similar argument using H\"{o}lder's inequality reveals that the objective in \eqref{eq:SU_E0_Cost'}
    is maximized by $s=\frac{1}{1+\rho}$ and $a_1(\cdot)=0$, and the objective in \eqref{eq:SU_E0_IID}
    is maximized by $s=\frac{1}{1+\rho}$, thus yielding \eqref{eq:SU_ConnML2}.
    Finally, combining \eqref{eq:SU_MaxQ} and \eqref{eq:SU_ConnML2}, we obtain 
    \eqref{eq:SU_MaxQ_ML}.
\end{proof}

Theorem \ref{thm:SU_NumCosts} shows that the cost-constrained ensemble recovers $\Ercc$ using at
most two auxiliary costs.  If either the input distribution or decoding rule
is optimized, then $L=1$ suffices (see \eqref{eq:SU_MaxQ} and \eqref{eq:SU_Connection2_ML}), and if both are optimized 
then $L=0$ suffices (see \eqref{eq:SU_MaxQ_ML}).  The latter result is well-known
\cite{GallagerCC} and is stated for completeness.
While \eqref{eq:SU_MaxQ} shows that $\Ercost$ and $\Ercostprime$ coincide when
$Q$ is optimized, \eqref{eq:SU_Connection2_ML}--\eqref{eq:SU_ConnML2} show 
that the former can be strictly higher for a given $Q$ even when $L=1$,
since $\Ercc$ can exceed $\Eriid$ even under ML decoding \cite{GallagerCC}. 

\subsection{Numerical Example} \label{sub:SU_NUMERICAL}

We consider the channel defined by the entries of the $|\Xc|\times|\Yc|$ matrix 
\begin{equation}
    \left[\begin{array}{ccc}
        1-2\delta_{0} & \delta_{0} & \delta_{0}\\
        \delta_{1} & 1-2\delta_{1} & \delta_{1}\\
        \delta_{2} & \delta_{2} & 1-2\delta_{2}
    \end{array}\right]\label{eq:SU_MatrixW}
\end{equation}
with $\Xc=\Yc=\{0,1,2\}$. The mismatched decoder
chooses the codeword which is closest to $\yv$ in terms
of Hamming distance. For example, the decoding metric can be taken
to be the entries of \eqref{eq:SU_MatrixW} with
$\delta_i$ replaced by $\delta\in(0,\frac{1}{3})$ for $i=1,2,3$.
We let $\delta_{0}=0.01$, $\delta_{1}=0.05$, $\delta_{2}=0.25$
and $Q=(0.1,0.3,0.6)$. Under these parameters, we have 
$\GMI(Q)=0.387$, $\LM(Q)=0.449$ and $I(X;Y)=0.471$ bits/use. 

We evaluate the exponents using the optimization software YALMIP \cite{YALMIP}.
For the cost-constrained ensemble with $L=1$, we optimize the auxiliary cost.
As expected, Figure \ref{fig:SU_Exponents} shows that 
the highest exponent is $\Ercc$. The exponent 
$\Ercost$ ($L=1$) is only marginally lower than $\Ercc$, whereas the gap to 
$\Ercostprime$ is larger. The exponent $\Eriid$ is not only lower
than each of the other exponents, but also yields a worse achievable rate.
In the case of ML decoding, $\Ercc$ exceeds $\Eriid$ for all $R<I(X;Y)$.

\begin{figure}
    \begin{centering}
        \includegraphics[width=1\columnwidth]{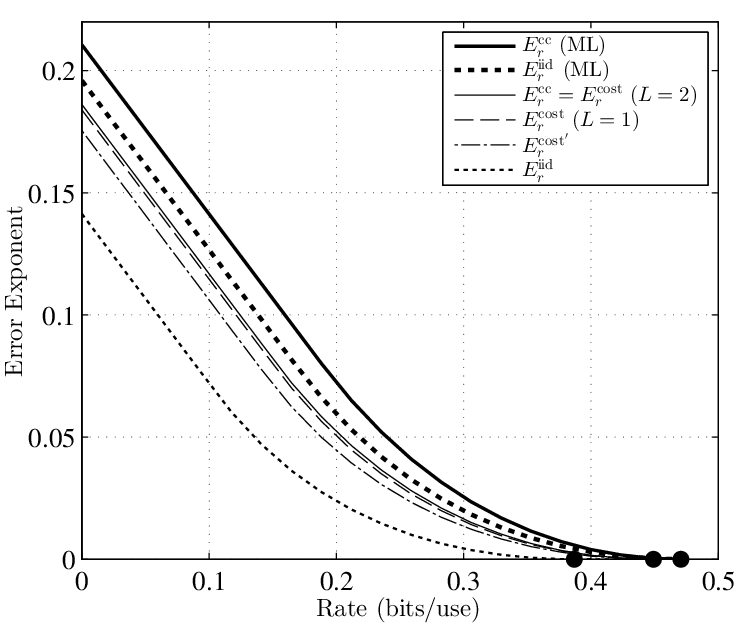}
        \par
    \end{centering}
 
    \caption{Error exponents for the channel defined in \eqref{eq:SU_MatrixW}
        with $\delta_{0}=0.01$, $\delta_{1}=0.05$, $\delta_{2}=0.25$ and
        $Q=(0.1,0.3,0.6)$. The mismatched decoder uses the minimum Hamming
        distance metric. The corresponding achievable rates $\GMI(Q)$, $\LM(Q)$ and
        $I(X;Y)$ are respectively marked on the horizontal axis.\label{fig:SU_Exponents}}
\end{figure}

\section{Second-Order Coding Rates} \label{sec:SU_SECOND_ORDER_RATES}

In the matched setting, the finite-length performance limits of a
channel are characterized by $M^{*}(n,\epsilon)$, defined to be the
maximum number of codewords of length $n$ yielding an error probability
not exceeding $\epsilon$ for some encoder and decoder. The
problem of finding the second-order asymptotics
of $M^{*}(n,\epsilon)$ for a given $\epsilon$ was studied
by Strassen \cite{Strassen}, and later revisited by Polyanskiy \emph{et al.} 
\cite{Finite} and Hayashi \cite{Hayashi}, among others. For
DMCs, we have under mild technical conditions that 
\begin{equation}
    \log M^{*}(n,\epsilon)=nC-\sqrt{nV}\,\Qsf^{-1}(\epsilon)+O(\log n),\label{eq:SU_SecondOrder}
\end{equation}
where $C$ is the channel capacity, and $V$ is known as the channel dispersion.
Results of the form \eqref{eq:SU_SecondOrder} provide a quantification of
the speed of convergence to the channel capacity as the block length increases.

In this section, we present achievable second-order coding rates for
the ensembles given in Section \ref{sec:INTRO}, i.e. expansions of
the form \eqref{eq:SU_SecondOrder} with the equality replaced by
$\ge$. To distinguish between the ensembles, we define $\Miid(Q,n,\epsilon)$,
$\Mcc(Q,n,\epsilon)$ and $\Mcost(Q,n,\epsilon)$
to be the maximum number of codewords of length $n$ such that the
random-coding error probability does not exceed $\epsilon$ for the
i.i.d., constant-composition and cost-constrained
ensembles respectively, using the input distribution $Q$. 
We first consider the discrete memoryless setting, and then discuss
more general memoryless channels.

\subsection{Cost-Constrained Ensemble} \label{sub:SU_CostCostr2OR}

A key quantity in the second-order analysis for ML decoding is the
information density, given by
\begin{equation}
    i(x,y)\triangleq\log\frac{W(y|x)}{\sum_{x}Q(x)W(y|x)},
\end{equation}
where $Q$ is a given input distribution. In the mismatched setting,
the relevant generalization of $i(x,y)$ is
\begin{equation}
    i_{s,a}(x,y) \triangleq\log\frac{q(x,y)^{s}e^{a(x)}}{\sum_{\xbar}Q(\xbar)q(\xbar,y)^{s}e^{a(\xbar)}},\label{eq:SU_i_sa} 
\end{equation}
where $s\ge0$ and $a(\cdot)$ are fixed parameters. We write
$i_{s,a}^{n}(\xv,\yv) \triangleq\sum_{i=1}^{n}i_{s,a}(x_{i},y_{i})$
and similarly $Q^{n}(\xv)\triangleq\prod_{i=1}^{n}Q(x_{i})$
and $a^{n}(\xv)\triangleq\sum_{i=1}^{n}a(x_{i})$. We define
\begin{align}
    I_{s,a}(Q) & \triangleq \EE[i_{s,a}(X,Y)] \label{eq:SU_Isa} \\
    U_{s,a}(Q) & \triangleq \var[i_{s,a}(X,Y)] \label{eq:SU_Usa} \\
    V_{s,a}(Q) & \triangleq \EE\big[\var[i_{s,a}(X,Y)\,|\, X]\big], \label{eq:SU_Vsa}
\end{align}
where $(X,Y)\sim Q\times W$. From \eqref{eq:INTR_RateLM},
we see that the LM rate is equal to $I_{s,a}(Q)$ after optimizing $s$ and $a(\cdot)$.

We can relate \eqref{eq:SU_Isa}--\eqref{eq:SU_Vsa} with the 
$E_0$ functions defined in \eqref{eq:SU_E0_Cost'} and \eqref{eq:SU_E0_LM}.
Letting $\Ezcostprime(Q,\rho,s,a)$ and $\Ezcc(Q,\rho,s,a)$ denote
the corresponding objectives with fixed $(s,a)$ in place of the
supremum, we have $I_{s,a}=\frac{\partial\Ezcostprime}{\partial\rho}\Big|_{\rho=0}=\frac{\partial\Ezcc}{\partial\rho}\Big|_{\rho=0}$,
$U_{s,a}=-\frac{\partial^2\Ezcostprime}{\partial\rho^2}\Big|_{\rho=0}$,
and $V_{s,a}=-\frac{\partial^2\Ezcc}{\partial\rho^2}\Big|_{\rho=0}$.
The latter two identities generalize a well-known connection between the 
exponent and dispersion in the matched case \cite[p. 2337]{Finite}.

The main result of this subsection is the following theorem, which considers
the cost-constrained ensemble.  Our proof differs from the usual
proof using threshold-based random-coding bounds \cite{Strassen,Finite},
but the latter approach can also be used in the present setting \cite{PaperITA}.
Our analysis can be interpreted as performing a normal approximation
of $\rcus$ in \eqref{eq:SU_RCU_s}.
\begin{thm} \label{thm:SU_CostDispersion}
    Fix the input distribution $Q$ and the parameters $s\ge0$ and $a(\cdot)$.
    Using the cost-constrained ensemble in \eqref{eq:SU_Q_Multi} with $L=2$ and
    \begin{align}
        a_{1}(x) & =a(x)\label{eq:SU_a1} \\
        a_{2}(x) & = \EE_{W(\cdot|x)}[i_{s,a}(x,Y)] \label{eq:SU_a2},
    \end{align}
    the following expansion holds: 
    \begin{equation}
        \log \Mcost(Q,n,\epsilon)\ge nI_{s,a}(Q)-\sqrt{nV_{s,a}(Q)}\Qsf^{-1}(\epsilon)+O(\log n). \label{eq:SU_DispersionCost}
    \end{equation}
\end{thm}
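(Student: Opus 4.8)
The plan is to bound $\pebar(Q,n,e^{nR_n})$ for the cost-constrained ensemble with the two costs in \eqref{eq:SU_a1}--\eqref{eq:SU_a2}, and then optimize the rate. Starting from Theorem~\ref{thm:SU_FiniteUB}, I would work with $\rcus$ in \eqref{eq:SU_RCU_s} with the parameter $s$ fixed as in the theorem statement. The key structural observation is that, because every codeword lies in $\Dc_n$, we may use \eqref{eq:SU_DirectDual4} to insert factors $e^{r_l(a_l^n(\xv)-n\phi_l)}$ and $e^{\rbar_l(a_l^n(\xvbar)-n\phi_l)}$ at the cost of a subexponential (indeed $e^{O(1)}$) prefactor, exactly as in the passage leading to \eqref{eq:SU_DirectDual7}. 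Choosing $\rbar_1=0$, $\rbar_2=\rho$ (for a suitable $\rho$ to be fixed later), $r_1=1$ and $r_2=0$ converts the metric $q(\xvbar,\yv)^s$ into $q(\xvbar,\yv)^s e^{a^n(\xvbar)}$ and normalizes the inner sum so that the quantity appearing inside $\rcus$ is governed by the information density $i_{s,a}^n(\xv,\yv)=\sum_i i_{s,a}(x_i,y_i)$. Concretely, after replacing the sums over $\Dc_n$ by sums over all sequences (another $\Omega(n^{-L/2})$ prefactor via Proposition~\ref{prop:SU_SubExpCost}), the bound takes the form
\begin{equation*}
    \pebar \le e^{O(\log n)}\,\EE\big[\min\{1,(M-1)e^{-i_{s,a}^n(\Xv,\Yv)}\}\big] \,+\, (\text{error terms}),
\end{equation*}
where the expectation is now under the i.i.d.\ law $Q^n\times W^n$, and the ``error terms'' account for the deviation of $\frac1n\sum_i a_2(x_i)$ from $\phi_2$ on the conditioning event — this is precisely why $a_2$ is chosen to be $\EE_{W(\cdot|x)}[i_{s,a}(x,Y)]$.

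The heart of the argument is a Berry--Esseen / normal-approximation analysis of $\EE[\min\{1,(M-1)e^{-i_{s,a}^n(\Xv,\Yv)}\}]$. Writing $M-1 = e^{nR_n}$ with $R_n = I_{s,a}(Q) - \sqrt{V_{s,a}(Q)/n}\,\Qsf^{-1}(\epsilon) - \frac{K\log n}{n}$ for a large constant $K$, I would condition on $\Xv=\xv$. Given $\xv$, the sum $i_{s,a}^n(\xv,\Yv)$ is a sum of independent (not identically distributed) terms with conditional mean $a_2^n(\xv)$ and conditional variance close to $nV_{s,a}(Q)$ (using that $\frac1n\sum_i a_1(x_i)\approx\phi_1$ and $\frac1n\sum_i a_2(x_i)\approx\phi_2$ on $\Dc_n$, together with continuity of the relevant moments in the composition). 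Applying the Berry--Esseen theorem conditionally and then averaging over $\xv\in\Dc_n$, the quantity $\EE[\min\{1,e^{nR_n - i_{s,a}^n}\}]$ is shown to be at most $\epsilon + O(1/\sqrt n)$; the standard change-of-measure / splitting-at-the-threshold computation (split the expectation according to whether $i_{s,a}^n(\xv,\yv) \gtrless nR_n$) converts the Gaussian tail into $\Qsf\big((I_{s,a} - R_n)\sqrt{n/V_{s,a}}\big) + O(1/\sqrt n) = \Qsf(\Qsf^{-1}(\epsilon) + o(1)) + O(1/\sqrt n) \le \epsilon$ for $n$ large, after possibly inflating $K$. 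Because the whole bound is multiplied by $e^{O(\log n)}$, it is essential that this $e^{O(\log n)}$ be absorbed into the $\frac{K\log n}{n}$ rate backoff — this is what forces the $O(\log n)$ term in \eqref{eq:SU_DispersionCost} rather than a sharper $O(1)$.

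The main obstacle, and the place where the two-cost construction earns its keep, is controlling the conditional mean and variance of $i_{s,a}^n(\xv,\Yv)$ uniformly over $\xv \in \Dc_n$. Without the cost $a_2$, the conditional mean $\frac1n\sum_i \EE_{W(\cdot|x_i)}[i_{s,a}(x_i,Y)]$ could fluctuate on the scale $\Theta(1/\sqrt n)$ across typical $\xv$, which would contaminate the $\sqrt{n}$-order term; pinning $\frac1n\sum_i a_2(x_i)$ to within $\delta/n$ of $\phi_2$ kills exactly this fluctuation. Similarly, the cost $a_1=a$ guarantees that the first-order term $I_{s,a}(Q)$ (which via \eqref{eq:INTR_RateLM} depends on the empirical composition through $a$) is hit without a $1/\sqrt n$ error. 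So the key lemma to prove carefully is: for $\xv\in\Dc_n$, $\big|\frac1n\EE[i_{s,a}^n(\xv,\Yv)\mid\Xv=\xv] - I_{s,a}(Q)\big| = O(1/n)$ and $\big|\frac1n\var[i_{s,a}^n(\xv,\Yv)\mid\Xv=\xv] - V_{s,a}(Q)\big| = o(1)$, both following from the constraint $|\frac1n\sum_i a_l(x_i) - \phi_l| \le \delta/n$ plus boundedness of the summands on the finite alphabet. Once this lemma is in place, the Berry--Esseen step and the rate bookkeeping are routine, and the remaining i.i.d.\ ensemble and constant-composition corollaries (alluded to in the surrounding text) follow by specialization. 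Finally, for general memoryless channels one replaces the method-of-types estimates by the second-moment hypotheses of Proposition~\ref{prop:SU_SubExpCost} and a truncation argument; the Berry--Esseen machinery is unchanged.
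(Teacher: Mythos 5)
Your overall plan tracks the paper's proof closely: weaken $\rcus$ by inserting factors $e^{r_l(a_l^n(\xv)-n\phi_l)}$ via \eqref{eq:SU_DirectDual4}, replace the sum over $\Dc_n$ by a sum over all sequences at the cost of a $1/\mu_n$ prefactor, reduce to a tail probability of $i_{s,a}^n(\Xv,\Yv)$, and apply a conditional Berry--Esseen argument. The identification of $\phi_2 = I_{s,a}(Q)$, so that the $a_2$-constraint pins the conditional mean $\EE[i_{s,a}^n(\xv,\Yv)] = a_2^n(\xv)$ to within the absolute constant $\delta$ of $nI_{s,a}(Q)$, is also correct and matches \eqref{eq:SU_Mean}.

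However, the stated ``key lemma'' has a genuine gap. You assert that for all $\xv\in\Dc_n$,
\[
\Big|\tfrac1n\var[i_{s,a}^n(\xv,\Yv)\,|\,\Xv=\xv]-V_{s,a}(Q)\Big|=o(1),
\]
``following from the constraint $|\frac1n\sum_i a_l(x_i)-\phi_l|\le\delta/n$.'' This is false in general. The conditional variance equals $\frac1n\sum_i v_{s,a}(x_i)$ with $v_{s,a}(x)=\var_{W(\cdot|x)}[i_{s,a}(x,Y)]$, which is a \emph{third} functional of the empirical composition, not in the linear span of $\{1,a_1,a_2\}$ for generic metrics and $|\Xc|\ge 4$. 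Two affine constraints on the simplex $\Pc(\Xc)$ leave a set of compositions of dimension $|\Xc|-3\ge 1$, over which $\frac1n v_{s,a}^n(\xv)$ can differ from $V_{s,a}(Q)$ by $\Theta(1)$; pinning $a_1^n$ and $a_2^n$ therefore does not pin $v_{s,a}^n$. If one fed such a worst-case $\xv$ into Berry--Esseen, the resulting Gaussian tail would carry the wrong dispersion. The paper avoids this exactly by introducing a further high-probability event $\Ac_n\subset\Dc_n$ in \eqref{eq:SU_SetAn} that constrains $|\frac1n v_{s,a}^n(\xv)-V_{s,a}(Q)|\le\zeta\sqrt{\log n/n}$, and then separately bounding $\PP[\Xv\notin\Ac_n]=O(1/\sqrt n)$ via a moderate-deviations estimate (together with the $1/\mu_n=O(n^{L/2})$ inflation from \eqref{eq:SU_SetAc1}, which is why the $\sqrt{\log n/n}$ scaling and the moment hypotheses of \cite[Thm.~2]{ModDevMaths} are needed rather than a simple Chebyshev bound). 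Your ``averaging over $\xv\in\Dc_n$'' step silently requires something equivalent, and as written the argument does not supply it.

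Two minor omissions: you do not address the degenerate case $V_{s,a}(Q)=0$, which the paper handles separately via Chebyshev's inequality after observing that $v_{s,a}(x)=0$ on the support of $Q$ in that case; and you do not specify that the Berry--Esseen third-moment term is bounded (trivially so here by finiteness of the alphabets, but worth saying, since you also sketch an infinite-alphabet extension).
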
 
\begin{proof}
    Throughout the proof, we make use of the random variables $(X,Y,\Xbar)\sim Q(x)W(y|x)Q(\xbar)$
    and $(\Xv,\Yv,\Xvbar)\sim P_{\Xv}(\xv)W^{n}(\yv|\xv)P_{\Xv}(\xvbar)$.
    Probabilities, expectations, etc. containing a realization $\xv$ of $\Xv$
    are implicitly defined to be conditioned on the event $\Xv=\xv$.

    We start with Theorem \ref{thm:SU_FiniteUB} and weaken $\rcus$ in \eqref{eq:SU_RCU_s} as follows:
    \begin{align}
        & \rcus(n,M) \nonumber
            \\ & =\EE\bigg[\min\bigg\{1,(M-1)\frac{\sum_{\xvbar \in \Dc_n}P_{\Xv}(\xvbar)q^{n}(\Xvbar,\Yv)^{s}}{q^{n}(\Xv,\Yv)^{s}}\bigg\}\bigg] \\
            & \le\EE\bigg[\min\bigg\{1,Me^{2\delta}\frac{\sum_{\xvbar\in \Dc_n}P_{\Xv}(\xvbar)q^{n}(\Xvbar,\Yv)^{s}e^{a^{n}(\Xvbar)}}{q^{n}(\Xv,\Yv)^{s}e^{a^{n}(\Xv)}}\bigg\}\bigg]\label{eq:SU_Disp1_2} \\
            & \le\EE\bigg[\min\bigg\{1,\frac{Me^{2\delta}}{\mu_{n}}\frac{\sum_{\xvbar}Q^{n}(\xvbar)q^{n}(\Xvbar,\Yv)^{s}e^{a^{n}(\Xvbar)}}{q^{n}(\Xv,\Yv)^{s}e^{a^{n}(\Xv)}}\bigg\}\bigg]\label{eq:SU_Disp1_3} \\
            & =\PP\bigg[i_{s,a}^{n}(\Xv,\Yv)+\log U\le\log\frac{Me^{2\delta}}{\mu_{n}}\bigg] \label{eq:SU_Disp1_4} \\
            & \le \PP\bigg[i_{s,a}^{n}(\Xv,\Yv)+\log U\le\log\frac{Me^{2\delta}}{\mu_{n}} \,\cap\, \Xv\in\Ac_n \bigg] \nonumber \\
                & \qquad\qquad\qquad\qquad\qquad\qquad\qquad\qquad + \PP\big[\Xv \notin \Ac_n\big] \label{eq:SU_Disp1_4a} \\
            & \le \max_{\xv\in\Ac_{n}} \PP\bigg[i_{s,a}^{n}(\xv,\Yv)+\log U\le\log\frac{Me^{2\delta}}{\mu_{n}}\bigg] + \PP\big[\Xv \notin \Ac_n\big],\label{eq:SU_Disp1_5} 
    \end{align}
    where \eqref{eq:SU_Disp1_2} follows from \eqref{eq:SU_DirectDual4}, \eqref{eq:SU_Disp1_3} 
    follows by substituting the random-coding distribution in \eqref{eq:SU_Q_Multi} and summing
    over all $\xvbar$ instead of $\xvbar\in\Dc_{n}$, and \eqref{eq:SU_Disp1_4} follows from the 
    definition of $i^n_{s,a}$ and the identity
    \begin{equation}
        \EE[\min\{1,A\}]=\PP[A>U], \label{eq:SU_UnifIneq}
    \end{equation}
    where $A$ is an arbitrary non-negative random variable, and $U$ is 
    uniform on $(0,1)$ and independent of $A$.  Finally, \eqref{eq:SU_Disp1_4a}
    holds for any set $\Ac_n$ by the law of total probability.
 
    We treat the cases $V_{s,a}(Q)>0$ and $V_{s,a}(Q)=0$ separately.
    In the former case, we choose
    \begin{equation}
        \Ac_n = \bigg\{ \xv \in \Dc_n \,:\, \Big|\frac{1}{n}v_{s,a}^n(\xv) - V_{s,a}(Q)\Big| \le \zeta\sqrt{\frac{\log n}{n}} \bigg\}, \label{eq:SU_SetAn}
    \end{equation}
    where $\zeta$ is a constant, and $v_{s,a}^n(\xv)\triangleq\sum_{i=1}^{n}v_{s,a}(x_i)$ with
    \begin{equation}
        v_{s,a}(x) \triangleq \var_{W(\cdot|x)}[i_{s,a}(x,Y)]. \label{eq:SU_vsa}
    \end{equation} 
    Using this definition along with that of $\Dc_{n}$ in \eqref{eq:SU_SetDn} and  
    the cost function in \eqref{eq:SU_a2}, we have for any $\xv\in\Ac_n$ that
    \begin{align}
        \Big|\EE[i_{s,a}^{n}(\xv,\Yv)] - nI_{s,a}(Q)\Big| &\le \delta \label{eq:SU_Mean} \\
        \Big|\var[i_{s,a}^{n}(\xv,\Yv)] - nV_{s,a}(Q)\Big| &\le \zeta \sqrt{n\log n}, \label{eq:SU_Var} 
    \end{align}
    for all $\xv\in\Dc_n$. Since $\log U$ has finite moments, this implies
    \begin{align}
        \Big|\EE[i_{s,a}^{n}(\xv,\Yv)+\log U] - nI_{s,a}(Q)\Big| &= O(1) \label{eq:SU_Mean-1} \\
        \Big|\var[i_{s,a}^{n}(\xv,\Yv)+\log U] - nV_{s,a}(Q)\Big| &= O\big(\sqrt{n\log n}\big). \label{eq:SU_Var-1} 
    \end{align}
    
    Using \eqref{eq:SU_Q_Multi} and defining $\Xv' \sim Q^n(\xv')$, we have
    \begin{align}
        \PP\big[\Xv \notin \Ac_n\big] \le \frac{1}{\mu_n} \PP\big[\Xv' \notin \Ac_n\big]. \label{eq:SU_SetAc1}
    \end{align}
    We claim that there exists a choice of $\zeta$ such that the right-hand side 
    of \eqref{eq:SU_SetAc1} behaves as $O\big(\frac{1}{\sqrt n}\big)$, thus yielding
    \begin{equation}
        \PP\big[\Xv \notin \Ac_n\big] = O\Big(\frac{1}{\sqrt n}\Big). \label{eq:SU_SetAc2}
    \end{equation} 
    Since Proposition \ref{prop:SU_SubExpCost}
    states that $\mu_n=\Omega(n^{-L/2})$, it suffices to show that $\PP[\Xv' \notin \Ac_n]$ 
    can be made to behave as $O(n^{-(L+1)/2})$.  This 
    follows from the following moderate deviations result of \cite[Thm. 2]{ModDevMaths}:
    Given an i.i.d. sequence $\{Z_i\}_{i=1}^{n}$ with $\EE[Z_i]=\mu$ and $\var[Z_i]=\sigma^2 > 0$,
    we have $\PP\big[\big|\frac{1}{n}\sum_{i=1}^{n}Z_i - \mu\big| > \eta\sigma\sqrt{\frac{\log n}{n}}\big] \asymp \frac{2}{\eta\sqrt{2\pi\log n}}n^{-\eta^2/2}$
    provided that $\EE[Z_i^{\eta^2+2+\delta}]<\infty$ for some $\delta>0$.  The latter condition
    is always satisfied in the present setting, since we are considering finite alphabets.

    We are now in a position to apply the Berry-Esseen theorem for independent and 
    non-identically distributed random variables \cite[Sec. XVI.5]{Feller}. The relevant first
    and second moments are bounded in \eqref{eq:SU_Mean-1}--\eqref{eq:SU_Var-1},
    and the relevant third moment is bounded since we are considering finite alphabets.
    Choosing
    \begin{equation}
        \log M=nI_{s,a}(Q)-\log\mu_{n}-2\delta-\xi_n \label{eq:SU_Disp_0_2}
    \end{equation} 
    for some $\xi_n$, and also using \eqref{eq:SU_Disp1_5} and \eqref{eq:SU_SetAc2},
    we obtain from the Berry-Esseen theorem that
    \begin{equation}
        \pebar\le\Qsf\Bigg(\frac{\xi_n+O(1)}{\sqrt{nV_{s,a}(Q)+O(\sqrt{n \log n})}}\Bigg)+O\Big(\frac{1}{\sqrt{n}}\Big).\label{eq:SU_Disp_0_4}
    \end{equation}
    By straightforward rearrangements and a first-order Taylor expansion of 
    the square root function and the $\Qsf^{-1}$ function, we obtain
    \begin{equation}
        \xi_n \le\sqrt{nV_{s,a}(Q)}\,\Qsf^{-1}(\pebar)+O\big(\sqrt{\log n}\big). \label{eq:SU_Disp_0_5}
    \end{equation}
    The proof for the case $V_{s,a}(Q)>0$  is concluded by combining \eqref{eq:SU_Disp_0_2} and 
    \eqref{eq:SU_Disp_0_5}, and noting from Proposition \ref{prop:SU_SubExpCost} that $\log\mu_{n}=O(\log n)$.
    
    In the case that $V_{s,a}(Q)=0$, we can still make use of \eqref{eq:SU_Mean-1}, but
    the variance is handled differently.  From the definition in \eqref{eq:SU_Vsa},
    we in fact have $\var[i_{s,a}(x,Y)]=0$ for all $x$ such that $Q(x)>0$.  Thus,
    for all $\xv\in\Dc_n$ we have $\var[i_{s,a}^{n}(\xv,\Yv)] = 0$ and hence
    $\var[i_{s,a}^{n}(\xv,\Yv) + \log U] = O(1)$.  Choosing $M$ as in 
    \eqref{eq:SU_Disp_0_2} and setting $\Ac_n=\Dc_n$, we can write \eqref{eq:SU_Disp1_5} as
    \begin{align}
        \rcus(n,M) &\le \max_{\xv\in\Dc_{n}} \PP\big[i_{s,a}^{n}(\xv,\Yv)+\log U - nI_{s,a}(Q) \le -\xi_n\big] \\
                   &\le \frac{O(1)}{(\xi_n - O(1))^2}, \label{eq:SU_Disp_0_6}
    \end{align}
    where \eqref{eq:SU_Disp_0_6} holds due to \eqref{eq:SU_Mean-1} and Chebyshev's inequality provided that 
    $\xi_n$ is sufficiently large so that the $\xi_n-O(1)$ term is positive.
    Rearranging, we see that we can achieve any target value $\pebar=\epsilon$ with
    $\xi_n = O(1)$.  The proof is concluded using \eqref{eq:SU_Disp_0_2}.
\end{proof}

Theorem \ref{thm:SU_CostDispersion} can easily be extended to channels
with more general alphabets.   However, some care is needed, since
the moderate deviations result \cite[Thm. 2]{ModDevMaths} used in the proof requires
finite moments up to a certain order depending on $\zeta$ in \eqref{eq:SU_SetAn}. 
In the case that \emph{all} moments of $i_{s,a}(X,Y)$ are finite, the preceding analysis 
is nearly unchanged, except that the third moment 
should be bounded in the set $\Ac_n$ in \eqref{eq:SU_SetAn} in the same way
as the second moment.  An alternative approach is to introduce two further
auxiliary costs into the ensemble:
\begin{align}
    a_3(x) &= v_{s,a}(x) \\
    a_4(x) &= \EE\big[|i_{s,a}(x,Y) - I_{s,a}(Q)|^3\big],
\end{align}
where $v_{s,a}$ is defined in \eqref{eq:SU_vsa}.  Under these choices, 
the relevant second and third moments 
for the Berry-Esseen theorem are bounded within $\Dc_n$ similarly to \eqref{eq:SU_Mean-1}.
The only further requirement is that the sixth moment of $i_{s,a}(X,Y)$ is finite
under $Q \times W$, in accordance with Proposition \ref{prop:SU_SubExpCost}.

We can easily deal with additive input constraints
by handling them similarly to the auxiliary costs (see Section \ref{sec:CONCLUSION}
for details).  With these modifications, our techniques
provide, to our knowledge, the most general known second-order achievability proof
for memoryless input-constrained channels with infinite or continuous alphabets.\footnote{Analogous results were stated in \cite{Hayashi}, but the generality of the proof techniques therein is unclear.  In particular, the quantization arguments on page 4963 therein require that the rate of convergence from $I(X_m;Y)$ to $I(X;Y)$ is sufficiently fast, where $X_m$ is the quantized input variable with a support of cardinality $m$.}
In particular, for the additive white Gaussian noise (AWGN) 
channel with a maximal power constraint and ML decoding, setting $s=1$ and $a(\cdot)=0$ yields the 
achievability part of the dispersion given by Polyanskiy \emph{et al.} \cite{Finite},
thus providing a simple alternative to the proof therein based on the $\kappa\beta$ bound. 

\subsection{i.i.d. and Constant-Composition Ensembles} \label{eq:SU_2ORCCIID}

The properties of the cost-constrained ensemble used in the proof of Theorem 
\ref{thm:SU_CostDispersion} are also satisfied by the constant-composition
ensemble, so we conclude that \eqref{eq:SU_DispersionCost} remains true
when $\Mcost$ is replaced by $\Mcc$.  However, using standard bounds on $\mu_n$
in \eqref{eq:SU_Disp1_5} (e.g. \cite[p. 17]{CsiszarBook}), we obtain a third-order
$O(\log n)$ term which grows linearly in $|\Xc|$.  In contrast,
by Proposition \ref{prop:SU_SubExpCost} and \eqref{eq:SU_Disp_0_2}, the cost-constrained 
ensemble yields a third-order term of the form $-\frac{L}{2}\log n + O(1)$,
where $L$ is independent of $|\Xc|$.

The second-order asymptotic result for i.i.d. coding does not follow directly
from  Theorem \ref{thm:SU_CostDispersion}, since the proof requires the
cost function in \eqref{eq:SU_a2} to be present.  However,
using similar arguments along with the identities $\EE[i_s^n(\Xv,\Yv)]=nI_s(Q)$  
and $\var[i_s^n(\Xv,\Yv)]=nU_s(Q)$ (where $\Xv \sim Q^n$), we obtain
\begin{equation} 
    \log \Miid(Q,n,\epsilon)\ge nI_{s}(Q)-\sqrt{nU_{s}(Q)}\,\Qsf^{-1}(\epsilon)+O(1) \label{eq:SU_DispersionIID}
\end{equation} 
for $s\ge0$, where $I_s(Q)$ and $U_s(Q)$ are defined as
in \eqref{eq:SU_Isa}--\eqref{eq:SU_Usa} with $a(\cdot)=0$.  Under some technical
conditions, the $O(1)$ term in \eqref{eq:SU_DispersionIID} can be improved
to $\frac{1}{2}{\log n}+O(1)$ using the techniques of \cite[Sec. 3.4.5]{FiniteThesis};
see Section \ref{sub:SU_PREFACTOR_RCU} for further discussion.

\subsection{Number of Auxiliary Costs Required} \label{sec:SU_2OR_NCOSTS}

For ML decoding ($q(x,y)=W(y|x)$), we immediately see that $a_1(\cdot)$ in \eqref{eq:SU_a1} 
is not needed, since the parameters maximizing $I_{s,a}(Q)$ in \eqref{eq:SU_Isa}
are $s=1$ and $a(\cdot)=0$, thus yielding the mutual information.

We claim that, for any decoding metric, the auxiliary cost $a_2(\cdot)$ in  
\eqref{eq:SU_a2} is not needed in the case that $Q$ and $a(\cdot)$ are 
optimized in \eqref{eq:SU_DispersionCost}.  This follows from the following
observation proved in Appendix \ref{sub:SU_NECC_CONDS}: 
Given $s>0$, any pair $(Q,a)$ which maximizes $I_{s,a}(Q)$ must 
be such that $\EE_{W(\cdot|x)}[i_{s,a}(x,Y)]$ has  
the same value for all $x$ such that $Q(x)>0$.  Stated differently, the conditional variance
$V_{s,a}(Q)$ coincides with the unconditional variance $U_{s,a}(Q)$ after
the optimization of the parameters, thus generalizing the analogous
result for ML decoding \cite{Finite}.
    
We observe that the number of auxiliary costs in each case coincides
with that of the random-coding exponent (see Section \ref{sub:SU_EXP_CONNECTIONS}):
$L=2$ suffices in general, $L=1$ suffices if the metric or input distribution
is optimized, and $L=0$ suffices is both are optimized.

\section{Saddlepoint Approximations} \label{sec:SU_REFINED_IID}

Random-coding error exponents can be thought of as providing an estimate
of the error probability of the form $\pe \approx e^{-nE_r(R)}$.  More refined
estimates can be obtained having the form $\pe \approx \alpha_n(R)e^{-nE_r(R)}$,
where $\alpha_n(R)$ is a subexponential prefactor.
Early works on characterizing the subexponential prefactor for a given rate under 
ML decoding include those of Elias \cite{TwoChannels} and Dobrushin
\cite{Dobrushin}, who studied specific channels exhibiting
a high degree of symmetry.  More recently, Altu\u{g} and Wagner \cite{RefinementJournal,RefinementSP}
obtained asymptotic prefactors for arbitrary DMCs. 

In this section, we take an alternative approach based on the  
saddlepoint approximation \cite{SaddlepointBook}.  Our goal is to 
provide approximations for $\rcu$ and $\rcus$ (see Theorem \ref{thm:SU_FiniteUB}) which are not only tight 
in the limit of large $n$ for a fixed rate, but also when the 
rate varies.  In particular, our analysis will cover the regime 
of a fixed target error probability, which was studied in 
Section \ref{sec:SU_SECOND_ORDER_RATES}, as well as the moderate deviations
regime, which was studied in \cite{ModerateDev,ModerateDev2}. We focus on i.i.d. 
random coding, which is particularly amenable to a precise asymptotic analysis.

\subsection{Preliminary Definitions and Results} \label{sec:SU_REF_DEFS} 

Analogously to Section \ref{sec:SU_SECOND_ORDER_RATES}, we fix $Q$ 
and $s>0$ and define the quantities
\begin{align}
    i_s(x,y)       &\triangleq \log\frac{q(x,y)^s}{\sum_{\xbar}Q(\xbar)q(\xbar,y)^s} \label{eq:SU_is} \\
    i_s^n(\xv,\yv) &\triangleq \sum_{i=1}^{n}i_s(x_i,y_i) \label{eq:SU_is_n} \\
    I_s(Q)         &\triangleq \EE[i_s(X,Y)] \label{eq:SU_Is} \\
    U_s(Q)         &\triangleq \var[i_s(X,Y)] \label{eq:SU_Us},
\end{align}
where $(X,Y)\sim Q\times W$. We write $\rcus$ in \eqref{eq:SU_RCU_s} (with $P_{\Xv}=Q^n$) as
\begin{equation}
    \rcus(n,M) = \EE\Big[\min\big\{1,(M-1)e^{-i_s^n(\Xv,\Yv)}\big\}\Big]. \label{eq:SU_RCU_s2}
\end{equation}
We let 
\begin{equation}
    \Eziid(Q,\rho,s) \triangleq -\log\EE\big[e^{-\rho i_s(X,Y)}\big] \label{eq:SU_E0s_IID}
\end{equation} 
denote the objective in \eqref{eq:SU_E0_IID} with a fixed value of
$s$ in place of the supremum.  The optimal value of $\rho$ is given by
\begin{equation}
    \rhohat(Q,R,s) \triangleq \argmax_{\rho\in[0,1]} \Eziid(Q,\rho,s) - \rho R. \label{eq:SU_rho_hat}
\end{equation} 
and the critical rate is defined as
\begin{equation}
    \Rcrs(Q) \triangleq \sup\big\{R\,:\,\rhohat(Q,R,s)=1\big\}.
\end{equation}
Furthermore, we define the following derivatives associated with \eqref{eq:SU_rho_hat}:
\begin{align}
    c_{1}(Q,R,s) & \triangleq R-\frac{\partial \Eziid(Q,\rho,s)}{\partial\rho}\bigg|_{\rho=\rhohat(Q,R,s)}\label{eq:SA_c1} \\
    c_{2}(Q,R,s) & \triangleq-\frac{\partial^{2}\Eziid(Q,\rho,s)}{\partial\rho^{2}}\bigg|_{\rho=\rhohat(Q,R,s)}, \label{eq:SA_c2} 
\end{align}

The following properties of the above quantities are analogous to those of Gallager
for ML decoding \cite[pp. 141-143]{Gallager}, and can be proved in a similar fashion: 
\begin{enumerate}
  \item For all $R\ge0$, we have $c_{2}(Q,R,s) > 0$ if $U_s(Q)>0$, and $c_{2}(Q,R,s) = 0$ if $U_s(Q)=0$.  Furthermore, we have $c_2(Q,I_s(Q),s)=U_s(Q)$.
  \item If $U_s(Q)=0$, then $\Rcrs(Q)=I_s(Q)$.
  \item For $R \in \big[0,\Rcrs(Q)\big)$, we have $\rhohat(Q,R,s)=1$ and $c_{1}(Q,R,s)<0$.
  \item For $R \in \big[\Rcrs(Q),I_s(Q)\big]$, $\rhohat(Q,R,s)$ is strictly decreasing in $R$,
        and $c_{1}(Q,R,s)=0$.
  \item For $R > I_s(Q)$, we have $\rhohat(Q,R,s)=0$ and $c_{1}(Q,R,s)>0$.
\end{enumerate}
Throughout this section, the arguments to $\rhohat$, $c_1$, etc. will be
omitted, since their values will be clear from the context. 

The density function of a $N(\mu,\sigma^2)$ random variable is denoted by
\begin{equation}
    \phi(z;\mu,\sigma^2) \triangleq \frac{1}{\sqrt{2\pi\sigma^2}}e^{-\frac{(z-\mu)^2}{2\sigma^2}}. \label{eq:SU_phi}
\end{equation} 
When studying lattice random variables (see Footnote \ref{foot:Lattice} on Page \pageref{foot:Lattice}) 
with span $h$, it will be useful to define
\begin{equation}
    \phi_h(z;\mu,\sigma^2) \triangleq \frac{h}{\sqrt{2\pi\sigma^2}}e^{-\frac{(z-\mu)^2}{2\sigma^2}}, \label{eq:SU_phi_h}
\end{equation}
which can be interpreted as an approximation of the integral of $\phi(\,\cdot\,;\mu,\sigma^2)$ from $z$
to $z+h$ when $h$ is small. 

\subsection{Approximation for $\rcus(n,M)$} \label{sub:SU_PREFACTOR_S}

In the proof of Theorem \ref{thm:SA_IID_LIM} below, we derive an approximation
$\rcushat$ of $\rcus$ taking the form 
\begin{equation}
    \rcushat(n,M)\triangleq\alpha_n(Q,R,s)e^{-n(\Eziid(Q,\rhohat,s)-\rhohat R)},\label{eq:SA_RCU_s_hat}
\end{equation}
where $R=\frac{1}{n}\log M$, and the prefactor $\alpha_n$ varies
depending on whether $i_s(X,Y)$ is a lattice variable.  
In the non-lattice case, the prefactor is given by
\begin{multline}
    \alphanl(Q,R,s) \triangleq \int_{0}^{\infty}e^{-\rhohat z}\phi(z;nc_1,nc_2)dz \\ + \int_{-\infty}^{0}e^{(1-\rhohat)z}\phi(z;nc_1,nc_2)dz. \label{eq:SA_PreFactorIID}
\end{multline}
In the lattice
case, it will prove convenient to deal with $R-i_s(X,Y)$ rather than $i_s(X,Y)$.
Denoting the offset and span of $R-i_s(X,Y)$ by $\gamma$ and $h$ respectively, we 
see that $nR-i_s^n(\Xv,\Yv)$ has span $h$, and its offset can be chosen as
\begin{equation}
    \gamma_n \triangleq \min\Big\{ n\gamma+ih \,:\, i\in\ZZ, n\gamma+ih \ge 0 \Big\}. \label{eq:SU_alphan}
\end{equation}
The prefactor for the lattice case is given by
\begin{multline}
    \alphal(Q,R,s) \triangleq \sum_{i=0}^\infty e^{-\rhohat(\gamma_n+ih)}\phi_h(\gamma_n+ih;nc_1,nc_2) \\ + \sum_{i=-\infty}^{-1}e^{(1-\rhohat)(\gamma_n+ih)}\phi_h(\gamma_n+ih;nc_1,nc_2), \label{eq:SA_PreFactorIID_L}
\end{multline}
and the overall prefactor in \eqref{eq:SA_RCU_s_hat} is defined as
\begin{equation}
    \alpha_n \triangleq \begin{cases} \alphanl & i_s(X,Y)\text{ is non-lattice} \\ \alphal & R-i_s(X,Y) \text{ has offset $\gamma$ and span $h$}. \end{cases} 
\end{equation}
While \eqref{eq:SA_PreFactorIID} and \eqref{eq:SA_PreFactorIID_L} are written in
terms of integrals and summations, both prefactors can be computed 
efficiently to a high degree of accuracy.  In the non-lattice case, this is 
easily done using the identity
\begin{equation}
    \int_{a}^{\infty}e^{bz}\phi(z;\mu,\sigma^2)dz = e^{\mu b + \frac{1}{2}\sigma^2b^2}\Qsf\Big(\frac{a-\mu-b\sigma^2}{\sigma}\Big). \label{eq:SU_Integral}
\end{equation}
In the lattice case, we can write each of the summations in
\eqref{eq:SA_PreFactorIID_L} in the form
\begin{equation}
    \sum_i e^{b_0 + b_1i + b_2i^2} = e^{-\frac{b_1^2}{4b_2}+b_0}\sum_i e^{b_2(i+\frac{b_1}{2b_2})^2}, \label{eq:SA_SampleSum} \\
\end{equation}
where $b_2 < 0$.  We can thus obtain an accurate approximation by
keeping only the terms in the sum such that $i$ is sufficiently 
close to $-\frac{b_1}{2b_2}$.  Overall, the computational complexity
of the saddlepoint approximation is similar to that of the exponent alone.

\begin{thm} \label{thm:SA_IID_LIM}
    Consider a DMC $W$, decoding metric $q$, input distribution $Q$, and 
    parameter $s>0$ such that $U_s(Q)>0$.  For any sequence
    $\{M_n\}$ such that $M_n\to\infty$, we have 
    \begin{equation}
        \lim_{n\to\infty} \frac{\rcushat(n,M_n)}{\rcus(n,M_n)} = 1. \label{eq:SA_rcusResult}
    \end{equation}
\end{thm}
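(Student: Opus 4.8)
The plan is to isolate the exponential behaviour of $\rcus$ by an exponential change of measure (the saddlepoint tilt), and then to identify the surviving prefactor with $\alphanl$ or $\alphal$ through a refined local limit theorem. Writing $R=\frac1n\log M_n$ and $W_n\triangleq nR-i_s^n(\Xv,\Yv)=\sum_{i=1}^n\big(R-i_s(X_i,Y_i)\big)$ with $(\Xv,\Yv)\sim Q^n\times W^n$, I would first note from \eqref{eq:SU_RCU_s2} that replacing $M_n-1$ by $e^{nR}$ costs only a multiplicative $1+O(1/M_n)=1+o(1)$, so $\rcus(n,M_n)=(1+o(1))\,\EE[\min\{1,e^{W_n}\}]$. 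Since the $(X_i,Y_i)$ are i.i.d.\ and $\EE[e^{-\rho i_s^n(\Xv,\Yv)}]=e^{-n\Eziid(Q,\rho,s)}$, tilting each pair by $e^{-\rhohat\,i_s(X_i,Y_i)}$ yields a probability law $\PP^{(\rhohat)}$ with $\frac{d\PP}{d\PP^{(\rhohat)}}=e^{-n(\Eziid(Q,\rhohat,s)-\rhohat R)}e^{-\rhohat W_n}$; using $\rhohat\in[0,1]$ to write $e^{-\rhohat w}\min\{1,e^w\}=e^{-\rhohat w}\openone\{w\ge0\}+e^{(1-\rhohat)w}\openone\{w<0\}$, this produces the exact identity $\rcus(n,M_n)=(1+o(1))\,e^{-n(\Eziid(Q,\rhohat,s)-\rhohat R)}\,\beta_n$ with $\beta_n\triangleq\EE^{(\rhohat)}[e^{-\rhohat W_n}\openone\{W_n\ge0\}]+\EE^{(\rhohat)}[e^{(1-\rhohat)W_n}\openone\{W_n<0\}]$. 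The exponential factor already coincides with that in \eqref{eq:SA_RCU_s_hat}, so everything reduces to showing $\beta_n=(1+o(1))\,\alpha_n$.

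Under $\PP^{(\rhohat)}$, $W_n$ is a sum of $n$ i.i.d.\ terms with common mean $c_1(Q,R,s)$ and variance $c_2(Q,R,s)$ (these are exactly the derivative quantities \eqref{eq:SA_c1}--\eqref{eq:SA_c2} read off from the cumulant generating function $-\Eziid(Q,\cdot,s)$), and $c_2>0$ by the hypothesis $U_s(Q)>0$ and property~1 in the list following \eqref{eq:SA_c2}. I would then apply a local limit theorem to $W_n$: in the non-lattice case an Edgeworth-type estimate gives that the density of $W_n$ equals $\phi(z;nc_1,nc_2)$ with relative error $o(1)$ uniformly for $|z-nc_1|=O(\sqrt{n\log n})$, and is super-polynomially small beyond that band; in the lattice case the analogous statement holds with $\PP^{(\rhohat)}[W_n=\gamma_n+ih]=(1+o(1))\,\phi_h(\gamma_n+ih;nc_1,nc_2)$, $\gamma_n$ as in \eqref{eq:SU_alphan}. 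Substituting these into $\beta_n$ and evaluating the resulting Gaussian integrals with \eqref{eq:SU_Integral} (resp.\ the geometric sums with \eqref{eq:SA_SampleSum}) recovers $\alphanl(Q,R,s)$ in \eqref{eq:SA_PreFactorIID} (resp.\ $\alphal(Q,R,s)$ in \eqref{eq:SA_PreFactorIID_L}) to within $1+o(1)$.

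Because the claim is for \emph{any} $M_n\to\infty$, these estimates must be uniform in $R=\frac1n\log M_n$, and I would organize the argument by the sign of $c_1$. When $R$ stays bounded away from $[\Rcrs(Q),I_s(Q)]$, properties~3 and~5 give $\rhohat\in\{0,1\}$ and $c_1\ne0$, so the law of large numbers under $\PP^{(\rhohat)}$ makes one term of $\beta_n$ tend to $1$ and the other exponentially smaller, while correspondingly one factor of $\alpha_n$ tends to $1$ and the other to $0$; here only the central limit theorem (or Berry--Esseen) is needed and $\beta_n/\alpha_n\to1$. The substantive regime is $R\in(\Rcrs(Q),I_s(Q))$, together with rates approaching the endpoints --- in particular the second-order window $R=I_s(Q)+\Theta(n^{-1/2})$ and the moderate-deviations window --- where $c_1$ is $0$ or tends to $0$, both terms of $\beta_n$ are $\Theta(n^{-1/2})$ (up to the lattice factor $h$), and the Gaussian constants genuinely matter; there the local limit estimate has to carry a \emph{relative} $1+o(1)$ error uniformly over the relevant band of tilts, and one also needs the uniform continuity of $\rhohat,c_1,c_2$ in $R$ coming from concavity and smoothness of $\Eziid(Q,\cdot,s)$.

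The main obstacle is exactly this uniform local limit theorem for the tilted sum $W_n$: obtaining a multiplicative (not merely additive) Gaussian --- or lattice-Gaussian --- approximation to its distribution, valid simultaneously in the lattice and non-lattice cases and down to the moderate-deviations scale, and then carrying out the bookkeeping that upgrades ``$\beta_n$ and $\alpha_n$ have the same order'' to ``$\beta_n/\alpha_n\to1$''. Within this, the points needing care are the kink of $\rhohat$ (hence of $c_1$) at the critical rate $\Rcrs(Q)$, the verification that atypical values of $W_n$ contribute negligibly to both $\beta_n$ and $\alpha_n$, and --- if $q(x,y)$ is permitted to vanish --- isolating the event $i_s=-\infty$, on which $\min\{1,e^{W_n}\}=1$ and the tilt must be handled separately. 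Once the change of measure in the first step is in place, the tilting and the identification of the Gaussian integrals are routine; essentially all the work is in the limit theorem and its uniformity.
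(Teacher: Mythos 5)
Your proposal follows the same skeleton as the paper's Appendix~\ref{sub:SU_SADDLEPOINT_PROOF}: replace $M_n-1$ by $e^{nR}$ (costing $1+o(1)$), tilt by $\rhohat$, and reduce the claim to $\beta_n=(1+o(1))\alpha_n$ where $\beta_n$ is exactly the paper's $I_n$ from \eqref{eq:SA_ChgMeasure5a}. So the architecture is right and you have correctly identified the critical step --- the local limit theorem for the tilted sum, uniform over the rate sequence.

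However, there is a genuine gap in how you propose to carry out that step. You state that ``in the non-lattice case an Edgeworth-type estimate gives that the density of $W_n$ equals $\phi(z;nc_1,nc_2)$ with relative error $o(1)$.'' For a DMC, $i_s(X,Y)$ takes finitely many values, so $W_n$ is a discrete random variable and has no density. A discrete non-lattice random variable admits neither a density-level nor a pmf-level local limit theorem in any useful form: its pmf lives on the Minkowski sum of finite sets and can be wildly oscillatory. This is precisely why the paper works with the CDF rather than the density in the non-lattice case, invoking the refined CLT of Feller (Sec.~XVI.4, Thm.~1) to write $\hat F_n(z)=\Phi(z)+G_n(z)+\tilde F_n(z)$ with an \emph{additive} remainder $\tilde F_n(z)=o(n^{-1/2})$ uniformly in $z$, valid for any non-lattice distribution with finite third moment. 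Your proposal also asserts that a \emph{relative} (multiplicative) $1+o(1)$ bound is required ``uniformly over the relevant band of tilts.'' That is stronger than necessary and not clearly available here; the paper only needs the additive $o(n^{-1/2})$ bound because it verifies, regime by regime in Table~\ref{tab:SU_rcusGrowth}, that $\alpha_n=\Omega(n^{-1/2})$ for every limit $R^*$ of the rate sequence --- so the additive remainder, once integrated, is automatically $o(\alpha_n)$ --- while the Edgeworth correction term $I_{2,n}$ is shown to be $O(n^{-1/2})$ or better, hence also $o(\alpha_n)$ when $\alpha_n=\omega(n^{-1/2})$ and $o(\alpha_n)$ by an extra factor of $n^{-1/2}$ when $\alpha_n=\Theta(n^{-1/2})$. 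Without this distributional-function decomposition and the accompanying order-of-magnitude bookkeeping, your density-based argument does not go through in the discrete non-lattice case.
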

\begin{IEEEproof}
    See Appendix \ref{sub:SU_SADDLEPOINT_PROOF}.
\end{IEEEproof}

A heuristic derivation of the non-lattice version of $\rcushat$ was provided
in \cite{Saddlepoint};  Theorem \ref{thm:SA_IID_LIM} provides a formal
derivation, along with a treatment of the lattice case.
It should be noted that the assumption $U_s(Q)>0$ is not restrictive,
since in the case that $U_s(Q)=0$ the argument to the expectation in \eqref{eq:SU_RCU_s2}
is deterministic, and hence $\rcus$ can easily be computed exactly.

In the case that the rate $R$ is fixed, simpler asymptotic expressions
can be obtained. In Appendix \ref{sub:SU_ASYMPTOTIC_SA}, we prove the following 
(here $f_n \asymp g_n$ denotes the relation $\lim_{n\to\infty}\frac{f_n}{g_n}=1$):
\begin{itemize}
  \item If $R\in[0,\Rcrs(Q))$ or $R>I_s(Q)$, then
        \begin{equation}
            \alpha_n(Q,R,s) \asymp 1. \label{eq:SU_BelowRcr}
        \end{equation}
  \item If $R=\Rcrs(Q)$ or $R=I_s(Q)$, then
        \begin{equation}
            \alpha_n(Q,R,s) \asymp \frac{1}{2}.  \label{eq:SU_EqualRcr}
        \end{equation}
  \item If $R\in(\Rcrs(Q),I_s(Q))$, then
        \begin{align}
            \alphanl(Q,R,s)  &\asymp \frac{1}{\sqrt{2\pi nc_2}\rhohat(1-\rhohat)} \label{eq:SU_AboveRcr} \\
            \alphal(Q,R,s) &\asymp \frac{h}{\sqrt{2\pi nc_2}} \nonumber \\
                & \hspace*{-18mm} \times \Bigg(e^{-\rhohat\gamma_n}\bigg(\frac{1}{1-e^{-\rhohat h}}\bigg) + e^{(1-\rhohat)\gamma_n}\bigg(\frac{e^{-(1-\rhohat)h}}{1-e^{-(1-\rhohat)h}}\bigg)\Bigg). \label{eq:SU_AboveRcr_L}
        \end{align}
\end{itemize}
The asymptotic prefactors in \eqref{eq:SU_BelowRcr}--\eqref{eq:SU_AboveRcr_L}
are related to the problem of \emph{exact asymptotics} in the statistics 
literature, which seeks to characterize the subexponential prefactor for
probabilities that decay at an exponential rate (e.g. see \cite{BahadurRao}).
These prefactors are useful in gaining further insight into the behavior of the
error probability compared to the error exponent alone.  However, there
is a notable limitation which is best demonstrated here using \eqref{eq:SU_AboveRcr}.
The right-hand side of \eqref{eq:SU_AboveRcr} characterizes the 
prefactor to within a multiplicative $1+o(1)$ term for a given rate, but it diverges as
$\rhohat\to0$ or $\rhohat\to1$.  Thus, unless $n$ is large, the
estimate obtained by omitting the higher-order terms is inaccurate
for rates slightly above $\Rcrs(Q)$ or slightly below $I_s(Q)$.

In contrast, the right-hand side of \eqref{eq:SA_PreFactorIID} (and 
similarly \eqref{eq:SA_PreFactorIID_L}) remains bounded
for all $\rhohat\in[0,1]$.  Furthermore, as Theorem \ref{thm:SA_IID_LIM}
shows, this expression characterizes the true behavior of $\rcus$ to within a 
multiplicative $1+o(1)$ term not only for fixed rates, but also
when the rate varies with the block length.  Thus, it remains suitable
for characterizing the behavior of $\rcus$ even when the rate approaches
$\Rcrs(Q)$ or $I_s(Q)$.  In particular, this implies that $\rcushat$ 
gives the correct second-order asymptotics of the rate for a given
target error probability (see \eqref{eq:SU_DispersionIID}).  More
precisely, the proof of Theorem \ref{thm:SA_IID_LIM} reveals that
$\rcushat = \rcus + O\big(\frac{1}{\sqrt{n}}\big)$, which implies
(via a Taylor expansion of $\Qsf^{-1}$ in \eqref{eq:SU_DispersionIID})
that the two yield the same asymptotics for a given error probability
up to the $O(1)$ term.

\subsection{Approximation for $\rcu(n,M)$} \label{sub:SU_PREFACTOR_RCU}

In the proof of Theorem \ref{thm:SU_FiniteUB}, we obtained $\rcus$ from
$\rcu$ using Markov's inequality.  In this subsection we will see that, under
some technical assumptions, a more refined analysis yields a bound which
is tighter than $\rcus$, but still amenable to the techniques of the previous
subsection.   

\subsubsection{Technical Assumptions}

Defining the set
\begin{multline}
    \Yc_1(Q) \triangleq \Big\{ y\,:\, q(x,y) \ne q(\overline{x},y)\text{ for some } \\
         x,\xbar\text{ such that }Q(x)Q(\xbar)W(y|x)W(y|\xbar)>0\Big\}, \label{eq:REF_SetY1}
\end{multline}
the technical assumptions on $(W,q,Q)$ are as follows:
\begin{equation}
    q(x,y)>0 \iff W(y|x)>0 \label{eq:REF_Assumption1}
\end{equation}
\begin{equation}
    \Yc_1(Q) \ne \emptyset.\label{eq:REF_Assumption2}
\end{equation}
When $q(x,y)=W(y|x)$, \eqref{eq:REF_Assumption1} is trivial,
and \eqref{eq:REF_Assumption2} is the \emph{non-singularity}
condition of \cite{RefinementJournal}. A notable example where
this condition fails is the binary erasure channel (BEC) with $Q=\big(\frac{1}{2},\frac{1}{2}\big)$.
It should be noted that if \eqref{eq:REF_Assumption1} holds but 
\eqref{eq:REF_Assumption2} fails then we  in fact have $\rcu=\rcus$
for any $s>0$, and hence $\rcushat$ also approximates $\rcu$. 
This can be seen by noting that $\rcus$ is obtained from $\rcu$ using the inequality
$\openone\{\qbar \ge q \} \le \big(\frac{\qbar}{q}\big)^{s}$, which holds with equality
when $\frac{\qbar}{q}\in\{0,1\}$.

\subsubsection{Definitions}

Along with the definitions in Section \ref{sec:SU_REF_DEFS}, we will make 
use of the reverse conditional distribution 
\begin{align}
    \Ptilde_s(x|y) & \defeq\frac{Q(x)q(x,y)^{s}}{\sum_{\xbar}Q(\xbar)q(\xbar,y)^{s}} \label{eq:REF_Vs},
\end{align}
the joint tilted distribution  
\begin{equation}
    P_{\rhohat,s}^{*}(x,y)=\frac{Q(x)W(y|x)e^{-\rhohat i_s(x,y)}}{\sum_{x^{\prime},y^{\prime}}Q(x^{\prime})W(y^{\prime}|x^{\prime})e^{-\rhohat i_s(x',y')}}, \label{eq:SU_P*XY}
\end{equation}
and its $Y$-marginal $P_{\rhohat,s}^{*}(y)$, and the conditional variance 
\begin{equation}
    c_3(Q,R,s) \triangleq \EE\Big[ \var\big[i_s(X^{*}_s,Y_s^{*}) \big| Y_s^{*} \big] \Big], \label{eq:REF_sigma_s}
\end{equation}
where $(X_s^*,Y_s^*) \sim P_{\rhohat,s}^{*}(y)\Ptilde_s(x|y)$.  Furthermore, we define
\begin{equation}
    \Ic_s \triangleq \Big\{ i_s(x,y) \,:\, Q(x)W(y|x)>0,y\in\Yc_1(Q)\Big\}
\end{equation}
and let
\begin{equation}
    \psi_{s} \triangleq 
        \begin{cases}
            1 & \Ic_s\text{ does not lie on a lattice} \\
            \frac{\hover}{1-e^{-\hover}} & \Ic_s\text{ lies on a lattice with span }\hover.
        \end{cases} \label{eq:SA_Psi_s}
\end{equation}
The set $\Ic_s$ is the support of a random variable which
will appear in the analysis of the inner probability in \eqref{eq:SU_RCU}.
While $\hover$ in \eqref{eq:SA_Psi_s} can differ from $h$ (the span of 
$i_s(X,Y)$) in general, the two coincide whenever $\Yc_1(Q)=\Yc$.

We claim that the assumptions in
\eqref{eq:REF_Assumption1}--\eqref{eq:REF_Assumption2} imply that $c_3 > 0$
for any $R$ and $s>0$.  To see this, we write
\begin{align} 
     & \var_{\Ptilde_s(\cdot|y)}[i_{s}(X,y)]=0 \nonumber \\
     & \iff\log\frac{\Ptilde_s(x|y)}{Q(x)}\text{ is independent of } x\text{ where }\Ptilde_s(x|y)>0 \label{eq:REF_VarProof2}\\
     & \iff q(x,y)\text{ is independent of }x\text{ where }Q(x)q(x,y)>0 \label{eq:REF_VarProof3}\\
     & \iff y\notin\Yc_1(Q),\label{eq:REF_VarProof4}
\end{align}
where \eqref{eq:REF_VarProof2} and \eqref{eq:REF_VarProof3} follow from 
the definition of $\Ptilde_s$ in \eqref{eq:REF_Vs} and the
assumption $s>0$, and \eqref{eq:REF_VarProof4} follows from \eqref{eq:REF_Assumption1}
and the definition of $\Yc_1(Q)$ in \eqref{eq:REF_SetY1}.  Using \eqref{eq:SU_is},
\eqref{eq:REF_Assumption1} and \eqref{eq:SU_P*XY}, we have
\begin{equation}
    P_{\rhohat,s}^{*}(y) > 0 \iff \sum_{x}Q(x)W(y|x)>0.
\end{equation}
Thus, from \eqref{eq:REF_Assumption2}, we have $P^{*}_{\rhohat,s}(y) > 0$ 
for some $y \in \Yc_1(Q)$, which (along with \eqref{eq:REF_VarProof4}) proves that $c_3 > 0$.

\subsubsection{Main Result}

The main result of this subsection is written in terms of an approximation of the form
\begin{equation}
    \rcusshat(n,M)\triangleq\beta_n(Q,R,s)e^{-n(\Eziid(Q,\rhohat,s)-\rhohat R)}. \label{eq:SA_RCU_hat}
\end{equation}
Analogously to the previous subsection, 
we treat the lattice and non-lattice cases separately, writing
\begin{equation}
    \beta_n \triangleq \begin{cases} \betanl & i_s(X,Y)\text{ is non-lattice} \\ \betal & R-i_s(X,Y)\text{ has offset $\gamma$ and span $h$}, \end{cases} 
\end{equation}
where
\begin{align}
    &\betanl(Q,R,s) \triangleq \int_{\log\frac{\sqrt{2\pi nc_3}}{\psi_s}}^{\infty}e^{-\rhohat z}\phi(z;nc_1,nc_2)dz \nonumber \\
        & + \frac{\psi_s}{\sqrt{2\pi nc_3}}\int_{-\infty}^{\log\frac{\sqrt{2\pi nc_3}}{\psi_s}}e^{(1-\rhohat)z}\phi(z;nc_1,nc_2)dz \label{eq:SA_PreFactorRCU} \\
    &\betal(Q,R,s) \triangleq \sum_{i=i^{*}}^\infty e^{-\rhohat(\gamma_n+ih)}\phi_h(\gamma_n+ih;nc_1,nc_2) \nonumber \\
        &  + \frac{\psi_s}{\sqrt{2\pi nc_3}}\sum_{i=-\infty}^{i^{*}-1}e^{(1-\rhohat)(\gamma_n+ih)}\phi_h(\gamma_n+ih;nc_1,nc_2), \label{eq:SA_PreFactorRCU_L}
\end{align} 
and where in \eqref{eq:SA_PreFactorRCU_L} we use $\gamma_n$ in \eqref{eq:SU_alphan} along with
\begin{equation}
    i^{*} \triangleq \min\bigg\{ i\in\ZZ \,:\, \gamma_n + ih \ge \log\frac{\sqrt{2\pi nc_3}}{\psi_s} \bigg\}. \label{eq:SU_i*}
\end{equation}

\begin{thm} \label{thm:SA_SaddleRCU}
    Under the setup of Theorem \ref{thm:SA_IID_LIM} and the 
    assumptions in \eqref{eq:REF_Assumption1}--\eqref{eq:REF_Assumption2}, 
    we have for any $s>0$ that
    \begin{equation}
        \rcu(n,M_n) \le \rcuss(n,M_n)(1+o(1)), \label{eq:SU_PfRCU_1}
    \end{equation}
    where
    \begin{equation}
        \rcuss(n,M) \triangleq \EE\bigg[\min\bigg\{1,\frac{M\psi_s}{\sqrt{2\pi nc_3}}e^{-i_s^n(\Xv,\Yv)} \bigg\}\bigg]. \label{eq:SU_RCU_s_star}
    \end{equation}
    Furthermore, we have
    \begin{equation}
        \lim_{n\to\infty} \frac{\rcusshat(n,M_n)}{\rcuss(n,M_n)} = 1. \label{eq:SU_SaddleRCU}
    \end{equation} 
\end{thm}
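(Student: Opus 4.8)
The plan is to prove the two displayed assertions in turn; essentially all the work is in the first, while the second follows from Theorem~\ref{thm:SA_IID_LIM} after a change of variables. For the upper bound $\rcu \le \rcuss(1+o(1))$, I would first rewrite $\rcu$ in a form amenable to a refined analysis. Since the decoding metric enters $\rcu$ only through the ordering of the $q^n$ values, and the denominator in \eqref{eq:SU_is} depends on $\yv$ alone, one has $q^{n}(\xvbar,\yv)\ge q^{n}(\xv,\yv)\iff i_s^n(\xvbar,\yv)\ge i_s^n(\xv,\yv)$, so that $\rcu(n,M)=\EE\big[\min\{1,(M-1)G_n(\Xv,\Yv)\}\big]$ with $G_n(\xv,\yv)\defeq\PP\big[i_s^n(\Xvbar,\yv)\ge i_s^n(\xv,\yv)\big]$ and $\Xvbar\sim Q^n$. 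The bound $\rcus$ arises from the trivial estimate $G_n(\xv,\yv)\le e^{-i_s^n(\xv,\yv)}$ (Markov, equivalently a Chernoff bound with tilting parameter one, which is admissible because $\EE_Q[e^{i_s(\Xbar,y)}]=1$ by \eqref{eq:SU_is}). The key step is to sharpen this to $G_n(\xv,\yv)\le \frac{\psi_s(1+o(1))}{\sqrt{2\pi nc_3}}e^{-i_s^n(\xv,\yv)}$, uniformly over a set $\Gc_n$ of ``typical'' output sequences $\yv$.

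To obtain this sharpened bound I would tilt $\Xvbar$ coordinatewise by parameter one, replacing $Q(\cdot)$ by $\Ptilde_s(\cdot|y_i)$ in coordinate $i$ (see \eqref{eq:REF_Vs}); writing $t=i_s^n(\xv,\yv)$ this gives $G_n(\xv,\yv)e^{t}=\EE_{\Ptilde_s}\big[e^{-(i_s^n(\Xvbar,\yv)-t)}\openone\{i_s^n(\Xvbar,\yv)\ge t\}\big]$, where $i_s^n(\Xvbar,\yv)$ is now a sum of independent, non-identically distributed, bounded variables with variance $\sum_i\var_{\Ptilde_s(\cdot|y_i)}[i_s(X,y_i)]$. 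Taking $\Gc_n$ to be the set of $\yv$ for which this variance is at least $n(c_3-\epsilon_n)$ for a slowly vanishing $\epsilon_n$ with $n\epsilon_n^2\to\infty$ (and, in the lattice case, for which the span of $i_s^n(\Xvbar,\yv)$ equals $\hover$), a uniform local limit theorem controls the law of $i_s^n(\Xvbar,\yv)$: in the lattice case its point masses are at most $\frac{\hover(1+o(1))}{\sqrt{2\pi nc_3}}$, and summing the geometric series $\sum_{k\ge0}e^{-k\hover}$ produces exactly $\psi_s=\hover/(1-e^{-\hover})$; in the non-lattice case one compares $\EE_{\Ptilde_s}[\,\cdot\,]$ with the corresponding Gaussian integral via a refined Berry--Esseen/Edgeworth expansion, which is at most $\frac{1+o(1)}{\sqrt{2\pi nc_3}}$ because a Gaussian density is bounded by its peak and $z\mapsto e^{-(z-t)}\openone\{z\ge t\}$ has unit integral, so $\psi_s=1$. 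Here it is essential that $c_3>0$, which is guaranteed under \eqref{eq:REF_Assumption1}--\eqref{eq:REF_Assumption2} by the computation preceding the theorem.

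Given the sharpened bound, I would finish Part~1 by splitting the outer expectation over $\{\Yv\in\Gc_n\}$ and its complement. On $\Gc_n$, the bound on $G_n$ together with $\min\{1,(1+o(1))x\}\le(1+o(1))\min\{1,x\}$ yields $\EE[\min\{1,(M-1)G_n\}\openone\{\Yv\in\Gc_n\}]\le(1+o(1))\rcuss(n,M)$. For $\Yv\notin\Gc_n$ one reverts to $G_n\le\min\{1,e^{-i_s^n}\}$: such $\yv$ have empirical distribution $\Omega(\sqrt{\epsilon_n})$-far from the tilted output distribution $P_{\rhohat,s}^{*}(y)$ that dominates $\rcus$ and hence $\rcuss$, so by strong convexity of the associated rate function their total contribution is at most $e^{-\Omega(n\epsilon_n)}$ times the dominant term $e^{-n(\Eziid(Q,\rhohat,s)-\rhohat R)}$, which (since $\rcuss$ is at least a polynomial factor in $n$ times this dominant term) is $o(\rcuss)$. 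Adding the two pieces gives $\rcu(n,M_n)\le\rcuss(n,M_n)(1+o(1))$.

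For the saddlepoint approximation of $\rcuss$, compare \eqref{eq:SU_RCU_s_star} with \eqref{eq:SU_RCU_s2}: up to the negligible distinction between $M$ and $M-1$, $\rcuss(n,M)$ equals $\rcus$ evaluated at the codeword count $M\psi_s/\sqrt{2\pi nc_3}$, whose rate differs from $R=\frac1n\log M$ by $O(\tfrac{\log n}{n})$ because $c_3$ is a constant. Applying Theorem~\ref{thm:SA_IID_LIM} with this modified count gives $\rcuss(n,M_n)\asymp\rcushat\big(n,M_n\psi_s/\sqrt{2\pi nc_3}\big)$, and it remains to identify the right-hand side with $\rcusshat(n,M_n)$. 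This is a direct computation: the $O(\tfrac{\log n}{n})$ rate shift perturbs $\rhohat$, $c_1$, $c_2$ by $O(\tfrac{\log n}{n})$ (using their smoothness, and continuity of $\rhohat$ at $\Rcrs(Q)$ and $I_s(Q)$), so the exponent changes only by an $O(\log n)$ additive term, which combines with the translation of the Gaussian mean by $\log\tfrac{\sqrt{2\pi nc_3}}{\psi_s}$ to turn $\alphanl$ (integrated from $0$) into $\betanl$ (integrated from $\log\tfrac{\sqrt{2\pi nc_3}}{\psi_s}$, with the extra factor $\tfrac{\psi_s}{\sqrt{2\pi nc_3}}$ on the low-$z$ part), and likewise $\alphal$ into $\betal$ with $i^{*}$ in place of $i=0$; hence $\lim_n\rcusshat(n,M_n)/\rcuss(n,M_n)=1$. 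The main obstacle is the uniform sharp local limit estimate for $G_n$ in Part~1: the summands are non-identically distributed, uniformity over $\Gc_n$ must be controlled, and obtaining the precise constant $\psi_s$ (rather than an unspecified $O(1)$ constant) forces the Edgeworth-type refinement in the non-lattice case and a careful treatment of the lattice span in the lattice case; the accompanying argument that atypical $\yv$ contribute $o(\rcuss)$, which rests on identifying $P_{\rhohat,s}^{*}$ as the dominating output type and on a quantitative strongly convex exponent gap, is the secondary technical point.
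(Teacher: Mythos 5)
Your Part~1 argument follows the same skeleton as the paper's: rewrite the inner probability $\PP[i_s^n(\Xvbar,\yv)\ge t]$ via a coordinatewise change of measure to $\Ptilde_s(\cdot|y_i)$ (the paper's \eqref{eq:iidDeriv0}--\eqref{eq:iidDeriv1}), sharpen the Markov/Chernoff bound to $\frac{\psi_s(1+o(1))}{\sqrt{2\pi nc_3}}e^{-t}$ via a uniform local limit theorem and a geometric-series summation (the paper's Lemma~\ref{lem:REF_Lem20}), and control the atypical outputs by an exponent-gap argument (the paper's Lemma~\ref{lem:REF_SetF}, part~2). The paper defines its typical set through the empirical output distribution being $\delta$-close to $P^{*}_{\rhohat,s}$ with $\delta$ fixed (taking $\delta\to0$ only at the end), whereas you define $\Gc_n$ directly by a vanishing variance tolerance $\epsilon_n$; that is a minor variation and works, though note your deviation bound is off by a power — a variance deviation of $\epsilon_n$ gives empirical distance $\Omega(\epsilon_n)$ (variance is Lipschitz in the type for finite alphabets), hence an exponent gap $\Omega(\epsilon_n^2)$, not the $\Omega(\sqrt{\epsilon_n})$/$\Omega(\epsilon_n)$ pair you wrote; your stated requirement $n\epsilon_n^2\to\infty$ is then the correct one and the conclusion survives.

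Your Part~2 takes a genuinely different route: rather than re-deriving the tilted-integral representation with the shifted threshold $\log\frac{\sqrt{2\pi nc_3}}{\psi_s}$ (as the paper does), you observe $\rcuss(n,M)=\rcus\big(n,\tfrac{M\psi_s}{\sqrt{2\pi nc_3}}+1\big)$ and invoke Theorem~\ref{thm:SA_IID_LIM} at the shifted codeword count. This is a nice reduction but has two gaps. First, Theorem~\ref{thm:SA_IID_LIM} requires the argument to tend to infinity, so your reduction breaks when $M_n\to\infty$ but $M_n=O(\sqrt{n})$ — a regime covered by the theorem statement (which inherits only $M_n\to\infty$). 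In that regime $R_n\to0$, $\rhohat\to1$, $c_1<0$, $\min\{1,\cdot\}$ is essentially never saturated, and the result must be verified directly. Second, the identification $\rcushat(n,M'_n)\asymp\rcusshat(n,M_n)$ is more delicate than ``a direct computation'': the two prefactors $\alpha_n$ and $\beta_n$ are built around different tilting parameters ($\rhohat(R')$ versus $\rhohat(R)$), and $\rhohat$ and $c_1$ are only piecewise smooth in $R$ with kinks at $\Rcrs(Q)$ and $I_s(Q)$; one must check that the $O(\log n/n)$ rate shift does not change which piece is active, or that the expressions agree across the transition. The paper sidesteps both issues by keeping the tilting parameter at $\rhohat(R)$ throughout and only moving the threshold inside the integral $I_n$, so Theorem~\ref{thm:SA_IID_LIM}'s machinery (Berry--Esseen for the tilted CDF) applies verbatim for all $M_n\to\infty$.
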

\begin{IEEEproof}
    See Appendix \ref{sub:SA_RCU_PROOF}.
\end{IEEEproof}

When the rate does not vary with $n$, we can apply the same arguments
as those given in Appendix \ref{sub:SU_ASYMPTOTIC_SA} to obtain the 
following analogues of \eqref{eq:SU_BelowRcr}--\eqref{eq:SU_AboveRcr_L}:
\begin{itemize}
  \item If $R\in[0,\Rcrs(Q))$, then
        \begin{align}
            \beta_n(Q,R,s) &\asymp \frac{\psi_s}{\sqrt{2\pi nc_3}}, \label{eq:SU_BelowRcr2}
        \end{align}
        and similarly for $R=\Rcrs(Q)$ after multiplying the right-hand side by $\frac{1}{2}$.
  \item If $R\in(\Rcrs(Q),I_s(Q))$, then
        \begin{align}
            \betanl(Q,R,s)  &\asymp \bigg(\frac{\psi_s}{\sqrt{2\pi nc_3}}\bigg)^{\rhohat} \frac{1}{\sqrt{2\pi nc_2}\rhohat(1-\rhohat)} \label{eq:SU_AboveRcr2} \\
            \betal(Q,R,s) &\asymp \bigg(\frac{\psi_s}{\sqrt{2\pi nc_3}}\bigg)^{\rhohat} \frac{h}{\sqrt{2\pi nc_2}} \nonumber \\ 
                & \hspace*{-20mm} \times\Bigg(e^{-\rhohat\gamma'_n}\bigg(\frac{1}{1-e^{-\rhohat h}}\bigg)+e^{(1-\rhohat)\gamma'_n}\bigg(\frac{e^{-(1-\rhohat)h}}{1-e^{-(1-\rhohat)h}}\bigg)\Bigg), \label{eq:SU_AboveRcr_L2}
        \end{align}
        where $\gamma'_n \triangleq \gamma_n + i^{*}h - \log\frac{\sqrt{2\pi nc_3}}{\psi_s}\in[0,h)$ (see \eqref{eq:SU_i*}).
  \item For $R \ge I_s(Q)$, the asymptotics of $\beta_n$ coincide with 
        those of $\alpha_n$ (see \eqref{eq:SU_BelowRcr}--\eqref{eq:SU_EqualRcr}).
\end{itemize} 
When combined with Theorem \ref{thm:SA_SaddleRCU}, these expansions 
provide an alternative proof of the main result of \cite{RefinementJournal},
along with a characterization of the multiplicative $\Theta(1)$ terms which
were left unspecified in \cite{RefinementJournal}.  A simpler version
of the analysis in this paper can also be used to obtain the prefactors
with unspecified constants; see \cite{PaperRefinement} for details.

Analogously to the previous section, in the regime of fixed error 
probability we can write \eqref{eq:SU_SaddleRCU} more precisely as  
$\rcusshat = \rcuss + O\big(\frac{1}{\sqrt n}\big)$, implying that
the asymptotic expansions of the rates corresponding to $\rcuss$ and $\rcusshat$ 
coincide up to the $O(1)$ term.  From the analysis given in \cite[Sec. 3.4.5]{FiniteThesis},
$\rcuss$ yields an expansion of the form \eqref{eq:SU_DispersionIID} 
with the $O(1)$ term replaced by $\frac{1}{2}\log{n}+O(1)$. 
It follows that the same is true of $\rcusshat$.
 
\subsection{Numerical Examples} \label{sub:SA_NUMERICAL}

Here we provide numerical examples to demonstrate the utility of the 
saddlepoint approximations given in this section.  
Along with $\rcushat$ and $\rcusshat$, we consider (i) the normal
approximation, obtained by omitted the remainder term in \eqref{eq:SU_DispersionIID},
(ii) the error exponent approximation $\pe\approx e^{-n\Eriid(Q,R)}$, and (iii)
exact asymptotics approximations, obtained by ignoring the implicit 
$1+o(1)$ terms in \eqref{eq:SU_AboveRcr_L} and \eqref{eq:SU_AboveRcr_L2}.
We use the lattice-type versions of the approximations, since we consider examples
in which $i_s(X,Y)$ is a lattice variable.  We observed no significant difference
in the accuracy of each approximation in similar non-lattice examples.

We consider the example given in Section \ref{sub:SU_NUMERICAL},
using the parameters $\delta_{0}=0.01$, $\delta_{1}=0.05$, $\delta_{2}=0.25$,
and $Q=(\frac{1}{3},\frac{1}{3},\frac{1}{3})$. For the
saddlepoint approximations, we approximate the summations of the form \eqref{eq:SA_SampleSum}
by keeping the 1000 terms\footnote{The plots remained the same when this value was increased or decreased by an order of magnitude.} whose indices are closest to $-\frac{b_1}{2b_2}$.  
We choose the free parameter $s$
to be the value which maximizes the error exponent at each rate.  For the 
normal approximation, we choose to $s$ achieve the GMI in \eqref{eq:INTR_RateGMI}.
Defining $\Rcr(Q)$ to be the supremum of all rates such that $\rhohat=1$ when
$s$ is optimized, we have $\GMI(Q)=0.643$ and $\Rcr(Q)=0.185$ bits/use.

In Figure \ref{fig:SA_60_IID}, we plot the error probability
as a function of the rate with $n=60$. Despite the fact that 
the block length is small, we observe that $\rcus$
and $\rcusshat$ are indistinguishable at all rates.
Similarly, the gap from $\rcu$ to $\rcusshat$ is small. 
Consistent with the fact that Theorem \ref{thm:SA_SaddleRCU}
gives an asymptotic upper bound on $\rcu$ rather than an 
asymptotic equality, $\rcusshat$ lies slightly above $\rcu$ at 
low rates.
The error exponent approximation is close to $\rcus$
at low rates, but it is pessimistic at high rates. The normal approximation 
behaves somewhat similarly to $\rcus$, but it is less  
precise than the saddlepoint approximation, particularly at low rates.
 
\begin{figure}
    \begin{centering}
        \includegraphics[width=1\columnwidth]{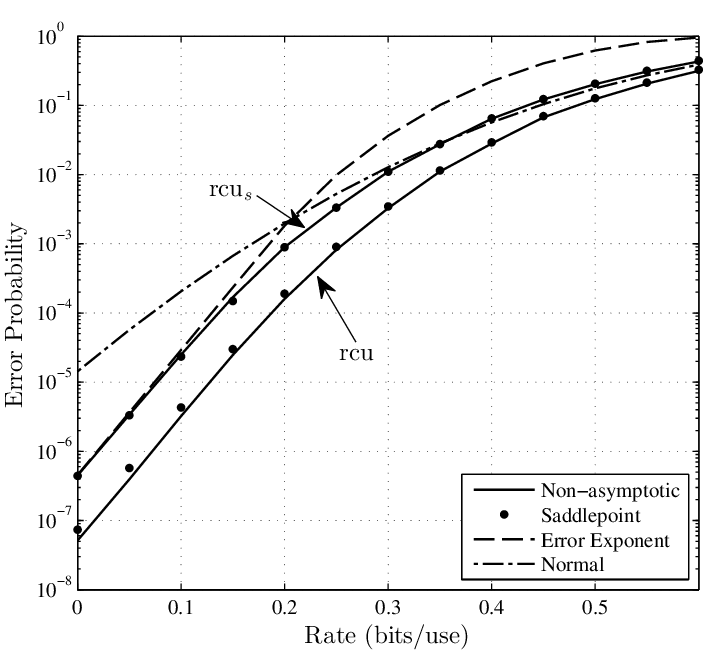} 
        \par
    \end{centering}
 
    \caption{i.i.d. random-coding bounds for the channel
             defined in \eqref{eq:SU_MatrixW} with minimum Hamming distance decoding.
             The parameters are $n=60$, $\delta_{0}=0.01$, $\delta_{1}=0.05$, $\delta_{2}=0.25$
             and $Q=(\frac{1}{3},\frac{1}{3},\frac{1}{3})$. \label{fig:SA_60_IID}}
\end{figure}

\begin{figure}
    \begin{centering}
        \includegraphics[width=1\columnwidth]{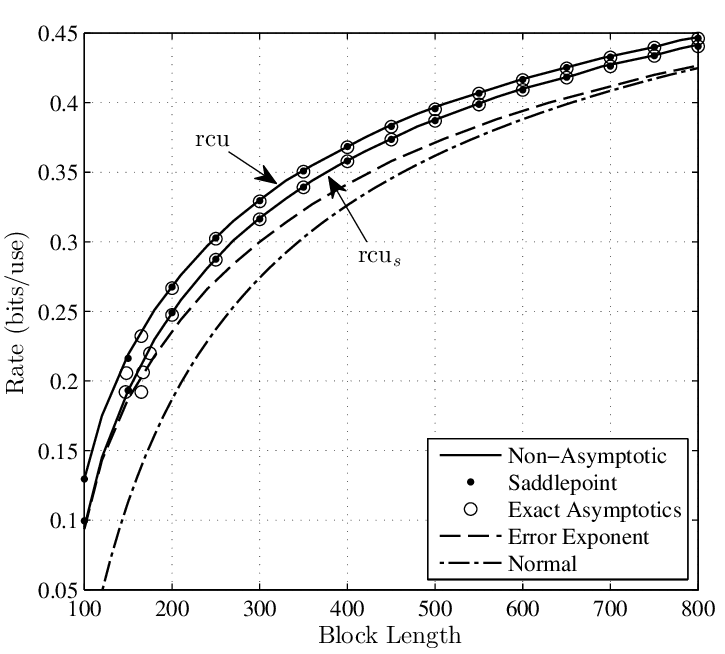} 
        \par
    \end{centering}
 
    \caption{Rate required to achieve a target error probability $\epsilon$ for
        the channel defined in \eqref{eq:SU_MatrixW} with ML decoding. The parameters are $\epsilon=10^{-8}$,
        $\delta_{0}=\delta_{1}=\delta_{2}=\delta=0.1$ and $Q=(\frac{1}{3},\frac{1}{3},\frac{1}{3})$.
        \label{fig:SA_IID_R}}
\end{figure}

To facilitate the computation of $\rcu$ and $\rcus$
at larger block lengths, we consider the symmetric setup of $\delta_{0}=\delta_{1}=\delta_{2}=\delta=0.1$
and $Q=\big(\frac{1}{3},\frac{1}{3},\frac{1}{3}\big)$.
Under these parameters, we have $I(X;Y)=0.633$ and $\Rcr(Q)=0.192$
bits/use. In Figure \ref{fig:SA_IID_R}, we plot the rate required
for each random-coding bound and approximation to achieve a given
error probability $\epsilon=10^{-8}$, as a function of $n$. Once
again, $\rcushat$ is indistinguishable from $\rcus$, 
and similarly for $\rcusshat$ and $\rcu$.
The error exponent approximation yields similar 
behavior to $\rcus$ at small block lengths, but the gap widens at
larger block lengths.  The exact asymptotics approximations are
accurate other than a divergence near the critical rate, which
is to be expected from the discussion in Section \ref{sub:SU_PREFACTOR_S}.
In contrast to similar plots with larger target error probabilities
(e.g. \cite[Fig. 8]{Finite}), the normal approximation 
is inaccurate over a wide range of rates.

\section{Discussion and Conclusion} \label{sec:CONCLUSION}

We have introduced a cost-constrained ensemble with multiple auxiliary
costs which yields similar performance gains to constant-composition coding, 
while remaining applicable in the case of infinite or continuous alphabets.  
We have studied the number of auxiliary costs required to match the performance
of the constant-composition ensemble, and shown that the number can be reduced 
when the input distribution or decoding metric is optimized.  Using
the saddlepoint approximation, refined asymptotic estimates have been given
for the i.i.d. ensemble which unify the regimes of error exponents,
second-order rates and moderate deviations, and provide accurate 
approximations of the random-coding bounds.

\subsection*{Extension to Channels with Input Constraints}

Suppose that each codeword $\xv$ is constrained
to satisfy $\frac{1}{n}\sum_{i=1}^{n}c(x_{i})\le\Gamma$
for some (system) cost function $c(\cdot)$. The i.i.d. ensemble is
no longer suitable, since in all non-trivial cases it has a positive
probability of producing codewords which violate the constraint. On
the other hand, the results for the constant-composition ensemble
remain unchanged provided that $Q$ itself satisfies the cost constraint,
i.e. $\sum_{x}Q(x)c(x)\le\Gamma$.

For the cost-constrained ensemble, the extension is less trivial but
still straightforward. The main change required is a modification
of the definition of $\Dc_{n}$ in \eqref{eq:SU_SetDn} to
include a constraint on the quantity $\frac{1}{n}\sum_{i=1}^{n}c(x_{i})$.
Unlike the auxiliary costs in \eqref{eq:SU_SetDn}, where the sample
mean can be above or below the true mean, the system cost of
each codeword is constrained to be less than or equal to its mean.
That is, the additional constraint is given by
\begin{equation}
    \frac{1}{n}\sum_{i=1}^{n}c(x_{i})\le\phi_{c}\triangleq\sum_{x}Q(x)c(x),
\end{equation}
or similarly with both upper and lower bounds (e.g. $-\frac{\delta}{n}\le\frac{1}{n}\sum_{i=1}^{n}c(x_{i})-\phi_{c}\le0$). 
Using this modified definition of $\Dc_{n}$,
one can prove the subexponential behavior of $\mu_{n}$ in Proposition
\ref{prop:SU_SubExpCost} provided that $Q$ is such that $\phi_{c}\le\Gamma$, and the 
exponents and second-order rates for the cost-constrained ensemble remain valid 
under any such $Q$.

\appendix

\subsection{Proof of Theorem \ref{thm:SU_EnsembleExps}} \label{sub:SU_PRIMAL_PROOF}

The proof is similar to that of Gallager for the constant-composition ensemble \cite{GallagerCC},
so we omit some details.  The codeword distribution in \eqref{eq:SU_Q_Multi} can be written as 
\begin{equation}
    P_{\Xv}(\xv)=\frac{1}{\mu_{n}}\prod_{i=1}^{n}Q(x_{i})\openone\big\{\hat{P}_{\xv}\in\Gc_{n}\big\},\label{eq:SU_Q_General}
\end{equation}
where $\hat{P}_{\xv}$ is the empirical distribution (type) of $\xv$, and
$\Gc_{n}$ is the set of types corresponding to sequences $\xv\in\Dc_n$ 
(see \eqref{eq:SU_SetDn}). We define the sets 
\begin{equation}
    \Sc_{n}(\Gc_{n})\triangleq\big\{ P_{XY}\in\Pc_{n}(\Xc\times\Yc)\,:\, P_{X}\in\Gc_{n}\big\}\label{eq:SU_SetS_Gen}
\end{equation}
\begin{multline}
    \mathcal{T}_{n}(P_{XY},\Gc_{n})\triangleq\big\{\Ptilde_{XY}\in\Pc_{n}(\Xc\times\Yc)\,:\, \Ptilde_{X}\in\Gc_{n}, \\ \Ptilde_{Y}=P_{Y},\EE_{\Ptilde}[\log q(X,Y)]\ge\EE_{P}[\log q(X,Y)]\big\}.\label{eq:SU_SetT_Gen}
\end{multline}
We have from Theorem \ref{thm:SU_FiniteUB} that $\pebar\doteq\rcu$.
Expanding $\rcu$ in terms of types, we obtain 
\begin{multline}
    \pebar\doteq\sum_{P_{XY}\in\Sc_{n}(\Gc_{n})}\PP\big[(\Xv,\Yv)\in T^{n}(P_{XY})\big]\min\bigg\{1, \\
    (M-1)\sum_{\Ptilde_{XY}\in\mathcal{T}_{n}(P_{XY},\Gc_{n})}\PP\big[(\Xvbar,\yv)\in T^{n}(\Ptilde_{XY})\big]\bigg\},\label{eq:SU_ThmGen_6}
\end{multline}
where $\yv$ denotes an arbitrary sequence with type $P_{Y}$.

From Proposition \ref{prop:SU_SubExpCost}, the normalizing constant
in \eqref{eq:SU_Q_General} satisfies $\mu_{n}\doteq1$, and thus
we can safely proceed from \eqref{eq:SU_ThmGen_6} as if the codeword
distribution were $P_{\Xv}=Q^n$. 
Using the property of types in \cite[Eq. (18)]{GallagerCC}, it follows that
the two probabilities in \eqref{eq:SU_ThmGen_6} behave as
$e^{-n D(P_{XY}\|Q\times W)}$ and $e^{-n D(\Ptilde_{XY}\|Q\times P_{Y})}$
respectively. Combining this with the fact that the number 
of joint types is polynomial in $n$, we obtain
$\pebar\doteq e^{-nE_{r,n}(Q,R,\Gc_{n})}\label{eq:SU_pe_Gen}$, where 
\begin{multline}
    E_{r,n}(Q,R,\Gc_{n})\triangleq\min_{P_{XY}\in\Sc_{n}(\Gc_{n})}\min_{\Ptilde_{XY}\in\mathcal{T}_{n}(P_{XY},\Gc_{n})} \\ D(P_{XY}\|Q\times W)+\big[D(\Ptilde_{XY}\|Q\times P_{Y})-R\big]^{+}.\label{eq:SU_Er_Gen}
\end{multline}
Using a simple continuity argument (e.g. see \cite[Eq. (30)]{DyachkovCC}), we 
can replace the minimizations over types by minimizations over joint distributions,
and the constraints of the form $|\EE_{P}[a_{l}(X)]-\phi_{l}|\le\frac{\delta}{n}$
can be replaced by $\EE_{P}[a_{l}(X)]=\phi_{l}$.  This concludes the proof.

\subsection{Proof of Theorem \ref{thm:SU_LagrangeDual}} \label{sub:SU_LAGRANGE_PROOF}

Throughout the proof, we make use of Fan's minimax theorem \cite{Minimax}, which
states that $\min_a \sup_b f(a,b) = \sup_b \min_a f(a,b)$ provided that
the minimum is over a compact set, $f(\cdot,b)$ is convex in $a$ for all $b$, 
and $f(a,\cdot)$ is concave in $b$ for all $a$.  We make use of Lagrange
duality \cite{Convex} in a similar fashion to \cite[Appendix A]{Merhav};
some details are omitted to avoid repetition with \cite{Merhav}.

Using the identity $[\alpha]^{+}=\max_{\rho\in[0,1]}\rho\alpha$ 
and Fan's minimax theorem, the expression in \eqref{eq:SU_Er_Cost} can be written as 
\begin{equation}
    \Ercc(Q,R)=\max_{\rho\in[0,1]}\Eohatcc(Q,\rho)-\rho R, \label{eq:SU_Er_Alt1}
\end{equation}
where
\begin{multline}
    \Eohatcc(Q,\rho)\triangleq\min_{P_{XY}\in\SetScost(\{a_l\})}\min_{\Ptilde_{XY}\in\SetTcost(P_{XY},\{a_l\})} \\ D(P_{XY}\|Q\times W)+\rho D(\Ptilde_{XY} \| Q \times P_Y). \label{eq:SU_Er_Alt2}
\end{multline} 
It remains to show that $\Eohatcc(Q,\rho)=\Ezcc(Q,\rho)$.
We will show this by considering the minimizations in \eqref{eq:SU_Er_Alt2}
one at a time. We can follow the steps of 
\cite[Appendix A]{Merhav} to conclude that
\begin{multline}
    \min_{\Ptilde_{XY}\in\SetTcost(P_{XY},\{a_l\})} D(\Ptilde_{XY} \| Q \times P_Y)  = \sup_{s\ge0,\{\rbar_l\}} \\ \sum_{x,y}P_{XY}(x,y)\log\frac{q(x,y)^s}{\sum_{\xbar}Q(\xbar)q(\xbar,y)^s e^{\sum_l \rbar_l(a_l(\xbar)-\phi_l)}}, \label{eq:SU_DualProof1_LM}
\end{multline}
where $s$ and $\{\rbar_l\}$ are Lagrange multipliers.
It follows that the inner minimization in \eqref{eq:SU_Er_Alt2} is equivalent to 
\begin{multline}
    \min_{P_{XY}\in\SetScost(\{a_l\})}\sup_{s\ge0,\{\rbar_l\}} \sum_{x,y}P_{XY}(x,y)\Bigg(\log\frac{P_{XY}(x,y)}{Q(x)W(y|x)} \\ + \rho\log\frac{q(x,y)^s}{\sum_{\xbar}Q(\xbar)q(\xbar,y)^s e^{\sum_l \rbar_l(a_l(\xbar)-\phi_l)}}\Bigg).\label{eq:SU_DualProof9_LM}
\end{multline}
Since the objective is convex in $P_{XY}$ and jointly concave in  
$(s,\{\rbar_l\})$, we can apply Fan's minimax theorem. 
Hence, we consider the minimization of the objective in \eqref{eq:SU_DualProof9_LM} over 
$P_{XY}\in\SetScost(\{a_l\})$ with $s$ and $\{\rbar_l\}$ fixed.  Applying
the techniques of \cite[Appendix A]{Merhav} a second time, we conclude   
that this minimization has a dual form given by
\begin{equation}
    \sup_{\{r_l\}} -\log\sum_{x,y}P_{XY}(x,y)\Bigg(\frac{\sum_{\xbar}Q(\xbar)q(\xbar,y)^s e^{\sum_l \rbar_l(a_l(\xbar)-\phi_l)}}{q(x,y)^s e^{\sum_l r_l(a_l(x)-\phi_l)}}\Bigg)^{\rho},
\end{equation}
where $\{r_l\}$ are Lagrange multipliers. The proof is concluded by 
taking the supremum over $s$ and $\{\rbar_l\}$.

\subsection{Necessary Conditions for the Optimal Parameters} \label{sub:SU_NECC_CONDS}

\subsubsection{Optimization of $\Ezcc(Q,\rho)$}

We write the objective in \eqref{eq:SU_E0_LM} as 
\begin{equation}  
    \Ezcc(Q,\rho,s,a) \triangleq \rho\sum_{x}Q(x)a(x)-\sum_{x}Q(x)\log f(x),
\end{equation}
where
\begin{equation}
    f(x)\defeq\sum_{y}W(y|x)q(x,y)^{-\rho s}\bigg(\sum_{\xbar}Q(\xbar)q(\xbar,y)^{s}e^{a(\xbar)}\bigg)^{\rho}. \label{eq:SU_def_f}
\end{equation}
We have the partial derivatives 
\begin{align}
    \frac{\partial f(x)}{\partial Q(x^{\prime})} & = \rho g(x,x^{\prime}) \\
    \frac{\partial f(x)}{\partial a(x^{\prime})} & =\rho Q(x^{\prime})g(x,x^{\prime}),
\end{align}
where
\begin{multline}
    g(x,x') \triangleq \sum_{y}W(y|x)q(x,y)^{-\rho s} \\ \times\rho\left(\sum_{\xbar}Q(\xbar)q(\xbar,y)^{s}e^{a(\xbar)}\right)^{\rho-1}q(x^{\prime},y)^{s}e^{a(x^{\prime})}
\end{multline}
We proceed by analyzing the necessary Karush-Kuhn-Tucker (KKT) conditions \cite{Convex} for $(Q,a)$ to 
maximize $\Ezcc(Q,\rho,s,a)$. The KKT condition corresponding to the partial 
derivative with respect to $a(x^{\prime})$ is
\begin{equation}
    \rho Q(x^{\prime})-\sum_{x}Q(x)\frac{\rho Q(x^{\prime})g(x,x^{\prime})}{f(x)}=0,
\end{equation}
or equivalently 
\begin{equation}
    \sum_{x}Q(x)\frac{g(x,x^{\prime})}{f(x)}=1.\label{eq:KKTa}
\end{equation}
Similarly, the KKT condition corresponding to $Q(x^{\prime})$ gives
\begin{equation}
    \rho a(x^{\prime})-\log f(x^{\prime})-\rho\sum_{x}Q(x)\frac{g(x,x^{\prime})}{f(x)}-\lambda=0 \label{eq:KKTQ1}
\end{equation}
for all $x'$ such that $Q(x')>0$, where $\lambda$ is the 
Lagrange multiplier associated with the constraint
$\sum_{x}Q(x)=1$. Substituting \eqref{eq:KKTa} into \eqref{eq:KKTQ1} gives
\begin{equation}
    -\log\big(f(x^{\prime})e^{-\rho a(x')}\big)=\lambda+\rho. \label{eq:SU_MaxQFinal1}
\end{equation}
Using the definition of $f(\cdot)$ in \eqref{eq:SU_def_f}, we see
that \eqref{eq:SU_MaxQFinal1} implies that the logarithm
in \eqref{eq:SU_E0_LM} is independent of $x$.

\subsubsection{Optimization of $I_{s,a}(Q)$}

We write $I_{s,a}(Q)$ in \eqref{eq:SU_Isa} as
\begin{multline}
    I_{s,a}(Q) = s\sum_{x,y}Q(x)W(y|x)\log q(x,y)+\sum_{x}Q(x)a(x) \\ -\sum_{x,y}Q(x)W(y|x)\log\sum_{\xbar}Q(\xbar)q(\xbar,y)^{s}e^{a(\xbar)}
\end{multline}
and analyze the KKT conditions associated with the maximization over
$(Q,a)$.  We omit some details, since the steps are similar to those above.
The KKT condition for $a(x^{\prime})$ is
\begin{equation}
    \sum_{x,y}Q(x)W(y|x)\frac{q(x^{\prime},y)^{s}e^{a(x^{\prime})}}{\sum_{\xbar}Q(\xbar)q(\xbar,y)^{s}e^{a(\xbar)}}=1, \label{eq:KKTa'}
\end{equation}
and the KKT condition for $Q(x^{\prime})$ gives
\begin{align}
    & s\sum_{y}W(y|x^{\prime})\log q(x^{\prime},y)+a(x^{\prime}) \nonumber \\ 
    & -\sum_{y}W(y|x^{\prime})\log\sum_{\xbar}Q(\xbar)q(\xbar,y)^{s}e^{a(\xbar)} \nonumber \\
    & -\sum_{x,y}Q(x)W(y|x)\frac{q(x^{\prime},y)^{s}e^{a(x^{\prime})}}{\sum_{\xbar}Q(\xbar)q(\xbar,y)^{s}e^{a(\xbar)}}-\lambda=0 \label{eq:KKTQ1'}
\end{align}
for all $x'$ such that $Q(x')>0$,
where $\lambda$ is a Lagrange multiplier.  Substituting \eqref{eq:KKTa'} 
into \eqref{eq:KKTQ1'} and performing some simple rearrangements, we obtain
\begin{equation}
    \sum_{y}W(y|x^{\prime})\log\frac{q(x^{\prime},y)^{s}e^{a(x^{\prime})}}{\sum_{\xbar}Q(\xbar)q(\xbar,y)^{s}e^{a(\xbar)}}=\lambda+1.
\end{equation} 

\subsection{Asymptotic Behavior of the Saddlepoint Approximation} \label{sub:SU_ASYMPTOTIC_SA}

Here we prove the asymptotic relations given in \eqref{eq:SU_BelowRcr}--\eqref{eq:SU_AboveRcr_L}.
We will make frequent use of the properties of $\rhohat$, $c_1$ and $c_2$
given in Section \ref{sec:SU_REF_DEFS}.

We first prove \eqref{eq:SU_BelowRcr}--\eqref{eq:SU_EqualRcr} in the non-lattice case.
Suppose that $R<\Rcrs(Q)$, and hence $\rhohat=1$, $c_1<0$ and $c_2>0$. 
Using \eqref{eq:SU_Integral} and the identity $\Qsf(z)\le\frac{1}{2}\exp\big(\frac{-z^2}{2}\big)$
for $z>0$, it is easily verified that the first term in \eqref{eq:SA_PreFactorIID} 
decays to zero exponentially fast. The second term is given by 
$\Qsf\big(-c_1\sqrt{\frac{n}{c_2}}\big)$, which tends to one since $\lim_{z\to\infty}\Qsf(z)=1$.  
We thus obtain \eqref{eq:SU_BelowRcr}. When $R=\Rcrs(Q)$, the argument is similar except that $c_1=0$,
yielding the following: (i) From \eqref{eq:SU_Integral}, the first term in \eqref{eq:SA_PreFactorIID} equals
$\Qsf(\sqrt{nc_2})e^{nc_2/2} \asymp \frac{1}{\sqrt{2\pi nc_2}}$, rather than decaying exponentially fast, 
(ii) The second term in \eqref{eq:SA_PreFactorIID} equals $\Qsf(0)=\frac{1}{2}$, rather than one.  
For $R > I_s(Q)$ (respectively, $R=I_s(Q)$) the argument 
is similar with the roles of the two terms in 
\eqref{eq:SA_PreFactorIID} reversed, and with $\rhohat=0$
and $c_1>0$ (respectively, $c_1=0$).

In the lattice case, the arguments in proving \eqref{eq:SU_BelowRcr}--\eqref{eq:SU_EqualRcr}
are similar to the non-lattice case, so we focus on \eqref{eq:SU_BelowRcr} with $R<\Rcrs(Q)$.  
Similarly to the non-lattice case, it is easily shown that the first
summation in \eqref{eq:SA_PreFactorIID_L} decays to zero exponentially
fast, so we focus on the second.  Since $\rhohat=1$, the second summation is given by
\begin{align}
    & \sum_{i=-\infty}^{-1} \phi_h(\gamma_n+ih;nc_1,nc_2) \nonumber \\
    & \qquad= (1+o(1))\sum_{i=-\infty}^{\infty}\phi_h(\gamma_n+ih;nc_1,nc_2) \label{eq:SU_LatticeAsymp4}\\
    & \qquad= 1+o(1), \label{eq:SU_LatticeAsymp4a} 
\end{align}
where \eqref{eq:SU_LatticeAsymp4} follows since the added terms from $i=0$
to $\infty$ contribute an exponentially small amount to the sum
since $c_1<0$, and  
\eqref{eq:SU_LatticeAsymp4a} is easily understood by interpreting
the right-hand side of \eqref{eq:SU_LatticeAsymp4} as approximating
the integral over the real line of a Gaussian density function via
discrete sampling.  Since the sampling is done using intervals of a fixed 
size $h$ but the variance $nc_2$ increases, the approximation 
improves with $n$ and approaches one.

Finally, we consider the case that $R\in(\Rcrs(Q),I_s(Q))$, and hence
$\rhohat\in(0,1)$, $c_1=0$ and $c_2>0$.  In the non-lattice case,
we can substitute $c_1=0$ and \eqref{eq:SU_Integral} into \eqref{eq:SA_PreFactorIID_L} to obtain
\begin{align}
    \alpha_n &= e^{\frac{1}{2}nc_2\rhohat^2}\Qsf\big(\rhohat\sqrt{nc_2}\big) + e^{\frac{1}{2}nc_2(1-\rhohat)^2}\Qsf\big((1-\rhohat)\sqrt{nc_2}\big). \label{eq:SU_LatticeAsymp5}
\end{align}
Using the fact that $\Qsf(z)e^{z^2/2}\asymp\frac{1}{z\sqrt{2\pi}}$ as $z\to\infty$, along with the 
identity $\frac{1}{\rho}+\frac{1}{1-\rho}=\frac{1}{\rho(1-\rho)}$,
we obtain \eqref{eq:SU_AboveRcr}.

We now turn to the lattice case. Setting $c_1=0$ in \eqref{eq:SA_PreFactorIID_L} yields
\begin{multline}
    \alpha_n = \frac{h}{\sqrt{2\pi nc_2}}\bigg( \sum_{i=0}^\infty e^{-\rhohat(\gamma_n+ih)-\frac{(\gamma_n+ih)^2}{2nc_2}} + \\ \sum_{i=-\infty}^{-1}e^{(1-\rhohat)(\gamma_n+ih)-\frac{(\gamma_n+ih)^2}{2nc_2}} \bigg). \label{eq:SU_LatticeAsymp6} 
\end{multline}
The two summations are handled in a nearly identical fashion, so
we focus on the first.  Using the identity $1-x \le e^{-x} \le 1$,
we can write 
\begin{align}
    & \Bigg| \sum_{i=0}^{\infty}e^{-\rhohat(\gamma_n+ih)-\frac{(\gamma_n+ih)^2}{2nc_2}} - \sum_{i=0}^{\infty}e^{-\rhohat(\gamma_n+ih)}\Bigg| \nonumber \\  
    & \qquad \le \sum_{i=0}^{\infty}e^{-\rhohat(\gamma_n+ih)}\frac{(\gamma_n+ih)^2}{2nc_2} \\
    & \qquad = O\Big(\frac{1}{n}\Big), \label{eq:SU_LatticeAsymp8}
\end{align}
where \eqref{eq:SU_LatticeAsymp8} follows since the summation $\sum_{i=0}^{\infty}e^{-\zeta i}p(i)$
is convergent for any polynomial $p(i)$ and $\zeta>0$. Furthermore, we have from the geometric series that
\begin{equation}
    \sum_{i=0}^{\infty}e^{-\rhohat(\gamma_n+ih)} = e^{-\rhohat\gamma_n}\bigg(\frac{1}{1-e^{-\rhohat h}}\bigg),
\end{equation}
We have thus weakened the first summation in \eqref{eq:SU_LatticeAsymp6}
to the first term in the sum in \eqref{eq:SU_AboveRcr_L} (up to an 
$O\big(\frac{1}{n}\big)$ remainder term). The second term is obtained in a similar fashion.

\subsection{Proof of Theorem \ref{thm:SA_IID_LIM}} \label{sub:SU_SADDLEPOINT_PROOF}

Since $M_n\to\infty$ by assumption, we can safely replace $M_n$ by $M_n+1$ without
affecting the theorem statement. We begin by considering fixed values of $n$, $M$
and $R=\frac{1}{n}\log M$.

Using \eqref{eq:SU_RCU_s2} and the identity in \eqref{eq:SU_UnifIneq}, we can can write
\begin{equation}
    \rcus(n,M+1)=\PP\bigg[nR-\sum_{i=1}^{n}i_{s}(X_{i},Y_{i})\ge\log U\bigg].\label{eq:SA_Above_1}
\end{equation}
This expression resembles the tail probability of an i.i.d.
sum of random variables, for which asymptotic estimates were given
by Bahadur and Rao \cite{BahadurRao} (see also \cite[Appendix 5A]{Gallager}).  
There are two notable differences
in our setting which mean that the results of \cite{BahadurRao,Gallager}
cannot be applied directly.  First, the right-hand side of the event in 
\eqref{eq:SA_Above_1} random rather than deterministic.  Second, since we are allowing
for rates below $\Rcrs(Q)$ or above $I_s(Q)$, we cannot assume that
the derivative of the moment generating function of $R-i_s(X,Y)$ at zero
(which we will shortly see equals $c_1$ in \eqref{eq:SA_c1}) is equal to zero.

\subsubsection{Alternative Expressions for $\rcus$} \label{sub:SA_ABOVE_RCR}

Let $F(t)$ denote the cumulative distribution function (CDF) of $R-i_{s}(X,Y)$,
and let $Z_{1},\cdots,Z_{n}$ be i.i.d. according to the tilted CDF
\begin{equation}
    F_{Z}(z)=e^{\Eziid(Q,\rhohat,s)-\rhohat R}\int_{-\infty}^{z}e^{\rhohat t}dF(t). \label{eq:SA_Above_2}
\end{equation}
It is easily seen that this is indeed a CDF by writing 
\begin{equation} 
    \int_{-\infty}^{\infty}e^{\rhohat t}dF(t)=\mathbb{E}\big[e^{\rhohat(R-i_{s}(X,Y))}\big]=e^{-(\Eziid(Q,\rhohat,s)-\rhohat R)},
\end{equation}
where we have used \eqref{eq:SU_E0s_IID}.  The moment generating function 
(MGF) of $Z$ is given by
\begin{align}
    M_Z(\tau) &\triangleq \EE\big[e^{\tau Z}\big] \label{eq:SU_MGF_Z} \\
              &= e^{\Eziid(Q,\rhohat,s)-\rhohat R}\EE\big[e^{(\rhohat+\tau)(R-i_{s}(X,Y))}\big] \label{eq:SA_Moments2} \\
              & = e^{\Eziid(Q,\rhohat,s)}e^{-(\Eziid(Q,\rhohat+\tau,s)- \tau R)}, \label{eq:SA_Moments3}
\end{align}
where \eqref{eq:SA_Moments2} follows from \eqref{eq:SA_Above_2}, and 
\eqref{eq:SA_Moments3} follows from \eqref{eq:SU_E0s_IID}.  We can now
compute the mean and variance of $Z$ in terms of the derivatives 
of the MGF, namely
\begin{align}
    \EE[Z] &= \frac{dM_Z}{d\tau}\Big|_{\tau=0} = c_1 \label{eq:SU_MGFder1} \\
    \var[Z] &= \frac{d^2M_Z}{d\tau^2}\Big|_{\tau=0} - \EE[Z]^2 = c_2, \label{eq:SU_MGFder2} 
\end{align}
where $c_1$ and $c_2$ are defined in \eqref{eq:SA_c1}--\eqref{eq:SA_c2}.
Recall that $U_s(Q)>0$ by assumption, which implies that $c_2>0$ (see Section \ref{sec:SU_REF_DEFS}).

In the remainder of the proof, we omit the arguments
$(Q,\rhohat,s)$ to $\Eziid$. Following \cite[Lemma 2]{BahadurRao}, we 
can use \eqref{eq:SA_Above_2} to write \eqref{eq:SA_Above_1} as follows: 
\begin{align}
     & \rcus(n,M+1) \nonumber \\
     & \quad = \dotsint_{\sum_{i}t_{i}\ge\log u}dF(t_{1})\cdots dF(t_{n})dF_{U}(u) \\
     & \quad = e^{-n(\Eziid-\rhohat R)} \nonumber \\ 
        &\quad\,\,\, \times \dotsint_{\sum_{i}z_{i}\ge\log u}e^{-\rhohat\sum_{i}z_{i}}dF_{Z}(z_{1})\cdots dF_{Z}(z_{n})dF_{U}(u),\label{eq:SA_ChgMeasure1} \\
     & \quad \triangleq I_n e^{-n(\Eziid-\rhohat R)}, \label{eq:SA_Above_3} 
\end{align}
where $F_{U}(u)$ is the CDF of $U$.  We write the prefactor $I_n$ as 
\begin{equation}
    I_n = \int_0^1\int_{\log u}^\infty e^{-\rhohat z} dF_{n}(z) dF_{U}(u), \label{eq:SA_ChgMeasure2}
\end{equation}   
where $F_n$ is the CDF of $\sum_{i=1}^n Z_i$.  Since the integrand
in \eqref{eq:SA_ChgMeasure2} is non-negative, we can safely interchange
the order of integration, yielding 
\begin{align}
    I_n &= \int_{-\infty}^{\infty} \int_0^{\min\{1,e^z\}} e^{-\rhohat z} dF_U(u) dF_n(z) \label{eq:SA_ChgMeasure4} \\
        &= \int_{0}^{\infty} e^{-\rhohat z} dF_n(z) + \int_{-\infty}^{0} e^{(1-\rhohat)z} dF_n(z), \label{eq:SA_ChgMeasure5a} 
\end{align} 
where \eqref{eq:SA_ChgMeasure5a} follows by splitting the integral according
to which value achieves the $\min\{\cdot,\cdot\}$ in \eqref{eq:SA_ChgMeasure4}.
Letting $\hat{F_n}$ denote the CDF of $\frac{\sum_{i=1}^n Z_i - nc_1}{\sqrt{nc_2}}$,
we can write \eqref{eq:SA_ChgMeasure5a} as
\begin{multline}
    I_n = \int_{-\frac{c_1\sqrt{n}}{\sqrt{c_2}}}^{\infty} e^{-\rhohat(z\sqrt{nc_2}+nc_1)} d\hat{F}_{n}(z) \\
    + \int_{-\infty}^{-\frac{c_1\sqrt{n}}{\sqrt{c_2}}} e^{(1-\rhohat)(z\sqrt{nc_2}+nc_1)} d\hat{F}_{n}(z). \label{eq:SA_ChgMeasure5} 
\end{multline} 
 
\subsubsection{Non-lattice Case}

Let $\Phi(z)$ denote the CDF of a zero-mean unit-variance Gaussian
random variable. Using the fact that $\EE[Z]=c_1$ and $\var[Z]=c_2>0$
(see \eqref{eq:SU_MGFder1}--\eqref{eq:SU_MGFder2}), we have from 
the refined central limit theorem in \cite[Sec. XVI.4, Thm. 1]{Feller} that
\begin{equation}
    \hat{F}_{n}(z)=\Phi(z)+G_{n}(z)+\tilde{F}_{n}(z),\label{eq:SA_Above_6}
\end{equation}
where $\tilde{F}_{n}(z)=o(n^{-\frac{1}{2}})$ uniformly in $z$, and
\begin{equation}
    G_{n}(z)\triangleq\frac{K}{\sqrt{n}}(1-z^{2})e^{-\frac{1}{2}z^{2}}\label{eq:SA_Above_7}
\end{equation}
for some constant $K$ depending only on the variance and third absolute moment
of $Z$, the latter of which is finite since we are considering
finite alphabets. Substituting \eqref{eq:SA_Above_6}
into \eqref{eq:SA_ChgMeasure5}, we obtain
\begin{equation}
    I_{n}=I_{1,n}+I_{2,n}+I_{3,n},\label{eq:SA_Above_8} 
\end{equation}
where the three terms denote the right-hand side of \eqref{eq:SA_ChgMeasure5}
with $\Phi$, $G_n$ and $\tilde{F}_n$ respectively in place of $\hat{F}_n$.  
Reversing the step from \eqref{eq:SA_ChgMeasure5a} to \eqref{eq:SA_ChgMeasure5},
we see that $I_{1,n}$ is precisely $\alpha_n$ in 
\eqref{eq:SA_PreFactorIID}. Furthermore, using 
$\frac{dG_n}{dz}=\frac{K}{\sqrt{n}}(z^{3}-3z)e^{-\frac{1}{2}z^{2}}$,
we obtain
\begin{multline}
    I_{2,n} = \frac{K}{\sqrt{n}}\bigg(\int_{-\frac{c_1\sqrt{n}}{\sqrt{c_2}}}^{\infty}e^{-\rhohat(z\sqrt{nc_2}+nc_1)}(z^{3}-3z)e^{-\frac{1}{2}z^{2}}dz \\
        + \int_{-\infty}^{-\frac{c_1\sqrt{n}}{\sqrt{c_2}}}e^{(1-\rhohat)(z\sqrt{nc_2}+nc_1)}(z^{3}-3z)e^{-\frac{1}{2}z^{2}}dz\bigg). \label{eq:SA_Above_17} 
\end{multline}
In accordance with the theorem statement, we must show that 
$I_{2,n}=o(\alpha_n)$ and $I_{3,n}=o(\alpha_n)$ even 
in the case that $R$ and $\rhohat$ vary with $n$.  Let $R_n = \frac{1}{n}\log M_n$
and $\rhohat_n=\rhohat(Q,R_n,s)$, and let $c_{1,n}$ and $c_{2,n}$ be the corresponding
values of $c_1$ and $c_2$.  We assume with no real loss of generality that
\begin{align}
    & \lim_{n\to\infty}R_n = R^{*} \label{eq:SA_Rstar}
\end{align}  
for some $R^{*}\ge0$ possibly equal to $\infty$.  Once the theorem
is proved for all such $R^{*}$, the same will follow for an arbitrary
sequence $\{R_n\}$. 

Table \ref{tab:SU_rcusGrowth} summarizes the growth rates $\alpha_n$, $I_{2,n}$ and
$I_{3,n}$ for various ranges of $R^*$, and indicates whether the first
or second integral (see \eqref{eq:SA_ChgMeasure5} and \eqref{eq:SA_Above_17}) dominates the behavior
of each.  We see that $I_{2,n}=o(\alpha_n)$ and $I_{3,n}=o(\alpha_n)$ for
all values of $R^{*}$, as desired.

\begin{table*}
    \caption{Growth rates of $\alpha_{n}$, $I_{2,n}$ and $I_{3,n}$ when the rate
    converges to $R^{*}$. \label{tab:SU_rcusGrowth}}
    \begin{centering}
    \begin{tabular}{|c|c|c|c|c|c|c|}
    \hline 
     & $\hat{\rho}$ & $c_{1}$ & Dominant Term(s) & $\alpha_{n}$ & $I_{2,n}$ & $I_{3,n}$\tabularnewline
    \hline 
    \hline 
    $R^{*}\in[0,\Rcrs(Q))$ & $1$ & $<0$ & 2 & $\Theta(1)$ &  $\Theta\Big(\frac{1}{\sqrt{n}}\Big)$ & $o\Big(\frac{1}{\sqrt{n}}\Big)$\tabularnewline
    \hline 
    $R^{*}=\Rcrs(Q)$ & $\to1$ & $\to0$ & 2 & $\omega\Big(\frac{1}{\sqrt{n}}\Big)$ &  $O\Big(\frac{1}{\sqrt{n}}\Big)$ & $o\Big(\frac{1}{\sqrt{n}}\Big)$\tabularnewline
    \hline 
    $R^{*}\in(\Rcrs(Q),I_{s}(Q))$ & $\in(0,1)$ & $0$ & 1,2 & $\Theta\Big(\frac{1}{\sqrt{n}}\Big)$ & $\Theta\Big(\frac{1}{n\sqrt{n}}\Big)$ & $o\Big(\frac{1}{\sqrt{n}}\Big)$\tabularnewline
    \hline 
    $R^{*}=I_{s}(Q)$ & $\to0$ & $\to0$ & 1 & $\omega\Big(\frac{1}{\sqrt{n}}\Big)$ & $O\Big(\frac{1}{\sqrt{n}}\Big)$ & $o\Big(\frac{1}{\sqrt{n}}\Big)$\tabularnewline
    \hline 
    $R^{*}>I_{s}(Q)$ & $0$ & $>0$ & 1 & $\Theta(1)$ & $\Theta\Big(\frac{1}{\sqrt{n}}\Big)$ & $o\Big(\frac{1}{\sqrt{n}}\Big)$\tabularnewline
    \hline 
    \end{tabular}
    \par\end{centering}
\end{table*}
 
The derivations of the growth rates in Table \ref{tab:SU_rcusGrowth} 
when $R^{*}\notin\{\Rcrs(Q),I_s(Q)\}$ are
done in a similar fashion to Appendix \ref{sub:SU_ASYMPTOTIC_SA}. 
To avoid repetition, we provide details only for 
$R^{*}=\Rcrs(Q)$; this is a less straightforward 
case whose analysis differs slightly from Appendix \ref{sub:SU_ASYMPTOTIC_SA}.
From Section \ref{sec:SU_REF_DEFS}, we have $\rhohat\to1$ and 
$c_{1,n}\to0$ from below, with $c_{1,n}<0$ only if $\rhohat_n=1$.

For any $\rhohat\in[0,1]$, the terms $e^{-\rhohat(\cdot)}$ and 
$e^{(1-\rhohat)(\cdot)}$ in \eqref{eq:SA_Above_17} are both
upper bounded by one across their respective ranges of integration.
Since the moments of a Gaussian random variable are finite,
it follows that both integrals are $O(1)$, and thus 
$I_{2,n}=O\big(\frac{1}{\sqrt{n}}\big)$.  The term $I_{3,n}$ is
handled similarly, so it only remains to show that
$\alpha_n=\omega\big(\frac{1}{\sqrt n}\big)$.  In the case that
$\rhohat=1$, the second integral in \eqref{eq:SA_PreFactorIID}
is at least $\frac{1}{2}$, since $c_1\le0$.  
It only remains to handle the case that
$c_1=0$ and $\rhohat_n\to1$ with $\rhohat_n<1$.  
For any $\delta>0$, we have $\rhohat_n \ge 1-\delta$ for 
sufficiently large $n$.  Lower bounding $\alpha_n$
by replacing $1-\rhohat$ by $\delta$ in the second term of 
\eqref{eq:SA_PreFactorIID}, we have similarly to \eqref{eq:SU_LatticeAsymp5} that 
\begin{equation}
    \alpha_n \ge e^{\frac{1}{2}nc_2\delta^2}\Qsf\big(\delta\sqrt{nc_2}\big)
             \asymp \frac{1}{\sqrt{2\pi nc_2}\delta}.
\end{equation}
Since $\delta$ is arbitrary, we obtain 
$\alpha_n=\omega\big(\frac{1}{\sqrt{n}}\big)$, as desired.

\subsubsection{Lattice case} \label{sub:SA_ABOVE_RCR}

The arguments following \eqref{eq:SA_Above_8} are essentially identical in the lattice case,
so we focus our attention on obtaining the analogous expression to \eqref{eq:SA_Above_8}.
Letting $P_n(z)$ denote the probability mass function (PMF) of $\sum_{i=1}^{n}Z_i$, 
we can write \eqref{eq:SA_ChgMeasure5a} as
\begin{equation}
    I_{n}= \sum_{z\ge0}P_n(z)e^{-\rhohat z} + \sum_{z<0}P_n(z)e^{(1-\rhohat)z}. \label{eq:SU_In2}
\end{equation}
Using the fact that $\EE[Z]=c_1$ and $\var[Z]=c_2>0$,
we have from the local limit theorem in \cite[Eq. (5A.12)]{Gallager} that
\begin{equation}
    P_n(z) = \phi_h(z;nc_1,nc_2) + \tilde{P}_n(z), \label{eq:SU_LatticeBE}
\end{equation}
where $\phi_h$ is defined in \eqref{eq:SU_phi_h}, and
$\tilde{P}_n(z)=o\big(\frac{1}{\sqrt{n}}\big)$ uniformly in $z$.  Thus, 
analogously to \eqref{eq:SA_Above_8}, we can write
\begin{equation}
    I_{n}=I_{1,n}+I_{2,n},\label{eq:SA_Above_8a} 
\end{equation}
where the two terms denote the right-hand side of \eqref{eq:SU_In2} with 
$\phi_h$ and $\tilde{P}_n(z)$
respectively in place of $P_n(z)$.  Using the definition of $\gamma_n$ in \eqref{eq:SU_alphan}
and the fact that $\sum_{i}Z_i$ has the same support as $nR - i_s^n(\Xv,\Yv)$ (cf. \eqref{eq:SA_Above_2}),
we see that the first summation in \eqref{eq:SU_In2} is over the 
set $\{\gamma_n + ih : i\in\ZZ,i\ge0\}$, and the second summation is over
the set $\{\gamma_n + ih : i\in\ZZ,i<0\}$.  It follows that $I_{1,n}=\alpha_n$, 
and similar arguments to the non-lattice case show that $I_{2,n}=o(\alpha_n)$.

\subsection{Proof of Theorem \ref{thm:SA_SaddleRCU}} \label{sub:SA_RCU_PROOF}

Throughout this section, we make use of the same notation as Appendix \ref{sub:SU_SADDLEPOINT_PROOF}.
We first discuss the proof of \eqref{eq:SU_SaddleRCU}.  Using the definition of
$\rcuss$, we can follow identical arguments to those following \eqref{eq:SA_Above_1} to conclude that
\begin{equation}
    \rcuss(n,M) = I_n e^{-n(\Eziid(Q,\rhohat,s)-\rhohat R)},
\end{equation}
where analogously to \eqref{eq:SA_ChgMeasure2} and \eqref{eq:SA_ChgMeasure5a}, we have
\begin{align}
    I_n &= \int_0^1\int_{\log\frac{u\sqrt{2\pi nc_3}}{\psi_s}}^\infty e^{-\rhohat z} dF_{n}(z) dF_{U}(u) \\
        &= \int_{\log \frac{\sqrt{2\pi nc_3}}{\psi_s}}^{\infty} e^{-\rhohat z} dF_n(z) \nonumber \\ 
            & \qquad\qquad + \frac{\psi_s}{\sqrt{2\pi nc_3}}\int_{-\infty}^{\log \frac{\sqrt{2\pi nc_3}}{\psi_s}} e^{(1-\rhohat)z} dF_n(z).
\end{align}
The remaining arguments in proving \eqref{eq:SU_SaddleRCU} follow
those given in Appendix \ref{sub:SU_SADDLEPOINT_PROOF}, and are omitted.  

To prove \eqref{eq:SU_PfRCU_1}, we make use of two technical lemmas, whose 
proofs are postponed until the end of the section.  The following lemma
can be considered a refinement of \cite[Lemma 47]{Finite}.

\begin{lem} \label{lem:REF_Lem20}
    Fix $K>0$, and for each $n$, let $(n_1,\cdots,n_K)$ be integers such that $\sum_{k}n_k=n$.
    Fix the PMFs $Q_1,\cdots,Q_K$ on
    a finite subset of $\RR$, and let $\sigma_1^2,\cdots,\sigma_K^2$ be the corresponding
    variances.  Let $Z_{1},\cdots,Z_{n}$ be independent random variables, $n_k$ of which
    are distributed according to $Q_k$ for each $k$.  Suppose that 
    $\min_{k}\sigma_k > 0$ and $\min_{k}n_k = \Theta(n)$.  Defining
    \begin{align}
        \Ic_0  &\triangleq \bigcup_{k \,:\, \sigma_{k} > 0}\big\{ z \,:\, Q_k(z)>0 \big\} \label{eq:SA_I0} \\
        \psi_0 &\triangleq 
            \begin{cases}
                1 & \Ic_0\text{ \em does not lie on a lattice} \\
                \frac{h_0}{1-e^{-h_0}} & \Ic_0\text{ \em lies on a lattice with span }h_0,
            \end{cases} \label{eq:SA_Psi_0}
    \end{align}
    the summation $S_n\triangleq\sum_{i}Z_i$ satisfies the 
    following uniformly in $t$:
    \begin{equation}
        \EE\Big[e^{-S_n}\emph{\openone}\big\{S_n>t\big\}\Big] \le e^{-t}\bigg(\frac{\psi_0}{\sqrt{2\pi V_n}} + o\Big(\frac{1}{\sqrt{n}}\Big)\bigg), \label{eq:SU_Lemma20}
    \end{equation}
    where $V_n \triangleq \var[S_n]$.
\end{lem}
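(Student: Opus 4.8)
The plan is to express $\EE\big[e^{-S_n}\openone\{S_n>t\}\big]$ in terms of the cumulative distribution function $F_n$ of $S_n$ and then invoke a sharp central-limit estimate, treating the lattice and non-lattice cases separately as the definition of $\psi_0$ suggests; this follows the spirit of \cite[Lemma 47]{Finite} but sharpens the $O(n^{-1/2})$ bound there into an asymptotically exact leading term. Write $\mu_n\triangleq\EE[S_n]$, and note $V_n=\sum_k n_k\sigma_k^2$, so the hypotheses $\min_k n_k=\Theta(n)$ and $\min_k\sigma_k>0$ give $V_n=\Theta(n)$; hence $\tfrac{1}{\sqrt{2\pi V_n}}=\Theta(n^{-1/2})$ and the claimed remainder is genuinely of lower order. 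The starting point is the identity
\begin{equation*}
    \EE\big[e^{-S_n}\openone\{S_n>t\}\big]=\int_{(t,\infty)}e^{-z}\,dF_n(z)=\int_{t}^{\infty}e^{-z}\big(F_n(z)-F_n(t)\big)\,dz,
\end{equation*}
the second equality following by writing $e^{-z}=e^{-t}-\int_t^z e^{-s}\,ds$ and applying Fubini's theorem. Every estimate below will be uniform in $t$, so uniformity in the final bound is automatic.

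In the lattice case, $\Ic_0\subseteq\gamma_0+h_0\ZZ$ forces $\supp S_n\subseteq n\gamma_0+h_0\ZZ$, and a short elementary argument (using $\min_k\sigma_k>0$ and $\min_k n_k\to\infty$) shows that for all large $n$ the span of $S_n$ equals that of $\Ic_0$, namely $h_0$. Writing the first integral above as a sum over the atoms $z_i$ of $S_n$, I would apply the local limit theorem for sums of independent, not necessarily identically distributed, lattice random variables --- the natural extension of \cite[Eq.~(5A.12)]{Gallager}, valid here since $V_n\to\infty$ and the summands are bounded and non-degenerate --- to obtain $\PP[S_n=z]=\phi_{h_0}(z;\mu_n,V_n)+o(n^{-1/2})$ uniformly in $z$. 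Bounding $\phi_{h_0}(z;\mu_n,V_n)\le\tfrac{h_0}{\sqrt{2\pi V_n}}$ and using that the smallest atom exceeding $t$ lies in $(t,t+h_0]$, so that $\sum_{z_i>t}e^{-z_i}\le\tfrac{e^{-t}}{1-e^{-h_0}}$, the main term contributes at most $\tfrac{e^{-t}}{1-e^{-h_0}}\cdot\tfrac{h_0}{\sqrt{2\pi V_n}}=\tfrac{e^{-t}\psi_0}{\sqrt{2\pi V_n}}$, while the remainder contributes $\tfrac{e^{-t}}{1-e^{-h_0}}\,o(n^{-1/2})=e^{-t}\,o(n^{-1/2})$; together these give \eqref{eq:SU_Lemma20}.

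In the non-lattice case $\psi_0=1$, and I would use the second form of the identity. Applying an Edgeworth expansion for sums of independent random variables satisfying a Cram\'er-type condition (which holds here because $\Ic_0$ does not lie on a lattice; cf.\ the refined central limit theorem of \cite[Sec.~XVI.4]{Feller} in its non-identically-distributed form), one writes $F_n(z)=\Phi\big(\tfrac{z-\mu_n}{\sqrt{V_n}}\big)+g_n(z)+r_n(z)$ uniformly in $z$, with $\Phi$ the standard normal CDF, $\sup_z|r_n(z)|=o(n^{-1/2})$, and $g_n$ the first Edgeworth correction; its coefficient (the standardized third cumulant of $S_n$) is $O(n^{-1/2})$, so $\sup_z|g_n'(z)|=O\big(n^{-1/2}/\sqrt{V_n}\big)=O(n^{-1})$. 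The crucial point is that this correction cancels to higher order inside the \emph{difference} $F_n(z)-F_n(t)$: since $|g_n(z)-g_n(t)|\le|z-t|\,O(n^{-1})$ and $\int_t^\infty e^{-z}(z-t)\,dz=e^{-t}$, the $g_n$ and $r_n$ contributions to $F_n(z)-F_n(t)$ are both absorbed into $e^{-t}\,o(n^{-1/2})$, leaving
\begin{equation*}
    \EE\big[e^{-S_n}\openone\{S_n>t\}\big]=\int_t^\infty e^{-z}\Big(\Phi\big(\tfrac{z-\mu_n}{\sqrt{V_n}}\big)-\Phi\big(\tfrac{t-\mu_n}{\sqrt{V_n}}\big)\Big)\,dz+e^{-t}\,o(n^{-1/2}).
\end{equation*}
The leading term equals $\int_t^\infty e^{-z}\phi(z;\mu_n,V_n)\,dz\le\tfrac{1}{\sqrt{2\pi V_n}}\int_t^\infty e^{-z}\,dz=\tfrac{e^{-t}}{\sqrt{2\pi V_n}}$, using $\phi(z;\mu_n,V_n)\le\tfrac{1}{\sqrt{2\pi V_n}}$, which is \eqref{eq:SU_Lemma20} with $\psi_0=1$.

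The hard part is the non-lattice branch: a crude Berry--Esseen bound on $\PP[S_n\in(t,z]]$ yields only an $O(n^{-1/2})$ remainder, so one must go one order deeper and exploit the cancellation of the first Edgeworth term in $F_n(z)-F_n(t)$, which in turn rests on a Cram\'er-type smoothness property of $S_n$ that must be extracted carefully from the stated hypotheses; some care is also needed in relating the lattice structure of $\Ic_0$ to that of $S_n$ in the lattice branch. The remaining steps --- the integral identity, the local-limit / Edgeworth inputs with their uniform $o(n^{-1/2})$ remainders, the geometric-series summation, and the density-peak bound $\phi\le\tfrac{1}{\sqrt{2\pi V_n}}$ --- are routine once these inputs are in place.
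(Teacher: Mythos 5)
Your proposal takes a genuinely different route from the paper's. The paper's proof works directly with the sum: it applies a local limit theorem (from the cited source on non-lattice LLTs for non-identically-distributed sums) to bound $\PP[S'_n \in [z,z+\eta)]$ by $\frac{\eta}{\sqrt{2\pi V_n}} + o(n^{-1/2})$ uniformly in $z$, partitions $(t,\infty)$ into length-$\eta$ intervals, sums the resulting geometric series to obtain $e^{-t}\big(\frac{\eta/(1-e^{-\eta})}{\sqrt{2\pi V_n}} + o(n^{-1/2})\big)$, and finally sends $\eta \to 0$. Your lattice branch is essentially the paper's argument (atom-level LLT plus geometric series, with $\eta = h_0$). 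Your non-lattice branch is different: you apply integration by parts to reduce to $\int_t^\infty e^{-z}(F_n(z)-F_n(t))\,dz$, insert an Edgeworth expansion of the CDF, and exploit the cancellation of the first correction term in the difference $F_n(z)-F_n(t)$. This is arguably cleaner because it avoids the slightly delicate $\eta\to 0$ limit (where the $o(n^{-1/2})$ constant depends on $\eta$, so one needs a $\limsup$-in-$n$-then-$\sup$-in-$\eta$ argument); but it uses a strictly stronger input than the paper does (a full CDF expansion versus only small-window probability bounds).

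There is, however, a genuine gap in the non-lattice branch, and you flag the right place but do not fill it. The expansion you invoke --- $F_n(z) = \Phi(\tfrac{z-\mu_n}{\sqrt{V_n}}) + g_n(z) + o(n^{-1/2})$ uniformly in $z$ --- is Esseen's theorem for i.i.d.\ sums (that is what Feller Sec.~XVI.4, Thm.~1 gives). Here the $Z_i$ are not identically distributed, and discrete. Two specific issues with how you justify this: (i) you appeal to ``a Cram\'er-type condition,'' but the classical Cram\'er condition $\limsup_{|u|\to\infty}|\hat{F}(u)|<1$ fails for every purely atomic distribution, discrete-support characteristic functions can have $|\hat{F}(u)|$ arbitrarily close to $1$ for large $u$; the condition actually used in Feller's theorem (and in its non-i.i.d.\ analogues) is a much weaker non-latticeness or ``regularity'' hypothesis, and is not a Cram\'er condition at all; and (ii) the translation from ``$\Ic_0$ does not lie on a lattice'' to a quantitative non-latticeness condition that propagates through the characteristic-function product $\prod_k \hat{Q}_k(u)^{n_k}$ is exactly the subtle point, and is nontrivial --- this is why the paper works through conditions $(\alpha)$, $(\beta)$, $(\gamma)$ of its cited local limit theorem, explicitly verifying the quantitative version of ``$Z_i$ is far from any lattice often enough'' that the theorem requires. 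Your proof concedes that this ``must be extracted carefully from the stated hypotheses'' but then treats it as a citation rather than an argument. Either cite a precise non-i.i.d.\ Edgeworth theorem (e.g., the relevant chapter of Petrov's book on sums of independent random variables) and verify its hypotheses against the lemma's, or carry out the characteristic-function estimates directly; without that, the key step in the non-lattice case is unjustified, whereas everything else in your proposal is routine.
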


Roughly speaking, the following lemma ensures the existence of a high 
probability set in which Lemma \ref{lem:REF_Lem20} can be applied 
to the inner probability in \eqref{eq:SU_RCU}.
We make use of the definitions in \eqref{eq:REF_Vs}--\eqref{eq:REF_sigma_s},
and we define the random variables
\begin{equation}
(\Xv,\Yv,\Xvbar,\Xv_{s}) \sim Q^{n}(\xv)W^{n}(\yv|\xv)Q^{n}(\xvbar)\Ptilde_s^{n}(\xv_{s}|\yv), \label{eq:REF_BoldVars}
\end{equation}
where $\Ptilde_s^{n}(\xv|\yv) \triangleq \prod_{i=1}^n \Ptilde_s(x_i|y_i)$.
Furthermore, we write the empirical distribution of $\yv$ as $\hat{P}_{\yv}$,
and we let $P_{\Yv}$ denote the PMF of $\Yv$.
\begin{lem} \label{lem:REF_SetF}
    Let the parameters $s>0$ and $\rhohat\in[0,1]$ be given. If the triplet $(W,q,Q)$ satisfies 
    \eqref{eq:REF_Assumption1}--\eqref{eq:REF_Assumption2}, then the set
    \begin{equation}
        \Fc_{\rhohat,s}^n(\delta) \defeq \Big\{\yv\,:\,P_{\Yv}(\yv)>0, \, \max_{y}\big|\hat{P}_{\yv}(y)-P_{\rhohat,s}^{*}(y)\big| \le \delta\Big\} \label{eq:REF_SetFn}
    \end{equation}
    satisfies the following properties:
    \begin{enumerate}
        \item For any $\yv\in\Fc_{\rhohat,s}^n(\delta)$, we have
        \begin{equation}
            \var\big[i_{s}^{n}(\Xv_{s},\Yv)\,|\,\Yv=\yv\big]\ge n(c_3 - r(\delta)), \label{eq:REF_VarLB}
        \end{equation}
        where $r(\delta)\to0$ as $\delta\to0$.
        \item For any $\delta>0$, we have
        \begin{equation}
            \hspace*{-6mm}\liminf_{n\to\infty}-\frac{1}{n}\log\frac{\sum_{\xv,\yv\notin\Fc_{\rhohat,s}^n(\delta)}Q^{n}(\xv)W^{n}(\yv|\xv)e^{-\rhohat i_{s}^{n}(\xv,\yv)}}{\sum_{\xv,\yv}Q^{n}(\xv)W^{n}(\yv|\xv)e^{-\rhohat i_{s}^{n}(\xv,\yv)}} > 0. \label{eq:REF_LimRatio}
        \end{equation}
    \end{enumerate}
\end{lem}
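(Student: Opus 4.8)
The plan is to handle the two claims of Lemma \ref{lem:REF_SetF} separately; both reduce, after an appropriate change of measure, to standard concentration estimates for i.i.d.\ sequences over finite alphabets.

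For the variance bound in Part~1, I would first note that, by \eqref{eq:REF_BoldVars}, conditioning on $\Yv=\yv$ makes the coordinates of $\Xv_{s}$ independent with $X_{s,i}\sim\Ptilde_s(\cdot\,|\,y_i)$, so that, writing $g(y)\triangleq\var_{\Ptilde_s(\cdot|y)}[i_{s}(X_{s},y)]\ge0$,
\begin{equation*}
    \var\big[i_{s}^{n}(\Xv_{s},\Yv)\,\big|\,\Yv=\yv\big]=\sum_{i=1}^{n}g(y_i)=n\sum_{y}\hat P_{\yv}(y)\,g(y).
\end{equation*}
By the definitions in \eqref{eq:REF_Vs}--\eqref{eq:REF_sigma_s} we have $c_3=\sum_{y}P_{\rhohat,s}^{*}(y)\,g(y)$, so the displayed quantity equals $nc_3+n\sum_{y}\big(\hat P_{\yv}(y)-P_{\rhohat,s}^{*}(y)\big)g(y)$, and for $\yv\in\Fc_{\rhohat,s}^n(\delta)$ the magnitude of the second term is at most $n\delta\sum_{y}g(y)$. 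Since $\Xc$ and $\Yc$ are finite and, by \eqref{eq:REF_Assumption1}, $i_{s}(x,y)$ is a finite real number whenever $\Ptilde_s(x|y)>0$ and $\sum_{x'}Q(x')W(y|x')>0$ --- and only such $y$ contribute to either sum, because $\yv$ has positive probability --- the constant $G\triangleq\sum_{y}g(y)$ is finite. Hence \eqref{eq:REF_VarLB} holds with $r(\delta)\triangleq\delta G\to0$.

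For Part~2, the key step is the change of measure
\begin{equation*}
    Q(x)W(y|x)\,e^{-\rhohat i_{s}(x,y)}=e^{-\Eziid(Q,\rhohat,s)}\,P_{\rhohat,s}^{*}(x,y),
\end{equation*}
valid on the support of $Q\times W$, which follows from \eqref{eq:SU_P*XY} once one recognizes via \eqref{eq:SU_E0s_IID} that the normalizing constant in \eqref{eq:SU_P*XY} is $\EE[e^{-\rhohat i_{s}(X,Y)}]=e^{-\Eziid(Q,\rhohat,s)}$. Multiplying this identity over the $n$ coordinates and summing over $\xv$, both the numerator and the denominator of the ratio in \eqref{eq:REF_LimRatio} acquire a common factor $e^{-n\Eziid(Q,\rhohat,s)}$ (the denominator being exactly this, since $(P_{\rhohat,s}^{*})^{n}$ is a probability distribution), and the ratio collapses to $\PP\big[\Yv^{*}\notin\Fc_{\rhohat,s}^n(\delta)\big]$, where $\Yv^{*}$ has i.i.d.\ components distributed as the $Y$-marginal $P_{\rhohat,s}^{*}(y)$. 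Because $P_{\rhohat,s}^{*}(y)>0$ exactly when $\sum_{x}Q(x)W(y|x)>0$ (shown just above \eqref{eq:REF_LimRatio} using \eqref{eq:REF_Assumption1}), the requirement $P_{\Yv}(\yv)>0$ in \eqref{eq:REF_SetFn} holds almost surely under $\Yv^{*}$, so this probability equals $\PP\big[\max_{y}|\hat P_{\Yv^{*}}(y)-P_{\rhohat,s}^{*}(y)|>\delta\big]$.

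Finally I would invoke the method of types to bound this last quantity by $(n+1)^{|\Yc|}\exp\!\big(-n\,\kappa(\delta)\big)$, where $\kappa(\delta)\triangleq\inf\big\{D(P\|P_{\rhohat,s}^{*})\,:\,P\in\Pc(\Yc),\ \max_{y}|P(y)-P_{\rhohat,s}^{*}(y)|\ge\delta\big\}$; since $D(\cdot\|P_{\rhohat,s}^{*})$ is lower semicontinuous on the compact set over which the infimum is taken and vanishes only at $P=P_{\rhohat,s}^{*}$, which lies outside that set, we have $\kappa(\delta)>0$, and taking $-\tfrac1n\log(\cdot)$ and letting $n\to\infty$ gives \eqref{eq:REF_LimRatio}. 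I do not expect a genuine obstacle here; the only point requiring care is the support bookkeeping in Part~1 (and the analogous well-definedness of $i_{s}$ and $D(P\|P_{\rhohat,s}^{*})$ in Part~2), which is precisely where assumption \eqref{eq:REF_Assumption1} is used, while everything else is routine.
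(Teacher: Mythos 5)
Your proof is correct. Part~1 is essentially the same computation as the paper's, just made slightly more explicit ($r(\delta)=\delta\sum_y g(y)$ rather than an unnamed $o(\delta)$). Part~2, however, takes a genuinely different and cleaner route. The paper expands \emph{both} the numerator and the denominator of \eqref{eq:REF_LimRatio} in types (as in Appendix~\ref{sub:SU_PRIMAL_PROOF}), obtains the two minimizations \eqref{eq:LessConstr} and \eqref{eq:MoreConstr} over joint distributions $P_{XY}$, and then verifies via the KKT conditions that the unconstrained problem \eqref{eq:LessConstr} has the \emph{unique} minimizer $P^*_{\rhohat,s}$, which violates the extra constraint in \eqref{eq:MoreConstr}, so the constrained minimum is strictly larger. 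You instead observe that $Q(x)W(y|x)e^{-\rhohat i_s(x,y)}=e^{-\Eziid(Q,\rhohat,s)}P^*_{\rhohat,s}(x,y)$ exactly, so the common factor $e^{-n\Eziid}$ cancels and the ratio is literally $\PP[\Yv^*\notin\Fc^n_{\rhohat,s}(\delta)]$ under the tilted i.i.d. law, which is then a textbook Sanov bound on the $Y$-marginal alone. This buys you a shorter and more elementary argument: you avoid the KKT/uniqueness verification (replacing it with compactness and lower semicontinuity of $D(\cdot\|P^*_{\rhohat,s})$), and the deviation is reduced to a one-dimensional empirical-distribution event rather than a joint $P_{XY}$ minimization. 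The support bookkeeping you flag (that $P^*_{\rhohat,s}(y)>0\iff\sum_x Q(x)W(y|x)>0$, so the event $\{P_{\Yv}(\yv)>0\}$ is almost sure under $\Yv^*$) is handled correctly and is exactly the point where \eqref{eq:REF_Assumption1} enters, matching the paper's own use of that assumption.
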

It should be noted that since the two statements of Lemma \ref{lem:REF_SetF} hold true
for any $\rhohat\in[0,1]$, they also hold true when $\rhohat$ varies within this range,
thus allowing us to handle rates which vary with $n$.
Before proving the lemmas, we show how they are used to obtain the desired result.

\subsubsection*{Proof of \eqref{eq:SU_PfRCU_1} based on Lemmas \ref{lem:REF_Lem20}--\ref{lem:REF_SetF}}

By upper bounding $M-1$ by $M$ and splitting $\rcu$ (see \eqref{eq:SU_RCU}) 
according to whether or not $\yv\in\Fc_{\rhohat,s}^n(\delta)$, we obtain
\begin{align}
    & \rcu(n,M) \le \sum_{\xv,\yv\in\Fc_{\rhohat,s}^n(\delta)}Q^n(\xv)W^{n}(\yv|\xv) \nonumber \\
        & \qquad\qquad\qquad \times \min\Big\{1,M\PP[i_{s}^{n}(\Xvbar,\yv)\ge i_{s}^{n}(\xv,\yv)]\Big\} \nonumber \\
        & + M^{\rhohat}\sum_{\xv,\yv\notin\Fc_{\rhohat,s}^n(\delta)}Q^{n}(\xv)W^{n}(\yv|\xv)e^{-\rhohat i_{s}^{n}(\xv,\yv)}, \label{eq:REF_ExpandedRCU}
\end{align} 
where we have replaced $q^n$ by $i_s^n$ since each is a monotonically increasing
function of the other, and in the summation over $\yv\notin\Fc_{\rhohat,s}^n(\delta)$ we 
further weakened the bound using Markov's inequality and $\min\{1,\cdot\} \le (\cdot)^{\rhohat}$.
In order to make the inner probability in \eqref{eq:REF_ExpandedRCU}
more amenable to an application of Lemma \ref{lem:REF_Lem20}, we
follow \cite[Sec. 3.4.5]{FiniteThesis} and note that the following holds
whenever $\Ptilde_s^{n}(\xvbar|\yv) > 0$:
\begin{align}
    Q^{n}(\xvbar) & =Q^{n}(\xvbar)\frac{\Ptilde_s^{n}(\xvbar|\yv)}{\Ptilde_s^{n}(\xvbar|\yv)} = \Ptilde_s^{n}(\xvbar|\yv)e^{-i_{s}^{n}(\xvbar,\yv)}.\label{eq:iidDeriv0}
\end{align}
For a fixed sequence $\yv$ and a constant $t$, summing \eqref{eq:iidDeriv0} over
all $\xvbar$ such that $i_{s}^{n}(\xvbar,\yv)\ge t$ yields
\begin{equation}
    \mathbb{P}[i_{s}^{n}(\Xvbar,\yv)\ge t]
    =\mathbb{E}\Big[e^{-i_{s}^{n}(\Xv_{s},\Yv)}\openone\big\{ i_{s}^{n}(\Xv_{s},\Yv)\ge t\big\}\,\Big|\,\Yv=\yv\Big]\label{eq:iidDeriv1}
\end{equation}
under the joint distribution in \eqref{eq:REF_BoldVars}. 

We now observe that \eqref{eq:iidDeriv1} is of the same form as
the left-hand side of \eqref{eq:SU_Lemma20}.  We apply Lemma 
\ref{lem:REF_Lem20} with $Q_k$ given by the PMF of $i_s(X_s,y_k)$
under $X_s \sim \Ptilde_s(\,\cdot\,|y_k)$, where $y_k$ is the $k$-th output symbol  
for which $\sum_{x}Q(x)W(y|x)>0$. The conditions of the lemma are easily seen to be satisfied
for sufficiently small $\delta$ due to the definition of
$\Fc_{\rhohat,s}^n(\delta)$ in \eqref{eq:REF_SetFn}, the
assumption in \eqref{eq:REF_Assumption2}, and \eqref{eq:REF_VarProof4}.
We have from \eqref{eq:SU_Lemma20}, \eqref{eq:REF_VarLB} and \eqref{eq:iidDeriv1} that
\begin{equation}
    \PP\big[i_{s}^{n}(\Xvbar,\yv)\ge t\big] \le \frac{\psi_s}{\sqrt{2\pi n(c_3-r(\delta))}}e^{-t}(1+o(1)) \label{eq:REF_Tail1}
\end{equation}
for all $\yv\in\Fc_{\rhohat,s}^n(\delta)$ and sufficiently small $\delta$.
Here we have used the fact that $\psi_0$ in \eqref{eq:SA_Psi_0}
coincides with $\psi_s$ in \eqref{eq:SA_Psi_s}, which follows 
from \eqref{eq:REF_VarProof4} and the fact that $\Ptilde_s(x|y)>0$ if and
only if $Q(x)W(y|x)>0$ (see \eqref{eq:REF_Assumption1} and \eqref{eq:REF_Vs}).

Using the uniformity of the $o(1)$ term in $t$ in \eqref{eq:REF_Tail1} 
(see Lemma \ref{lem:REF_Lem20}), taking $\delta\to0$ (and 
hence $r(\delta)\to0$), and writing
\begin{equation}
    \min\{1,f_n(1+\zeta_n)\} \le (1 + |\zeta_n|)\min\{1,f_n\} \label{eq:SA_min1property},
\end{equation}  
we see that the first term in \eqref{eq:REF_ExpandedRCU} 
is upper bounded by $\rcuss(n,M)(1+o(1))$.  To complete the proof of \eqref{eq:SU_PfRCU_1}, 
we must show that the second term in \eqref{eq:REF_ExpandedRCU} can be incorporated
into the multiplicative $1+o(1)$ term.  To see this, we note from \eqref{eq:SA_RCU_hat} and \eqref{eq:SU_SaddleRCU}
that the exponent of $\rcuss$ is given by $\Eziid(Q,\rhohat,s)-\rhohat R$.
From \eqref{eq:SU_E0s_IID}, the denominator in the logarithm in 
\eqref{eq:REF_LimRatio} equals $e^{-n\Eziid(Q,\rhohat,s)}$.  Combining
these observations, the second part of Lemma \ref{lem:REF_SetF} shows
that the second term in \eqref{eq:REF_ExpandedRCU} decays at a faster 
exponential rate than $\rcuss$, thus yielding the desired result.
 
\subsubsection*{Proof of Lemma \ref{lem:REF_Lem20}}

The proof makes use of the local limit theorems given in \cite[Thm. 1]{LLTNonLattice}
and \cite[Sec. VII.1, Thm. 2]{PetrovBook} for the non-lattice and lattice 
cases respectively. 
We first consider the summation $S'_n \triangleq \sum_{i=1}^{n'}Z_i$, where we 
assume without loss of generality that the first $n'=\Theta(n)$ indices
correspond to positive variances, and the remaining $n-n'$ correspond to
zero variances.  We similarly assume that $\sigma_{k}>0$ for $k=1,\cdots,K'$,
and $\sigma_{k}=0$ for $k=K'+1,\cdots,K$. We clearly have $\var[S'_n]=\var[S_n]=V_n$.

We first consider the non-lattice case.  We claim that the conditions of the
lemma imply the following local limit theorem given in \cite[Thm. 1]{LLTNonLattice}:
\begin{equation}
    \PP\big[S'_n \in [z,z+\eta)\big] = \frac{\eta}{\sqrt{2\pi V_n}}e^{-\frac{(z - \mu'_n)^2}{2V_n}} + o\Big(\frac{1}{\sqrt{n}}\Big) \label{eq:REF_LLTNonLattice}
\end{equation}
uniformly in $z$, where $\mu'_n\triangleq\EE[S'_n]$, and $\eta>0$ is arbitrary.
To show this, we must verify the technical assumptions given in \cite[p. 593]{LLTNonLattice}.
First, \cite[Cond. ($\alpha$)]{LLTNonLattice} states that there exists $Z_{\mathrm{max}}<\infty$ 
and $c>0$ such that 
\begin{equation}
\frac{1}{\var[Z]}\EE\big[(Z-\EE[Z])^2 \openone\{|Z-\EE[Z]| \le Z_{\mathrm{max}}\}\big]>c
\end{equation}
under $Z \sim Q_k$ and each $k=1,\cdots,K'$.  This is trivially satisfied since we are 
considering finite alphabets, which implies that the support of each $Q_k$ is bounded. 
The Lindeberg condition is stated in
\cite[Cond. ($\gamma$)]{LLTNonLattice}, and is trivially satisfied due to the assumption that
$n_k=\Theta(n)$ for all $k$. The only non-trivial condition is 
\cite[Cond. ($\beta$)]{LLTNonLattice}, which can be written as follows in the
case of finite alphabets:  For any given lattice, there exists $\delta>0$ such that
\begin{align}
\hspace*{-2.5mm}\frac{1}{\log V_n}\sum_{i=1}^{n'}\PP[Z_i\text{ is not }\delta\text{-close to a lattice point}]\to\infty.
\end{align}    
Since we are considering the case that $\Ic_0$ does not lie on a lattice,
we have for sufficiently small $\delta$ that the summation grows linearly in $n$,
whereas $\log V_n$ only grows as $\log n$.  We have thus shown that each of the
technical conditions in \cite{LLTNonLattice} is satisfied, and hence 
\eqref{eq:REF_LLTNonLattice} holds. 

Upper bounding the exponential term in \eqref{eq:REF_LLTNonLattice} by one,
and noting that $S_n-S'_n$ has zero variance, we obtain
\begin{equation}
    \PP\big[S_n \in [z,z+\eta)\big] \le \frac{\eta}{\sqrt{2\pi V_n}} + o\Big(\frac{1}{\sqrt{n}}\Big) \label{eq:REF_LLTNonLattice2}
\end{equation}
uniformly in $z$.  We can now prove the lemma 
similarly to \cite[Lemma 47]{Finite} by writing
\begin{align}
    & \EE\Big[e^{-S_n}\openone\big\{S_n>t\big\}\Big] \nonumber \\ 
        & \qquad \le \sum_{l=0}^{\infty} e^{-t-l\eta}\PP\Big[t+l\eta \le S_n \le t+(l+1)\eta \Big] \\
        & \qquad \le \sum_{l=0}^{\infty} e^{-t-l\eta}\bigg( \frac{\eta}{\sqrt{2\pi V_n}} + o\Big(\frac{1}{\sqrt{n}}\Big) \bigg) \\
        & \qquad = e^{-t}\bigg( \frac{\eta}{(1-e^{-\eta})\sqrt{2\pi V_n}} + o\Big(\frac{1}{\sqrt{n}}\Big) \bigg), \label{eq:SA_Lem20Step3} 
\end{align}
where \eqref{eq:SA_Lem20Step3} follows by evaluating the summation using
the geometric series.  The proof is concluded by taking $\eta\to0$ and
using the identity $\lim_{\eta\to0}\frac{\eta}{1-e^{-\eta}}=1$.
The uniformity of \eqref{eq:SU_Lemma20} in $t$ follows from the 
uniformity of \eqref{eq:REF_LLTNonLattice2} in $z$.

In the lattice case, the argument is essentially unchanged, but we instead use
the local limit theorem given in \cite[Sec. VII.1, Thm. 2]{PetrovBook}, which yields
\begin{equation}
    \PP[S'_n = z] = \frac{h_0}{\sqrt{2\pi V_n}}e^{-\frac{(z - \mu'_n)^2}{2V_n}} + o\Big(\frac{1}{\sqrt{n}}\Big) 
\end{equation} 
uniformly in $z$ on the lattice corresponding to $S'_n$ (with span $h_0$).
The remaining arguments are identical to the non-lattice case, with $\eta=h_0$
instead of $\eta\to0$.

\subsubsection*{Proof of Lemma \ref{lem:REF_SetF}}

We obtain \eqref{eq:REF_VarLB} by using the definitions of 
$c_3$ and  $\Fc_{\rhohat,s}^n(\delta)$ 
(see \eqref{eq:REF_sigma_s} and \eqref{eq:REF_SetFn}) to write
\begin{align}
    & \var[i_{s}^{n}(\Xv_{s},\Yv)\,|\,\Yv=\yv] \nonumber \\ 
        & \qquad =\sum_y n\hat{P}_{\yv}(y) \var[i_{s}^{n}(X_{s},Y)\,|\, Y=y]\\
        & \qquad \ge\sum_{y}n(P^{*}_{\rhohat,s}(y)-\delta)\var[i_{s}^{n}(X_{s},Y)\,|\, Y=y] \\
        & \qquad = n\big(c_3 - o(\delta)\big),
\end{align}
where $(X_s|Y=y) \sim \Ptilde_s(\,\cdot\,|\,y)$.
To prove the second property, we perform an expansion in terms of types in the
same way as Appendix \ref{sub:SU_PRIMAL_PROOF} to conclude that the exponent 
of the denominator in the logarithm in \eqref{eq:REF_LimRatio} is given by 
\begin{equation}
    \min_{P_{XY}}\sum_{x,y}P_{XY}(x,y)\log\Bigg(\frac{P_{XY}(x,y)}{Q(x)W(y|x)}e^{\rhohat i_s(x,y)}\Bigg).\label{eq:LessConstr}
\end{equation}
Similarly, using the definition of $\Fc_{\rhohat,s}^n(\delta)$ in \eqref{eq:REF_SetFn}, 
the exponent of the numerator in the logarithm in \eqref{eq:REF_LimRatio} is given by
\begin{multline}
    \min_{P_{XY} \,:\, \max_{y}|P_Y(y)-P_{\rhohat,s}^{*}(y)|>\delta} \\ \sum_{x,y}P_{XY}(x,y)\log\Bigg(\frac{P_{XY}(x,y)}{Q(x)W(y|x)}e^{\rhohat i_s(x,y)}\Bigg).\label{eq:MoreConstr}
\end{multline}
A straightforward evaluation of the KKT conditions
\cite[Sec. 5.5.3]{Convex} yields that \eqref{eq:LessConstr} is uniquely
minimized by $P^{*}_{\rhohat,s}(x,y)$, defined in \eqref{eq:SU_P*XY}.  On the
other hand, $P^{*}_{\rhohat,s}(x,y)$ does not satisfy the constraint in \eqref{eq:MoreConstr},
and thus \eqref{eq:MoreConstr} is strictly greater than \eqref{eq:LessConstr}.
This concludes the proof of \eqref{eq:REF_LimRatio}.

\bibliographystyle{IEEEtran}
\bibliography{12-Paper,18-MultiUser,18-SingleUser,35-Other}

\end{document}